%% file: sigconf.tex
\documentclass[sigconf,authorversion,nonacm]{acmart} 
\AtBeginDocument{%
  }

\usepackage{nicefrac}       
\usepackage{amsmath}

\newtheorem{definition}{Definition}
\newtheorem{lemma}{Lemma}
\newtheorem{theorem}{Theorem}
\newtheorem{proposition}{Proposition}

\usepackage{amsmath}
\usepackage{amsfonts}
\usepackage{xcolor}
\usepackage{mathtools}
\usepackage{algorithm}
\usepackage{algpseudocode}

\usepackage{adjustbox}
\usepackage{booktabs}
\usepackage{multirow}
\usepackage{tabularx}
\usepackage{xcolor}
\usepackage{colortbl}
\usepackage{amsmath}
\newcommand{\Romannum}[1]{%
  \uppercase\expandafter{\romannumeral #1}%
}

\usepackage{xcolor}

\makeatletter
\@ifundefined{proof}{%
  \providecommand{\qedsymbol}{$\square$}
  \newenvironment{proof}[1][Proof]{%
    \par\noindent\textit{#1.}\ %
  }{%
    \hfill\qedsymbol\par
  }
}{}
\makeatother

\definecolor{adred}{HTML}{F07D70}
\definecolor{adblue}{HTML}{5CADD5}
\definecolor{adgreen}{HTML}{6CC27F}
\definecolor{winnerbg}{RGB}{251,224,213}

\usepackage{soul}
\soulregister\texttt7

\setul{0.45ex}{3pt}

\DeclareRobustCommand{\redul}[1]{%
  \begingroup
  \setul{0.45ex}{3pt}%
  \setulcolor{adred}%
  \ul{#1}%
  \endgroup
}

\DeclareRobustCommand{\blueul}[1]{%
  \begingroup
  \setul{0.45ex}{3pt}%
  \setulcolor{adblue}%
  \ul{#1}%
  \endgroup
}

\DeclareRobustCommand{\greenul}[1]{%
  \begingroup
  \setul{0.45ex}{3pt}%
  \setulcolor{adgreen}%
  \ul{#1}%
  \endgroup
}

\makeatletter
\DeclareRobustCommand{\titlecolorul}[2]{%
  \begingroup
  \setbox\z@=\hbox{#2}%
  \leavevmode
  \rlap{%
    \textcolor{#1}{%
      \rule[\dimexpr-.45ex-3pt\relax]{\wd\z@}{3pt}%
    }%
  }%
  \unhbox\z@
  \endgroup
}
\DeclareRobustCommand{\titlemarkup}[1]{%
  \begingroup
  \def\redul##1{\titlecolorul{adred}{##1}}%
  \def\blueul##1{\titlecolorul{adblue}{##1}}%
  \def\greenul##1{\titlecolorul{adgreen}{##1}}%
  #1%
  \endgroup
}
\makeatother

\makeatletter
\newdimen\SOUL@winneruldp
\newdimen\SOUL@winnerulht

\def\SOUL@winnerulleaders{%
  \leaders\hrule
    \@depth\SOUL@winneruldp
    \@height\SOUL@winnerulht\relax
}
\def\SOUL@winnerdecorate#1{{%
  \setbox\z@\hbox{#1}%
  \SOUL@dimen=\wd\z@
  \SOUL@dimeni=\SOUL@uloverlap
  \advance\SOUL@dimen2\SOUL@dimeni
  \rlap{%
    \null
    \kern-\SOUL@dimeni
    \textcolor{winnerbg}{%
      \rule[-.75ex]{\SOUL@dimen}{2.5ex}%
    }%
  }%
  \rlap{%
    \null
    \kern-\SOUL@dimeni
    \SOUL@winnerulcolor{%
      \SOUL@winnerulleaders\hskip\SOUL@dimen\kern\z@
    }%
  }%
  \unhcopy\z@
}}
\def\SOUL@winnerpreamble{%
  \SOUL@winneruldp=.45ex
  \SOUL@winnerulht=-.45ex
  \advance\SOUL@winneruldp3pt
  \spaceskip\SOUL@spaceskip
}
\def\SOUL@winnereverysyllable{%
  \SOUL@winnerdecorate{%
    \the\SOUL@syllable
    \SOUL@setkern\SOUL@charkern
  }%
}
\def\SOUL@winnereveryspace#1{%
  #1%
  \SOUL@dimen=\spaceskip
  \rlap{%
    \textcolor{winnerbg}{%
      \rule[-.75ex]{\SOUL@dimen}{2.5ex}%
    }%
  }%
  \SOUL@winnerulcolor{%
    \SOUL@winnerulleaders\hskip\spaceskip\kern\z@
  }%
  \null
}
\def\SOUL@winnereveryhyphen{%
  \discretionary{%
    \unkern
    \SOUL@winnerdecorate{%
      \SOUL@setkern\SOUL@hyphkern
      \SOUL@sethyphenchar
    }%
  }{}{}%
}
\def\SOUL@winnereveryexhyphen#1{%
  \SOUL@setkern\SOUL@hyphkern
  \SOUL@winnerdecorate{#1}%
  \discretionary{}{}{%
    \SOUL@setkern\SOUL@charkern
  }%
}
\def\SOUL@winnerulsetup#1{%
  \SOUL@setup
  \def\SOUL@winnerulcolor{\textcolor{#1}}%
  \let\SOUL@preamble\SOUL@winnerpreamble
  \let\SOUL@everysyllable\SOUL@winnereverysyllable
  \let\SOUL@everyspace\SOUL@winnereveryspace
  \let\SOUL@everyhyphen\SOUL@winnereveryhyphen
  \let\SOUL@everyexhyphen\SOUL@winnereveryexhyphen
}
\DeclareRobustCommand*\winnerredul{\SOUL@winnerulsetup{adred}\SOUL@}
\DeclareRobustCommand*\winnerblueul{\SOUL@winnerulsetup{adblue}\SOUL@}
\DeclareRobustCommand*\winnergreenul{\SOUL@winnerulsetup{adgreen}\SOUL@}
\makeatother

\DeclareRobustCommand{\winnerhl}[1]{%
  \begingroup
  \sethlcolor{winnerbg}%
  \hl{#1}%
  \endgroup
}

\soulregister\redul1
\soulregister\blueul1
\soulregister\greenul1

\usepackage{tikz}
\usetikzlibrary{arrows.meta, shapes.geometric, positioning, calc, decorations.pathmorphing, patterns, backgrounds}

\usepackage{tcolorbox}
\usepackage{tabularx}
\usepackage{xcolor}

\usepackage{framed}

\colorlet{shadecolor}{gray!10}
\colorlet{TFFrameColor}{gray!50}

\usepackage[framemethod=TikZ]{mdframed}

\newmdenv[
  topline=false,
  bottomline=false,
  rightline=false,
  leftline=true,
  linecolor=gray!60,
  linewidth=2pt,
  skipabove=6pt,
  skipbelow=2pt,
  innerleftmargin=6pt,
  innerrightmargin=0pt,
  innertopmargin=6pt,
  innerbottommargin=6pt
]{paperquote}

\setcopyright{none}
\acmConference[KDD '27]
  {ACM SIGKDD Conference on Knowledge Discovery and Data Mining}
  {August 1--5, 2027}
  {San Jose, CA, USA}

\begin{document}

\title[Token-Level Advertising]{\titlemarkup{\redul{Token}\blueul{-Level}\greenul{ Advertising}}}\titlenote{Preprint. Work in progress. Underline colors indicate the advertiser source of each token (Qwen3 tokenizer is used throughout the paper). In our mechanism, advertisers can influence the final response at token-level granularity.}

\author{Hanbing Liu}
\orcid{0009-0000-4582-3340}
\affiliation{%
  \institution{Renmin University of China}
  \city{Beijing}
  \country{China}
}
\email{liuhanbing@ruc.edu.cn}

\author{Bowei Zhang}
\affiliation{%
  \institution{Renmin University of China}
  \city{Beijing}
  \country{China}
}
\email{2023201812@ruc.edu.cn}

\author{Changyuan Yu}
\affiliation{%
  \city{Beijing}
  \country{China}
}
\email{changyuanyu83@outlook.com}

\author{Yinyu Ye}
\affiliation{%
  \institution{Stanford University}
  \city{Stanford}
  \state{CA}
  \country{USA}
}
\email{yinyu-ye@stanford.edu}

\author{Qi Qi}
\correspondingauthor
\affiliation{%
  \institution{Renmin University of China}
  \city{Beijing}
  \country{China}
}
\email{qi.qi@ruc.edu.cn}


\begin{abstract}
Generative AI is transforming how people access information, challenging traditional advertising mechanisms built around predefined slots. Towards generation-native advertising, we propose the Latent Advertiser Mixture Auction (LAMA), a token-level advertising mechanism that embeds advertiser influence directly into the generation process. Advertisers report local continuation values that induce advertiser-specific next-token policies, from which the platform decodes through a latent mixture while updating an allocation posterior. We show that LAMA satisfies Markov DSIC and IR, and achieves near-optimal KL-regularized welfare. We further develop a learning-based implementation that reconstructs the required reports online from learned local advantages and root values. Proof-of-concept experiments on real-world commercial-search query splits show that LAMA improves platform welfare and revenue while maintaining user-facing response quality, providing initial evidence for the feasibility of generation-native advertising.
\end{abstract}

\begin{CCSXML}
<ccs2012>
   <concept>
       <concept_id>10003752.10010070.10010099.10010101</concept_id>
       <concept_desc>Theory of computation~Algorithmic mechanism design</concept_desc>
       <concept_significance>500</concept_significance>
       </concept>
   <concept>
       <concept_id>10002951.10003260.10003272</concept_id>
       <concept_desc>Information systems~Online advertising</concept_desc>
       <concept_significance>500</concept_significance>
       </concept>
 </ccs2012>
\end{CCSXML}

\ccsdesc[500]{Theory of computation~Algorithmic mechanism design}
\ccsdesc[500]{Information systems~Online advertising}

\keywords{mechanism design, game theory, large language models, auction, computational advertising}



\maketitle

\section{Introduction}\label{sec:intro}
\input{sections/intro}


\section{Model}\label{sec:model}
\input{sections/model}

\section{Latent Advertiser Mixture Auction}\label{sec:advertisement}
\input{sections/advertisement}

\section{Practical Implementation}\label{sec:implementation}
\input{sections/implementation}

\section{Experiments}\label{sec:exp}
\input{sections/exp}

\section{Related Work}\label{sec:related}

\input{sections/related}

\section{Conclusion}\label{sec:conclusion}
In this work, we introduced generation-native advertising, where advertising opportunities are shaped during generation rather than allocated only after they have been formed. LAMA instantiates this idea at token level, achieving Markov DSIC and IR with bounded welfare loss, together with a practical learning-based implementation. Proof-of-concept experiments show that LAMA can improve monetization while maintaining user-facing response quality. More broadly, our results point toward a new direction for advertising mechanism design in generative interfaces, where mechanisms govern not only who receives an advertising opportunity, but also how that opportunity emerges through generation.

\bibliographystyle{ACM-Reference-Format}
\bibliography{llm_advertising_papers,classic_ad_auction_papers}

\appendix

\section{Report-Model Training}
\label{app:practical-training}
\input{sections/training_algorithm}

\section{Potential Business Format}
\label{app:potential-business-format}
\input{sections/potential_business_format}



\section{Proofs}\label{sec:proofs}
\input{sections/proof}

\section{Additional Experiment Results}
\label{app:vertical-advertiser-overview}
\input{sections/vertical_advertiser_overview}

\end{document}

%% file: sections/intro.tex
\begin{figure}[t!]
    \centering
    \includegraphics[width=\columnwidth]{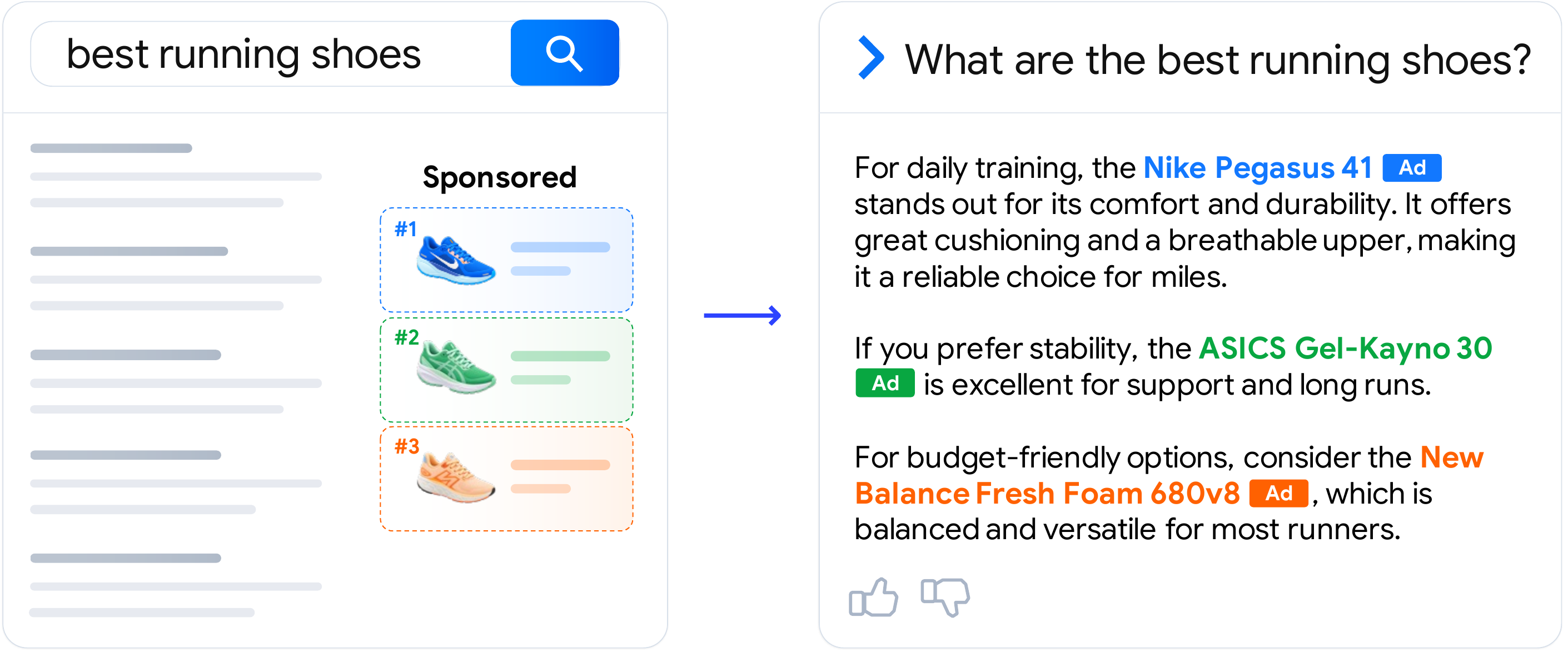}
    \caption{In generation-native advertising, ad opportunities are no longer predefined slots: when they emerge, which advertisers are shown, and how they are presented jointly shape user experience and marketing effectiveness.
}
    \label{fig:generation-native-advertising}
\end{figure}

\begin{paperquote}
\emph{``The medium is the message.''}
\par\smallskip
\hfill{--- Marshall McLuhan, \textit{Understanding Media} (1964)}
\nocite{mcluhan1964understanding}
\end{paperquote}

The medium through which people access online information is undergoing a fundamental transformation. For decades, internet interfaces have largely organized information by retrieving, ranking, and presenting content that already exists. Search engines return ranked results; recommendation systems select items from a collection. Generative AI introduces a different mode of information access. Rather than simply choosing what to present, AI-native interfaces synthesize responses dynamically around user intent, with the content itself taking shape as the interaction unfolds. The medium is no longer merely organizing information; it is generating it.

Advertising has historically been built around the structure of the medium in which it appears. In traditional internet interfaces, this has led to a simple but powerful abstraction: the slot.
Whether it is a sponsored-search position, a reserved display placement, or a promoted position in a recommendation feed, the commercial opportunity can typically be defined in advance and then assigned to advertisers.
But this abstraction becomes less natural when the surrounding content is itself generated. 
As shown in Figure~\ref{fig:generation-native-advertising}, a product may become relevant only as an answer develops in a particular direction, and an early generation choice can make a later brand mention either natural or jarring. In such settings, generation does more than fill an advertising opportunity—it can help create one. The advertising opportunity is no longer merely a slot inside generated content; it can be shaped by the generation process itself.

This shift calls for a corresponding change in mechanism design. Traditional advertising mechanisms largely separate two questions: what opportunities exist, and who should receive them? Auction theory has developed powerful tools for the latter, typically taking the former as given. 
This separation is natural when the advertising slot exists before the auction begins, but begins to break down when the generation trajectory itself determines whether and how an advertiser can be meaningfully incorporated.
Thus, a mechanism that only allocates predefined opportunities leaves an important part of the generative medium outside the allocation process.

Recent work has begun to rethink what constitutes an advertising opportunity in generative interfaces. Rather than relying only on fixed slots, some approaches define opportunities over segments or context-dependent positions within generated text, while others first construct complete candidate responses and then select, aggregate, or insert sponsored content at the response level. These abstractions make advertising substantially more adaptive to generated context: where an ad can appear, or which sponsored response is appropriate, may now depend on the content produced by the model. Yet the mechanism typically operates on an opportunity that has already been identified—a segment, a position, or a completed response. Generation shapes the opportunity, but is not itself the allocation process.

In this paper, we take the next step and formulate a generation-native paradigm for computational advertising, in which advertising is directly embedded into the generation process. Advertisers are treated as strategic participants whose messages can influence how the response evolves, while the platform jointly determines generated content, advertising allocation, and payments. Within this broader paradigm, we study a particularly native instantiation at the token level: advertisers may influence the response at every generation step. A user query initializes a sequential process, and at each generated prefix, the platform combines advertiser messages with the reference language model and the current mechanism state to determine what token to generate next and how the advertising allocation should evolve. The goal is to incorporate advertiser incentives and platform monetization into generation while preserving the naturalness and usefulness of the generated content.

Making advertising a native part of generation introduces several technical challenges. Most importantly, incentive design becomes inherently sequential: a report submitted at one prefix can change the future generation trajectory, which in turn changes the advertiser's subsequent opportunities and eventual payoff. Truthful behavior must therefore remain optimal not only at the beginning of generation, but also after any reachable future states. 
Advertiser influence must also be balanced against the reference language model so that monetization does not come at the expense of response quality. 
At the same time, language model inference already incurs substantial serving latency, requiring the mechanism to introduce only minimal additional overhead. 
Finally, a tractable mechanism should still admit a meaningful efficiency guarantee relative to an ideal welfare-maximizing benchmark.

To address these challenges, we propose the Latent Advertiser Mixture Auction (LAMA) for the fundamental single-winner setting. At each prefix, advertisers' local reports induce advertiser-specific next-token policies, and LAMA generates next token from a latent mixture of these policies. The mixture weights evolve as a Bayesian allocation posterior as tokens are observed, linking the generation trajectory to the eventual advertising allocation. Once generation terminates, this posterior determines the final allocation, together with payments designed for the sequential setting. We show that LAMA satisfies Markov dominant-strategy incentive compatibility and individual rationality.
Moreover, it achieves near-optimal KL-regularized welfare, with the optimality gap vanishing as the regularization diminishes.

We further develop a practical implementation that avoids explicitly reasoning over the full token tree or requiring advertisers to compute complex reports themselves. The platform provides reporting as a learned service: a shared advertiser-conditioned model recovers local signals along the realized generation path and anchors them with an advertiser-specific root estimate, allowing the reports required by LAMA to be reconstructed online. This design turns the theoretical sequential mechanism into a lightweight serving-time procedure that can operate over a small candidate set of advertisers with modest additional language-model computation.

We evaluate LAMA in simulation on real-world query splits with model-predicted impression values. We compare against heuristic mechanisms that allocate advertising either before or after generation, as well as a representative response-level aggregation mechanism. Across the evaluated settings, LAMA achieves the strongest platform welfare and revenue while maintaining high advertiser value and user-side response quality. These results provide an initial proof of concept for the generation-native advertising paradigm: allowing advertisers to participate directly in the token-level generation process can improve monetization while preserving high-quality generated responses.

Our contributions are summarized as follows.

\begin{enumerate}
    \item \textbf{Generation-native advertising.} We develop a token-level formulation of generation-native advertising, where advertiser participation is embedded directly into the sequential generation process and jointly shapes the response and eventual allocation.
    \item \textbf{Token-level mechanism design.} We propose LAMA for the fundamental single-winner setting and establish Markov DSIC and IR, 
    together with near-optimal KL-regularized welfare, with an additive gap of at most \(\beta \log |\mathcal{N}|\).
    \item \textbf{Practical implementation.} We develop a learning-based implementation that reconstructs the reports required by LAMA online and supports efficient serving.
    \item \textbf{Proof-of-concept evaluation.} Experiments on real-world search query splits show that LAMA improves platform welfare and revenue while maintaining advertiser value and user-facing response quality.
\end{enumerate}

%% file: sections/model.tex
We model token-level advertising as a sequential mechanism embedded in the
generation process of a language model. A user query initializes a token-level
decision process. At each non-terminal prefix, advertisers may submit local
reports based on the publicly observed state, and the platform uses these
reports together with its privately observed history to choose the
next token distribution, the advertising allocation, and payments. The central
distinction from classical sponsored-search models is that the advertising
opportunity is not an exogenously given slot. Instead, it is endogenously shaped
by the generated trajectory itself.

\paragraph{Context space and reference generation.}
Let \(\mathcal V\) be the token vocabulary and let \(L\) be the maximum
generation length\footnote{Natually, We assume \(|\mathcal{V}| \geq 3\).}. The context space is
\(
    \mathcal S := \mathcal V^{\le L}
    =
    \bigcup_{t=0}^{L}\mathcal V^t,
\)
where \(\mathcal V^0=\{\emptyset\}\). For two contexts \(x_1,x_2\in\mathcal S\),
we write \([x_1,x_2]\) for their concatenation whenever the resulting sequence
has length at most \(L\). A user query \(q\in\mathcal S\) is drawn from a query
distribution \(\mathcal D\) and serves as the initial state.

The platform has an organic language-model policy
\(
    \pi_{\mathrm{ref}}(\cdot\mid s)\in\Delta_{++}(\mathcal V),
    \, s\in\mathcal S,
\)
which describes the next-token distribution in the absence of advertising
intervention\footnote{Thus, the reference policy has full support over the entire context space.}. We also write \(\pi_0\) for \(\pi_{\mathrm{ref}}\). A trajectory
from query \(q\) is
\(
    \tau=(s_0=q,a_0,s_1,a_1,\ldots,s_{T_\tau}),
    \,
    s_{t+1}=[s_t,a_t].
\)
Let \(\texttt{EOS}\in\mathcal V\) denote the end-of-sequence token. A state is
terminal if it contains \(\texttt{EOS}\) or reaches the maximum length. We
denote the terminal set by \(\mathcal S_T\), and write \(L(s)\subseteq
\mathcal S_T\) for the set of terminal descendants of a non-terminal state
\(s\). For a trajectory \(\tau\), its realized terminal state is denoted by
\(\ell(\tau)=s_{T_\tau}\).

\paragraph{Advertisers and private types.}
There are \(n \geq 2\) advertisers, indexed by
\(
    \mathcal N=[n]=\{1,\ldots,n\}.
\)
Each advertiser \(i\) has a private terminal reward function
\(
    r_i:\mathcal S_T\to\mathbb R_{\ge 0}.
\)
The value \(r_i(\ell)\) represents advertiser \(i\)'s value from receiving
the advertising opportunity when the generated response terminates at
\(\ell\). 
We write \(r=(r_1,\ldots,r_n)\) for the type profile and
\(r_{-i}\) for the profile excluding advertiser \(i\).

The allocation of the advertising opportunity is represented by a vector
\(
    \lambda\in\Lambda\subseteq[0,1]^n,
\)
where \(\Lambda\) is an exogenously given feasible allocation set. For example,
\(\Lambda\) may encode slot-capacity constraints or exposure constraints. This paper focuses on the fundamental single-winner setting:
the advertising opportunity is endogenously shaped by the generation process, but the eventual sponsored outcome is assigned to a single advertiser.
Thus, $\Lambda = \Delta(\mathcal{N})$. The
realized gross value of advertiser \(i\) on trajectory \(\tau\) is therefore
\(
    \lambda_i r_i(\ell(\tau)).
\)

\paragraph{Platform histories.} To describe a mechanism that operates during
generation, we introduce the history privately observed by the platform. A
platform history at time \(t\) is
\[
h_t=(s_0,m_0,a_0,\ldots,s_{t-1},m_{t-1},a_{t-1},s_t),
\]
where \(s_0=q\), \(a_t\in \mathcal{V}\) is the token generated at time \(t\), and
\(m_t=(m_{1,t},\ldots,m_{n,t})\) denotes the profile of messages submitted by
the advertisers at time \(t\). Advertisers observe only the current state
\(s_t\), which is public, whereas the platform observes the entire history. The
state evolves deterministically according to \(s_{t+1}=[s_t,a_t]\). Let
\(\mathcal H\) denote the set of non-terminal platform histories, and let
\(s(h)\) be the current state at the end of history \(h\). A terminal platform
history is denoted by
\(
h_T=(s_0,m_0,a_0,\ldots,s_{T-1},m_{T-1},a_{T-1},s_T),
\)
where \(s_T\in\mathcal S_T\). We write \(\tau(h_T)\) for the token trajectory
induced by the terminal history \(h_T\), and \(\ell(h_T)=s_T\) for its terminal
state.

\paragraph{Sequential mechanisms.}

A sequential mechanism for token-level advertising is a tuple
\[
\Gamma=\bigl(\{\mathcal{M}_i(h)\}_{i\in \mathcal{N},h\in\mathcal H},x,\lambda,p\bigr).
\]
For each non-terminal platform history \(h\), \(\mathcal{M}_i(h)\) is advertiser
\(i\)'s feasible message space after that history. Given \(h\in\mathcal H\)
and a joint message
\(m=(m_1,\ldots,m_n)\in \mathcal{M}(h):=\prod_{i\in\mathcal N}\mathcal{M}_i(h)\), the next-token rule is
\(
x(\cdot\mid h,m)\in\Delta(\mathcal{V}).
\)
The instantaneous payment rule is
\(
p_i(h,m,a)\in\mathbb R,
\)
which specifies advertiser \(i\)'s payment when token \(a\) is generated after
history \(h\) under report profile \(m\). The final allocation rule is
\(
\lambda(h_T)\in\Lambda.
\)
Thus, the feasible message spaces, token-generation rule,
payments, and final allocation may depend on the entire platform history\footnote{When
no ambiguity arises, we sometimes use the current state \(s(h)\) in place of
the platform history \(h\) for brevity and intuition.}. In
particular, history-dependent message spaces allow the platform to check a new
report against values previously recorded in its report ledger.

The timing is as follows.
\begin{enumerate}
    \item The user query \(q\sim\mathcal D\) initializes the process at
    \(s_0=q\).
    \item At each non-terminal state \(s_t\), each advertiser \(i\)
    submits a message \(m_{i,t}\in \mathcal{M}_i(h_t)\). Let
    \(m_t=(m_{1,t},\ldots,m_{n,t})\).
    \item The mechanism chooses a next-token distribution
    \(x(\cdot\mid h_t,m_t)\), draws
    \(a_t\sim x(\cdot\mid h_t,m_t)\), and charges 
    \(p_i(h_t,m_t,a_t)\) for each \(i\in \mathcal{N}\).
    \item The state updates deterministically as \(s_{t+1}=[s_t,a_t]\). If
    \(s_{t+1}\in\mathcal S_T\), the process terminates and the mechanism implements
    the feasible allocation
    \(
    \lambda(h_{T})\in\Lambda .
    \)
\end{enumerate}


Direct mechanisms are a special case as the platform can ask each advertiser to submit its entire reward function. However, such mechanisms are impractical for
online generation because full reporting may be slow and may reveal
private information about trajectories that are never realized. Our definition therefore emphasizes lightweight reports, and we will see that they suffice to achieve
strong performance for the platform.

\paragraph{Strategies and induced policies.}
Let \(\Theta_i\) denote advertiser \(i\)'s type space. A
strategy of advertiser \(i\) maps its type and each non-terminal public state
to a message:
\(
    \sigma_i:\Theta_i\times(\mathcal S\setminus\mathcal S_T)\to
    \bigcup_{h\in\mathcal H}\mathcal M_i(h).
\)
A strategy is feasible if
\(\sigma_i(r_i,s(h))\in\mathcal M_i(h)\) at every history at which it is used.
Given \(\sigma=(\sigma_1,\ldots,\sigma_n)\) and type profile \(r\), the induced
joint report at history \(h\) is
\(
    m^{\sigma,r}(h)
    :=
    \left(
        \sigma_1(r_1,s(h)),\ldots,\sigma_n(r_n,s(h))
    \right).
\)
Together with the mechanism \(\Gamma\), these reports determine the continuation
token policy
\(
    \pi^{\Gamma}_{\sigma,r}(a\mid h)
    :=
    x\left(a\mid h,m^{\sigma,r}(h)\right),
    \, a\in\mathcal V,
\)
and hence a distribution \(\mathbb P^{\Gamma}_{\sigma,r}(\cdot\mid h)\) over
terminal histories extending \(h\).

\paragraph{Utilities.}
Advertisers have standard quasi-linear utilities. 
At a terminal history \(h_T\), 
advertiser \(i\)'s realized utility is
\(
    u_i^\Gamma(h_T;r_i)
    =
    \lambda_i(h_T) r_i(\ell(h_T))
    -
    \sum_{t=0}^{T-1}
    p_i(h_t,m_t,a_t).
\)
For a non-terminal platform history \(h\), a type profile \(r\), a strategy \(\sigma_i\), and opponent strategies
\(\sigma_{-i}\), the continuation expected utility \(U_i^\Gamma
    (h;r_i,\sigma_i,\sigma_{-i};r_{-i})\) is
\[
\begin{aligned}
    \mathbb E_{h_T\sim
    \mathbb P^{\Gamma}_{\sigma,r}(\cdot\mid h)}
    \left[
        \lambda_i(h_T) r_i(\ell(h_T))
        -
        \sum_{t=t(h)}^{T-1}
        p_i(h_t,m_t,a_t)
    \right].
\end{aligned}
\]


\paragraph{Truthful reporting.}
The truthful report is specified by a mechanism-specific type-related reporting rule
\(
    \eta_i:\Theta_i\times(\mathcal S\setminus\mathcal S_T)\to
    \bigcup_{h\in\mathcal H}\mathcal M_i(h).
\)
For example, the sub-tree reporting rule \(\eta_i(r_i,s)=r_i|_{L(s)}\) and the value function reporting rule \(
    \eta_i(s;r_i)
    =
    (
        V^*_{r_i}([s,a])
    )_{a\in\mathcal V}.
\)
It defines the truthful strategy
\(
    \sigma_i^{\mathrm{tr}}(r_i,s)
    :=
    \eta_i(s;r_i).
\)

Because a sequential mechanism allows advertisers to report information
dynamically, incentive compatibility requires that an advertiser cannot wait
for a favorable generated prefix and then manipulate its future reports. Formally,

\begin{definition}[Markov DSIC and IR]
The mechanism
\(\Gamma\) is Markov dominant-strategy incentive compatible
(Markov DSIC) if, for every advertiser \(i\), type profile \(r\), feasible
opponent strategy profile \(\sigma_{-i}\), reachable platform history \(h\) after
advertiser \(i\) has reported truthfully along the prefix, and feasible continuation deviation \(\sigma_i'\),
\(
    U_i^\Gamma
    \left(
        h;r_i,\sigma_i^{\mathrm{tr}},\sigma_{-i};r_{-i}
    \right)
    \ge
    U_i^\Gamma
    \left(
        h;r_i,\sigma_i',\sigma_{-i};r_{-i}
    \right).
\)
It is individually rational (IR) if, 
 truthful participation from the
initial state \(s_0=q\) gives nonnegative expected utility:
\(
    U_i^\Gamma
    \left(
        q;r_i,\sigma_i^{\mathrm{tr}},\sigma_{-i};r_{-i}
    \right)
    \ge 0.
\)
\end{definition}

\paragraph{KL-regularized social welfare.}
The platform's objective is to generate native and useful
content while allocating advertising opportunities efficiently. For a 
query \(q\), type profile \(r\), generation policy \(\pi\),
define the KL-regularized social welfare \(\mathrm{SW}_{\beta}(\pi,\lambda\mid q,r)\) as
\[
\begin{aligned}
    \mathbb E_{\tau\sim\pi(\cdot\mid q)}
    \left[
        \sum_{i \in \mathcal{N}}
        \lambda_i(h_T) r_i(\ell(\tau))
        -
        \beta
        \sum_{t=0}^{T_\tau-1}
        \log
        \frac{\pi(a_t\mid s_t)}
        {\pi_{\mathrm{ref}}(a_t\mid s_t)}
    \right].
\end{aligned}
\]
The first term is gross advertiser value under the generated content and realized allocation. The divergence
term penalizes deviation from the organic policy, which optimizes user experience.
\footnote{This KL-divergence penalty is standard in the reinforcement learning with human feedback (RLHF, \cite{ouyang2022training}) literature.}
Thus
\(\beta>0\) controls the balance between monetization and content naturalness.
When the query is also random, the expected welfare is
\(
    \mathbb E_{q\sim\mathcal D}
    \left[
        \mathrm{SW}_{\beta}(\pi,\lambda\mid q,r)
    \right].
\)

\paragraph{Mechanism design problem.}
For each query-type pair \((q,r)\), a sequential mechanism \(\Gamma\) and its truthful reporting rule induce a token policy \(\pi_{q,r}^\Gamma\)
and allocation rule \(\lambda_{q,r}^\Gamma\) from the initial state.
The platform chooses the mechanism to maximize induced welfare:
\[
\begin{aligned}
    \max_{\Gamma}
    \quad &
    \mathbb E_{q\sim\mathcal D,\,r\sim F(\cdot\mid q)}
    \left[
        \mathrm{SW}_{\beta}
        \left(
            \pi_{q,r}^\Gamma,\lambda_{q,r}^\Gamma
            \mid q,r
        \right)
    \right]
    \\
    \text{s.t.}
    \quad &
    \Gamma \text{ is Markov DSIC and IR},
    \\
    &
    \lambda_{q,r}^\Gamma(\tau)\in\Lambda,
    \,
    \forall q,r,\tau .
\end{aligned}
\]
Here \(F(\cdot\mid q)\) denotes the conditional distribution over the type
profiles of advertisers recalled for query \(q\).
For ex-post analysis, the same objective can be studied pointwise for each
realized \((q,r)\).

%% file: sections/advertisement.tex


We propose Latent Advertiser Mixture Auction (LAMA) to tackle this challenge.
The key idea is to view the generation process as Bayesian sequential
decision making that identifies the advertiser who values the generation
opportunity most. As we will see, this perspective leads to a simple algorithm
that is both fast and efficient.

\paragraph{Sequential reports.}
At a non-terminal state \(s\), advertiser \(i\) reports a child-value vector
\(
    \widehat V_i^s
    =
    (
        \widehat V_i^s([s,a])
    )_{a\in\mathcal V}.
\)
The intended truthful report is
\(
    \eta_i(s;r_i)
    =
    \left(
        V_i^*([s,a])
    \right)_{a\in\mathcal V},
\)
where \(V_i^*\) is the optimal soft value function induced by advertiser
\(i\)'s terminal reward \(r_i\):
\[
    V_i^*(s)
    =
    \max_{\pi}
    \mathbb E_{\tau\sim\pi(\cdot\mid s)}
    \left[
        r_i(\ell(\tau))
        -
        \beta
        \sum_{t=0}^{T_\tau-1}
        \log
        \frac{\pi(a_t\mid s_t)}
        {\pi_{\mathrm{ref}}(a_t\mid s_t)}
    \right].
\]
Given a reported child-value vector, define advertiser \(i\)'s induced local
policy by
\[
    \widehat\pi_i(a\mid s;\widehat V_i^s)
    :=
    \frac{
        \pi_{\mathrm{ref}}(a\mid s)
        \exp\left(
            \widehat V_i^s([s,a])/\beta
        \right)
    }{
        \sum_{a'\in\mathcal V}
        \pi_{\mathrm{ref}}(a'\mid s)
        \exp\left(
            \widehat V_i^s([s,a'])/\beta
        \right)
    }.
\]
This is the next-token policy that would be optimal if advertiser \(i\)'s
reported truthfully.

\paragraph{Allocation posterior and next-token rule.}
The mechanism maintains an allocation posterior
\(
    \rho(s)\in\Delta(\mathcal N).
\)
At the query \(q\), after receiving the first child-value reports, initialize
\(
    Z_i(q)
    :=
    \sum_{a\in\mathcal V}
    \pi_{\mathrm{ref}}(a\mid q)
    \exp(
        \widehat V_i^q([q,a])/\beta
    ),
\)
and set
\(
    \rho_i(q)
    :=
    \frac{Z_i(q)}
    {\sum_{j\in\mathcal N}Z_j(q)}.
\)

At a reached non-terminal state \(s\), after receiving reports
\(\widehat V^s=(\widehat V_1^s,\ldots,\widehat V_n^s)\), the mechanism samples hierarchically: first sample a latent advertiser
\(
    I\sim\rho(s)
\)
and then generate the next token from that advertiser's induced policy,
\(
    a\sim\widehat\pi_I(\cdot\mid s;\widehat V_I^s)
\).
Marginally, this sampling rule induces the mixture next-token
distribution
\[
    x(a\mid s,\widehat V^s,\rho(s))
    :=
    \sum_{i\in\mathcal N}
    \rho_i(s)
    \widehat\pi_i(a\mid s;\widehat V_i^s).
\]
The subsequent posterior update is precisely Bayesian inference about this
latent advertiser after observing the realized token.
After token \(a\) is realized, the posterior is updated by Bayes' rule:
\[
    \rho_i([s,a])
    =
    \frac{
        \rho_i(s)
        \widehat\pi_i(a\mid s;\widehat V_i^s)
    }{
        x(a\mid s,\widehat V^s,\rho(s))
    }.
\]
If a terminal history \(h_T\) is reached, the final allocation rule is
\(
    \lambda(h_T)=\rho(\ell(h_T))\in\Delta(\mathcal N).
\)

\paragraph{Payment rule.}
The payment rule has two components. First, after the initial reports at the
query prefix \(q\), advertiser \(i\) pays an entry fee \(p_i(q;\widehat V)\):
\[
    \rho_i(q)\widehat V_i(q)
    -
    \beta
    \log
    \sum_{j \in \mathcal{N}}
        \exp(
            \widehat V_j(q)/\beta
        )
    +
    \beta
    \log
    (
        1+\sum_{j\neq i}
        \exp(
            \widehat V_j(q)/\beta
        )
    ),
\]
where \(\widehat V_i(q)=\beta\log Z_i(q)\) is the root value induced by the
initial child-value report. Second, along the generated path, the mechanism
charges advertisers according to the change in their posterior-weighted reported
continuation value. At a transition \(s\to [s,a]\), define the
instantaneous payment
\[
    p_i(s,\widehat V^s,a)
    :=
    \rho_i([s,a])\widehat V_i^s([s,a])
    -
    \rho_i(s)\widehat V_i(s),
\]
where the parent value \(\widehat V_i(s)\) is the value previously recorded in
the platform ledger. Thus, the mechanism charges advertiser \(i\) when the
realized token makes its continuation prospects more favorable, and subsidizes
it when situation gets worse, preserving its
incentive to continue participating after an unfavorable transition.



\paragraph{Feasible message space.}
We require reports to satisfy the soft Bellman recursion.
This is easy to verify online: for each advertiser, the platform
only needs to store a scalar, i,e. the value previously reported for the current
state, and compare it with the soft Bellman aggregation of the newly reported
child values. Formally,

\begin{definition}[Bellman consistency]
\label{def:bellman-consistent-report}
At a history \(h\) with current state \(s=s(h)\), let \(\widehat V_i(s)\) be advertiser
\(i\)'s value for \(s\) already stored in the ledger. The feasible message
space is
\[
    \mathcal M_i(h)
    :=
    \begin{cases}
        \mathbb R^{|\mathcal V|},&s=q,\\
        \{v\in\mathbb R^{|\mathcal V|}:\beta\log\sum_{a\in\mathcal V}
    \pi_{\mathrm{ref}}(a\mid s)e^{v_a/\beta}=\widehat V_i(s)\},&s\ne q.
    \end{cases}
\]
Here \(v_a\) is the reported value of the child state \([s,a]\) in \(\widehat{V}^s_i\).
\end{definition}

We now ready to show the main incentive guarantee.

\begin{theorem}[Markov DSIC and IR of LAMA]
\label{thm:latent-advertiser-mixture-markov-dsic}
The LAMA with Bellman-consistent
message spaces is Markov DSIC and IR for the truthful reporting rule
\(
    \eta_i(s;r_i)
    =
    \left(
        V_i^*([s,a])
    \right)_{a\in\mathcal V}.
\)
\end{theorem}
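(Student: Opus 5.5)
The plan is to reduce the advertiser's continuation utility to a closed form by telescoping the payments. Then we use a change-of-measure identity for the latent mixture, so that the only report-dependent term becomes a KL-regularized value evaluated under advertiser \(i\)'s own induced policy. The soft-optimality of \(V_i^*\) then bounds that term.

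\textbf{Step 1 (telescoping).} Fix a reachable history \(h\) with state \(s\), reached after \(i\) reported truthfully along the prefix. The ledger then stores \(\widehat V_i(s)=V_i^*(s)\). Along any continuation to a terminal \(\ell\), consecutive instantaneous payments \(\rho_i([s',a])\widehat V_i^{s'}([s',a])-\rho_i(s')\widehat V_i(s')\) telescope, because the child value reported at \(s'\) becomes the ledger value at \([s',a]\). The sum is \(\rho_i(\ell)\widehat V_i(\ell)-\rho_i(s)\widehat V_i(s)\), where \(\widehat V_i(\ell)\) is the value reported for the terminal child. Hence the continuation utility equals
\[
\rho_i(s)V_i^*(s)+\mathbb E\bigl[\rho_i(\ell)\bigl(r_i(\ell)-\widehat V_i(\ell)\bigr)\bigr],
\]
and the first term is fixed by the ledger.

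\textbf{Step 2 (change of measure).} By the hierarchical sampling, \(\rho_i(\ell)\,\mathbb P(\ell\mid h)\) is the joint probability that the latent advertiser is \(i\) and the path ends at \(\ell\). Unrolling Bayes' rule gives
\[
\rho_i(\ell)\,\mathbb P(\ell\mid h)=\rho_i(s)\prod_t\widehat\pi_i(a_t\mid s_t),
\]
and this holds whatever the opponents report. So the expectation equals \(\rho_i(s)\,\mathbb E_{\tau\sim\widehat\pi_i}[r_i(\ell)-\widehat V_i(\ell)]\). Bellman consistency of \(i\)'s reports makes the log-normalizers telescope:
\[
\sum_t\beta\log\frac{\widehat\pi_i(a_t\mid s_t)}{\pi_{\mathrm{ref}}(a_t\mid s_t)}=\widehat V_i(\ell)-V_i^*(s).
\]
The bracket therefore becomes \(\mathbb E_{\widehat\pi_i}\bigl[r_i(\ell)-\beta\sum_t\log(\widehat\pi_i/\pi_{\mathrm{ref}})\bigr]-V_i^*(s)\). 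This is at most \(V_i^*(s)-V_i^*(s)=0\) by the definition of \(V_i^*\). Equality holds under truthful reports, since then \(\widehat\pi_i\) is exactly the soft-optimal policy. This establishes Markov DSIC at every non-root history.

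\textbf{Step 3 (root and IR).} I expect this to be the main obstacle, because at \(q\) the message space is unconstrained and the entry fee also depends on \(\widehat V_i(q)\). Repeating Steps 1 and 2 from \(q\) with a deviating report \(\widehat V_i(q)=x\), the utility is at most
\[
g(x):=\rho_i(q)x-\mathrm{fee}_i+\rho_i(q)\bigl(V_i^*(q)-x\bigr)=\beta\log\sum_j e^{\widehat V_j(q)/\beta}-C_{-i}+\rho_i(q)\bigl(V_i^*(q)-x\bigr),
\]
where \(C_{-i}:=\beta\log\bigl(1+\sum_{j\ne i}e^{\widehat V_j(q)/\beta}\bigr)\) does not depend on \(x\). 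Here \(\rho_i(q)=\partial_x\,\beta\log\sum_j e^{\widehat V_j/\beta}\). Convexity of \(f(x)=\beta\log\bigl(e^{x/\beta}+\sum_{j\ne i}e^{\widehat V_j/\beta}\bigr)\) gives \(f(V_i^*)\ge f(x)+f'(x)(V_i^*-x)\). So truthful reporting, which attains \(f(V_i^*)-C_{-i}\) with equality, dominates any deviation. For IR, the truthful root utility is \(f(V_i^*(q))-C_{-i}\). This is nonnegative because \(V_i^*(q)\ge\mathbb E_{\pi_{\mathrm{ref}}}[r_i]\ge0\), so \(e^{V_i^*/\beta}\ge1\). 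I would double-check the terminal-report bookkeeping: the value reported for a terminal child should be \(r_i(\ell)\) under truth, which is consistent with \(V_i^*(\ell)=r_i(\ell)\).
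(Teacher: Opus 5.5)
Your proof is correct, and it takes a genuinely different route from the paper's.

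The paper first passes to a direct mechanism over globally Bellman-consistent ledgers:
\begin{itemize}
\item it proves the desirability representation $\rho_i=\widehat z_i/\overline z$ and derives the joint allocation mass $\alpha_{i,\ell}$;
\item it identifies $\alpha_{i,\ell}$ as the gradient of the convex potential $\Phi_q=\beta\log\overline z(q)$, and obtains direct DSIC and IR from the supporting-hyperplane inequality;
\item it then uses a subtree-patching lemma to embed any online continuation deviation into a globally consistent direct report that agrees with $V_i^*$ off the subtree rooted at $s$, so the continuation-utility gap is the direct utility gap divided by $\Pr(h)$.
\end{itemize}

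You instead stay on the realized path. The unrolled Bayes identity $\rho_i(\ell)\,\mathbb P(\ell\mid h)=\rho_i(s)\prod_t\widehat\pi_i(a_t\mid s_t)$ holds regardless of opponents' reports. Combining it with telescoping of the payments and of the Bellman-consistent log-normalizers, you get that at a non-root history the continuation utility is exactly $\rho_i(s)$ times the KL-regularized value of your own induced policy $\widehat\pi_i$ from $s$. Markov DSIC there then reduces to the definition of $V_i^*(s)$, with no potential, leaf representation, or patching. At the root, the only extra freedom is the level $x=\widehat V_i(q)$. You handle it with a one-dimensional convexity inequality, which is the restriction of the paper's $\Phi_q$ argument to the root coordinate. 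IR follows from $V_i^*(q)\ge 0$, equivalent to the paper's $z_i^*(q)\ge 1$.

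What each route buys:
\begin{itemize}
\item \emph{Your route} gives a sharper and more transparent statement. Advertiser $i$'s continuation utility after $h$ does not depend on opponents' future reports at all. The loss from a deviation is $\rho_i(s)$ times the regularized suboptimality of $\widehat\pi_i$, so strictness is immediate whenever $\widehat\pi_i\neq\pi_i^*$ in the subtree.
\item \emph{The paper's route} builds global objects ($\alpha_{i,\ell}$, $\Phi_q$, the desirability representation) that it reuses in the efficiency theorem and the expected-WBB proposition. It also exhibits the cyclic-monotonicity structure of the direct mechanism.
\end{itemize}
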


\begin{proof}[Proof sketch]
First consider globally Bellman-consistent direct reports and write
\(z_i(s)=\exp(\widehat V_i(s)/\beta)\) and
\(\overline z(s)=\sum_j z_j(s)\). Bellman consistency makes each \(z_i\)
harmonic under \(\pi_{\mathrm{ref}}\). Substituting this identity into the
Bayesian update yields
\(
    \rho_i(s)=z_i(s)/\overline z(s)
\)
and
\(
    x(a\mid s)=\pi_{\mathrm{ref}}(a\mid s)
    \overline z([s,a])/\overline z(s)
\).
The factors telescope along a path, so the joint probability that terminal
state \(\ell\) is generated and allocated to advertiser \(i\) is
\[
    \alpha_{i,\ell}(\widehat V)
    =
    \frac{P_0(\ell\mid q)
    \exp(\widehat V_i(\ell)/\beta)}
    {\sum_{\ell'}P_0(\ell'\mid q)
    \sum_j\exp(\widehat V_j(\ell')/\beta)}.
\]
This is exactly the gradient, with respect to terminal reports, of the convex
log-sum-exp potential \(\Phi_q\) defined in the appendix. The telescoping LAMA
payment is the corresponding potential-based payment, so the convex supporting
hyperplane inequality gives direct truthfulness. Nonnegative rewards imply
\(z_i^*(q)\ge 1\), which makes truthful utility nonnegative relative to the
outside option.

It remains to connect online reports to this direct construction. Fix a reached
state \(s\) after a truthful prefix. Any feasible continuation deviation defines
a Bellman-consistent value function inside the subtree rooted at \(s\). When
\(s\ne q\), feasibility forces its root value to equal the truthful value
already stored at \(s\); when \(s=q\), there are no ancestors to preserve. Use
the deviating values inside the subtree and \(V_i^*\) everywhere else. The
result is a globally Bellman-consistent direct report with the same continuation
policy, allocation, and payments as the online deviation. Direct DSIC and IR
then immediately imply Markov DSIC and IR for the sequential mechanism.
\end{proof}


\paragraph{Efficiency.}
LAMA also provides strong efficiency guarantee. Intuitively, it implements two
soft lotteries: one over generated responses through KL-regularized decoding
and one over advertisers through entropy-regularized allocation. The resulting
welfare loss is at most \(\beta\log|\mathcal N|\) relative to the optimal
KL-regularized benchmark, and the mechanism approaches first-best welfare as
\(\beta\to0^+\).

\begin{theorem}[Optimization objective and asymptotic efficiency]
\label{thm:latent-advertiser-mixture-efficiency}
Fix a query \(q\) and a type profile \(r\). Let
\((x^\beta,\lambda^\beta)\) be the truthful policy and terminal allocation
induced by LAMA. Then \((x^\beta,\lambda^\beta)\)
maximizes the entropy-relaxed welfare:
\[
\begin{aligned}
    \mathbb E_{\tau\sim x(\cdot\mid q)}
    \Bigg[
        \sum_{i\in\mathcal N}
        \lambda_i(\tau)r_i(\ell(\tau))
        +
        \beta \mathcal H(\lambda(\tau))
        -
        \beta
        \sum_{t=0}^{T_\tau-1}
        \log
        \frac{x(a_t\mid s_t)}
        {\pi_{\mathrm{ref}}(a_t\mid s_t)}
    \Bigg].
\end{aligned}
\]
Moreover, relative to the optimal KL-regularized welfare in the simplex
allocation setting,
\(
    \sup_{\pi,\lambda}
    \mathrm{SW}_\beta(\pi,\lambda\mid q,r),
\)
the welfare loss of the Latent Advertiser Mixture mechanism is bounded by
\[
    0
    \le
    \sup_{\pi,\lambda}
    \mathrm{SW}_\beta(\pi,\lambda\mid q,r)
    -
    \mathrm{SW}_\beta(x^\beta,\lambda^\beta\mid q,r)
    \le
    \beta\log|\mathcal N|.
\]
Thus,
\(
    \lim_{\beta\to 0^+}
    \mathrm{SW}_\beta(x^\beta,\lambda^\beta\mid q,r)
    =
    \max_{\ell\in L(q)}
    \max_{i\in\mathcal N}
    r_i(\ell).
\)
The same welfare-gap bound holds after taking
expectation over \(q\sim\mathcal D\).
\end{theorem}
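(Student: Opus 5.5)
We work in the terminal-distribution representation from the proof sketch of Theorem~\ref{thm:latent-advertiser-mixture-markov-dsic}. Under truthful reports the policy is \(x(a\mid s)=\pi_{\mathrm{ref}}(a\mid s)\,\overline z([s,a])/\overline z(s)\), where \(z_i=\exp(V_i^*/\beta)\) and \(\overline z=\sum_j z_j\). The final allocation is \(\lambda_i(\ell)=z_i(\ell)/\overline z(\ell)=e^{r_i(\ell)/\beta}/\sum_j e^{r_j(\ell)/\beta}\). We use two standard facts.

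\emph{Fact 1 (KL chain rule).} For any policy \(\pi\) with terminal law \(P_\pi(\cdot\mid q)\), the expected path sum \(\mathbb E\bigl[\sum_t\log(\pi/\pi_{\mathrm{ref}})\bigr]\) equals \(\mathrm{KL}(P_\pi\Vert P_0)\). So every objective in the statement is a functional of the pair \((P,\lambda)\), with \(P\in\Delta(L(q))\) and \(\lambda:L(q)\to\Delta(\mathcal N)\).

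\emph{Fact 2 (Gibbs variational principle).} For fixed \(\ell\),
\[
\max_{\lambda\in\Delta(\mathcal N)}\sum_i\lambda_i r_i(\ell)+\beta\mathcal H(\lambda)=\beta\log\sum_j e^{r_j(\ell)/\beta}=:G(\ell),
\]
attained uniquely at the softmax allocation.

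\textbf{Step 1 (entropy-relaxed optimality).} Maximize over \(\lambda\) pointwise in \(\ell\) using Fact 2. The problem then reduces to \(\max_P \mathbb E_P[G(\ell)]-\beta\,\mathrm{KL}(P\Vert P_0)\). By the Donsker--Varadhan/Gibbs principle the maximizer is \(P^*(\ell)\propto P_0(\ell)e^{G(\ell)/\beta}=P_0(\ell)\,\overline z(\ell)\).

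It remains to check that LAMA's truthful terminal law is exactly \(P^*\). Telescoping \(x\) along the path gives \(P_0(\ell\mid q)\,\overline z(\ell)/\overline z(q)\). Bellman harmonicity of each \(z_i\) yields \(\overline z(q)=\sum_\ell P_0(\ell)\overline z(\ell)\), so the normalization matches. Together with the softmax allocation, this proves the first claim. The identity \(\rho_i=z_i/\overline z\) is already established in the sketch of Theorem~\ref{thm:latent-advertiser-mixture-markov-dsic}, and we simply reuse it.

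\textbf{Step 2 (welfare gap).} Write \(W_H\) for the entropy-relaxed objective and \(\mathrm{SW}_\beta\) for the unrelaxed one. For any \((P,\lambda)\),
\[
\mathrm{SW}_\beta\le W_H\le \mathrm{SW}_\beta+\beta\log|\mathcal N|,
\]
because \(0\le\mathcal H\le\log|\mathcal N|\). Let \((\pi,\lambda)\) be arbitrary. Then
\[
\mathrm{SW}_\beta(\pi,\lambda)\le W_H(\pi,\lambda)\le W_H(x^\beta,\lambda^\beta)\le \mathrm{SW}_\beta(x^\beta,\lambda^\beta)+\beta\log|\mathcal N|.
\]
Taking the supremum gives the upper bound, and the lower bound \(0\) is trivial. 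The supremum over \(\lambda\) in \(\mathrm{SW}_\beta\) is attained by argmax allocations, and the optimal policy is then the Gibbs tilt by \(\max_i r_i\). Both bounds are pointwise in \(q\), so they survive taking expectation over \(\mathcal D\).

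\textbf{Step 3 (limit).} Consider \(\sup\mathrm{SW}_\beta=\beta\log\mathbb E_{P_0}\bigl[e^{\max_i r_i(\ell)/\beta}\bigr]\). The set \(L(q)\) is finite and \(P_0\) has full support, because \(\pi_{\mathrm{ref}}\in\Delta_{++}\). This is a Laplace-type limit: the expression is squeezed between \(M+\beta\log P_0(\ell^*)\) and \(M\), where \(M=\max_\ell\max_i r_i(\ell)\). Hence it tends to \(M\) as \(\beta\to0^+\). Combining with the \(\beta\log|\mathcal N|\) gap gives the stated limit.

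\textbf{Main obstacle.} The delicate step is the identification in Step 1: LAMA's sequential mixture decoding must reproduce precisely the joint Gibbs maximizer. This requires that truthful child reports equal the soft Bellman values \(V_i^*\), so that each \(z_i\) is \(\pi_{\mathrm{ref}}\)-harmonic with terminal values \(e^{r_i/\beta}\). One subtlety is that \(r_i\ge0\) is not needed here. The other is that the path KL must be handled correctly at variable-length terminations (EOS versus max length). Fact 1 covers this, since \(\mathcal S\) is a finite tree.
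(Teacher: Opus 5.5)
Your proof is correct and follows the paper's structure. The softmax allocation is the pointwise Gibbs maximizer, the outer problem reduces to soft control with terminal reward $G_\beta(\ell)=\beta\log\sum_i e^{r_i(\ell)/\beta}$, the bound $0\le\mathcal H\le\log|\mathcal N|$ yields the gap, and the Laplace squeeze yields the limit. The only difference is in the outer step: you solve it at the leaf level (KL chain rule, Gibbs variational principle on $\Delta(L(q))$, then matching LAMA's telescoped terminal law), whereas the paper stays at the token level, showing that $W_\beta=\beta\log\sum_i e^{V_i^*/\beta}$ satisfies the soft Bellman recursion with terminal value $G_\beta$ and invoking Lemma~\ref{lem:optimal-soft-bellman}; the two arguments are equivalent.
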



%% file: sections/implementation.tex
\begin{figure*}[t]
    \centering
    \includegraphics[width=\textwidth]{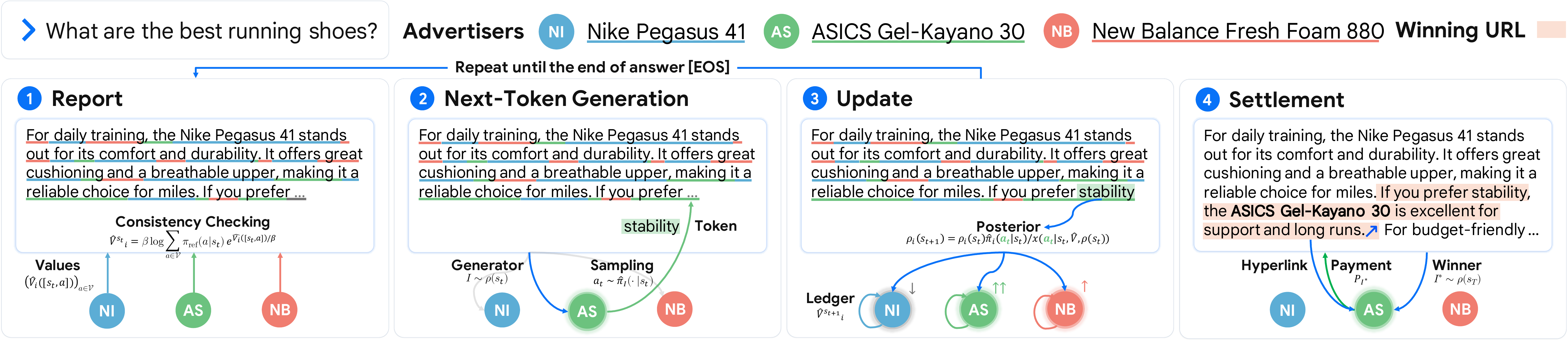}
    \caption{Overview of the Latent Advertiser Mixture Auction (LAMA). (1) \textbf{Report:} Advertisers report soft values for the current child states, and the platform verifies Bellman consistency against the value stored in its ledger. (2) \textbf{Next-token generation:} The platform samples an advertiser from its current belief, whose optimal policy generates the next token. (3) \textbf{Update:} After the context changes, the platform updates the ledger and its belief over the optimal advertiser via Bayes' rule. (4) \textbf{Settlement:} Once the response is complete, the platform samples a winning advertiser from the posterior; the winner pays the corresponding charge and receives the hyperlink placement. See Figure~\ref{fig:lama-presentation-formats} for additional impression formats.}
    \label{fig:lama-overview}
\end{figure*}

In practice, computing the reports required by LAMA is generally beyond the capabilities of
individual advertisers. At each non-terminal state \(s\), advertiser \(i\)
would need to submit the continuation vector
\(
    \left(
        V_i^*([s,a])
    \right)_{a\in\mathcal V}.
\)
Doing so requires both high-quality user feedback data and considerable computational
resources.
We therefore provide reporting as a platform-side service analogous to
autobidding \cite{aggarwal2019autobidding,he2021unified}: the platform learns value reports from advertising feedback and
serves them on advertisers' behalf.

As we will show, the platform can avoid learning the full soft value function through
reinforcement learning by decomposing reporting into two simpler stages. It
learns local soft advantages, which determine changes in value between adjacent
prefixes, and a root value, which anchors the absolute scale. These two outputs
are then assembled into the required report along the realized path. Both stages
can use standard supervised signals and be implemented
by advertiser-conditioned services shared across advertisers.

\subsection{Value Decomposition}
\label{subsec:value-decomposition}
For a prefix
\(s\) and next token \(a\), the optimal soft advantage for advertiser \(i\) satisfies
\(
    A_i^*(s,a)
    =
    V_i^*([s,a])-V_i^*(s)
\)
\footnote{Conventionally, the optimal soft advantage function measures the
incremental value of taking action \(a\) at state \(s\) and then following an
optimal soft policy, relative to the optimal soft value at \(s\). Formally,
\(
    A_i^*(s,a)
    :=
    r_i(s,a)
    +
    \max_{\pi}
    \left\{
        \mathbb E_{\tau\sim\pi(\cdot\mid s')}
        \left[
            \sum_{(u,b)\in\tau} r_i(u,b)
        \right]
        -
        \beta
        \operatorname{\mathbb{D}_{KL}}\!\left(
            \pi(\cdot\mid s')
            \,\middle\|\,
            \pi_{\mathrm{ref}}(\cdot\mid s')
        \right)
    \right\}
    -V_i^*(s)
\)
. In our deterministic token context setting with only terminal rewards, the
maximized continuation objective is \(V_i^*([s,a])\), yielding the displayed
identity.}
.
For a completed response \(y=(a_0,\ldots,a_{T_y-1})\), let
\(
    s_t=[q,a_{<t}],
    \,
    \ell_y=[q,y].
\)
Then the value of every prefix can be obtained from the root value by
telescoping:
\(
    V_i^*(s_t)
    =
    V_i^*(q)
    +
    \sum_{k=0}^{t-1}
    A_i^*(s_k,a_k).
\)
Thus, to construct the truthful child-value report at any reachable state, it is
enough to know the root value \(V_i^*(q)\) and the local advantages
\(A_i^*(s,a)\).

We now present the learning objectives for these two functions.
Define \(r_i(\ell_y)\) as advertiser
\(i\)'s population value for receiving the advertising impression opportunity from response \(y\). 
Let \(\mathcal D\) denote the population comparison distribution over
\((i,q,y,y')\), and use \(\mathcal D(\cdot\mid i,q)\) for its single-response
marginal when only \(y\) is needed.

\paragraph{I. Local advantages.}
For any policy \(\pi_i\), define the cumulative
response score as 
\(
    G_i^\pi(q,y)
    :=
    \beta\log
    \frac{\pi_i(y\mid q)}
         {\pi_{\mathrm{ref}}(y\mid q)}
    =
    \beta\sum_{t=0}^{T_y-1}
    \log
    \frac{\pi_i(a_t\mid s_t)}
         {\pi_{\mathrm{ref}}(a_t\mid s_t)}.
\)
To
learn these token-level log-ratios, the platform samples paired responses
\(y\) and \(y'\) under the same advertiser-query pair \((i,q)\). From their
scalar rewards, it constructs the Bradley-Terry \cite{bradley1952rank} win rate
\[
    \omega_i(q;y,y')
    :=
    \Pr(y\succ y'\mid i,q)
    =
    \sigma\!\left(
        r_i(\ell_y)-r_i(\ell_{y'})
    \right).
\]
Let
\(
    \Delta_i^\pi(q;y,y')
    :=
    G_i^\pi(q,y)-G_i^\pi(q,y').
\)
The ideal local-advantage objective is therefore the binary cross-entropy,
\[
\begin{aligned}
    \mathcal L(\pi)
    :=
    -\mathbb E_{(i,q,y,y')\sim\mathcal D}
    \Big[
        &
        \omega_i(q;y,y')
        \log
        \sigma\!\left(\Delta_i^\pi(q;y,y')\right) \\
        &+
        \bigl(1-\omega_i(q;y,y')\bigr)
        \log
        \sigma\!\left(-\Delta_i^\pi(q;y,y')\right)
    \Big].
\end{aligned}
\]
As we will show later, under the usual regularity conditions, the population
minimizer \(\pi^\dagger\) matches the Bradley-Terry log-odds and therefore
identifies reward differences between completed responses under the same
\((i,q)\). In the soft-control view, the corresponding local log-ratio \(\widehat A_i^\dagger(s,a)
    :=
    \beta\log
    \pi_i^\dagger(a\mid s)/
    \pi_{\mathrm{ref}}(a\mid s)\) recovers
the local soft advantage \(A_i^*(s,a)\).

\paragraph{II. Root values.} The root value is anchored separately from scalar rewards. Since a terminal
state has value \(V_i^*(\ell_y)=r_i(\ell_y)\), telescoping gives, for every
completed response,
\[
    V_i^*(q)
    =
    r_i(\ell_y)
    -
    \sum_{t=0}^{T_y-1}
    A_i^*(s_t,a_t).
\]
Averaging this residual over population responses for the same \((i,q)\) recovers
the root value. The two ideal learning tasks are therefore: learn local
advantages from BT-weighted pairwise differences, and thereby anchor the root value with terminal rewards.

The next theorem shows that the population local-advantage optimum identifies
both the local soft advantages and a residual formula for the root value. The
ledger recursion is then enabled to reconstruct the optimal soft value on every reached
prefix, formally justifying the value decomposition above.

\begin{theorem}[Exact recovery]
\label{thm:value-report-consistency}

Suppose $\mathcal D$ has full comparison support\footnote{for every advertiser-query pair
\((i,q)\) in its marginal support and every distinct pair of feasible completed responses $y,y'$, \(\Pr_{\mathcal{D}}(y,y'|i,q) > 0\).}. 
Then the optimal token policy
$\pi^*$ is the unique minimizer of the
local-advantage objective $\mathcal L$.
Relatedly,
the recovered log-ratio
\(
    \widehat A_i^\dagger(s,a)
    :=
    \beta\log
    \pi_i^\dagger(a\mid s)/\pi_{\mathrm{ref}}(a\mid s)
\)
equals
\(
    A_i^*(s,a).
\)
Also, 
the root value
\(
    v_i^\dagger(q)
    :=
    \mathbb E_{y\sim\mathcal D(\cdot\mid i,q)}
    \left[
        r_i(\ell_y)
        -
        \sum_{t=0}^{T_y-1}
        \widehat A_i^\dagger(s_t,a_t)
    \right]
\)
equals
\(
    V_i^*(q).
\)

Finally, initialize
\(
    \widehat V_i(q):=v_i^\dagger(q)
\)
and recursively define
\(
    \widehat V_i([s,a])
    :=
    \widehat V_i(s)+\widehat A_i^\dagger(s,a).
\)
Then
\(
    \widehat V_i(s)=V_i^*(s)
\)
at every reachable prefix. Consequently, the resulting child-value
vector is exactly the truthful soft-value report required by LAMA.
\end{theorem}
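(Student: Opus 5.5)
The argument rests on the closed form of the soft-optimal policy. For fixed \((i,q)\), the optimal KL-regularized policy satisfies \(\pi_i^*(a\mid s)=\pi_{\mathrm{ref}}(a\mid s)\exp\bigl((V_i^*([s,a])-V_i^*(s))/\beta\bigr)\), with \(V_i^*(\ell)=r_i(\ell)\) on terminals and \(V_i^*(s)=\beta\log\sum_a\pi_{\mathrm{ref}}(a\mid s)e^{V_i^*([s,a])/\beta}\). We first verify this by backward induction over the finite tree \(\mathcal V^{\le L}\), using the Gibbs variational identity at each node.

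From the closed form, \(\beta\log\pi_i^*(a\mid s)/\pi_{\mathrm{ref}}(a\mid s)=A_i^*(s,a)\). Summing along a completed response and telescoping gives
\[
G_i^{\pi^*}(q,y)=r_i(\ell_y)-V_i^*(q).
\]
Hence \(\Delta_i^{\pi^*}(q;y,y')=r_i(\ell_y)-r_i(\ell_{y'})\), i.e.\ \(\sigma(\Delta_i^{\pi^*})=\omega_i\), which is the Bradley--Terry log-odds (with the paper's scaling convention).

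To show \(\pi^*\) minimizes \(\mathcal L\), we argue pointwise. For each \((i,q,y,y')\), the map \(\Delta\mapsto -\omega\log\sigma(\Delta)-(1-\omega)\log\sigma(-\Delta)\) is strictly convex and uniquely minimized at \(\sigma(\Delta)=\omega\). So \(\mathcal L(\pi)\ge\mathcal L(\pi^*)\) for every \(\pi\), and equality forces \(\Delta_i^\pi(q;y,y')=r_i(\ell_y)-r_i(\ell_{y'})\) for all pairs in the support. Full comparison support then gives, for every feasible completed response \(y\), \(G_i^\pi(q,y)=r_i(\ell_y)-c_i(q)\) for some constant \(c_i(q)\).

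Uniqueness is the step we expect to be the main obstacle. It requires showing that this constraint pins down \(\pi_i\) at every token. Since \(\pi_i(\cdot\mid q)\) is a probability distribution over completed responses, normalization gives
\[
1=\sum_y\pi_{\mathrm{ref}}(y\mid q)\,e^{(r_i(\ell_y)-c_i(q))/\beta},
\]
so \(c_i(q)=V_i^*(q)\). Therefore \(\pi_i(y\mid q)=\pi_i^*(y\mid q)\) for every terminal \(y\). Token-level conditionals are ratios of subtree sums of the sequence distribution, \(\pi_i(a\mid s)=\pi_i(L([s,a]))/\pi_i(L(s))\), so the two policies agree at every reachable prefix; full support of \(\pi_{\mathrm{ref}}\) makes every prefix reachable with positive mass. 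Care is needed to restrict to the support of \(\mathcal D\)'s marginal over \((i,q)\) and to note that \(\mathcal L\) is defined on policies with full support, so that the logarithms are finite. This yields \(\pi^\dagger=\pi^*\) and hence \(\widehat A_i^\dagger=A_i^*\).

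The remaining claims are direct. For the root value, each summand satisfies
\[
r_i(\ell_y)-\sum_t\widehat A_i^\dagger(s_t,a_t)=r_i(\ell_y)-\bigl(r_i(\ell_y)-V_i^*(q)\bigr)=V_i^*(q),
\]
so its expectation is \(V_i^*(q)\). For the ledger recursion, induction on prefix length from \(\widehat V_i(q)=V_i^*(q)\) with \(\widehat V_i([s,a])=\widehat V_i(s)+A_i^*(s,a)=V_i^*([s,a])\) gives \(\widehat V_i(s)=V_i^*(s)\) on every reachable prefix. The child-value vector \((\widehat V_i([s,a]))_a\) therefore equals \(\eta_i(s;r_i)\). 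It automatically satisfies Bellman consistency (Definition~\ref{def:bellman-consistent-report}) because \(\sum_a\pi_i^*(a\mid s)=1\), so it is a feasible truthful report to which Theorem~\ref{thm:latent-advertiser-mixture-markov-dsic} applies.
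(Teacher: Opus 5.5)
Your proof is correct. It departs from the paper's only at the uniqueness step.

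Both arguments reach the same intermediate conclusion: every minimizer has $G_i^\dagger(q,y)-r_i(\ell_y)$ independent of $y$. The paper gets there through the entropy-plus-KL decomposition of the cross-entropy, and you through strict convexity in $\Delta$.

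The paper then fixes the constant \emph{locally}. It builds a candidate ledger $\widetilde V_i$ by accumulating $\widehat A_i^\dagger$ along paths from the unknown root constant. It checks that $\widetilde V_i$ equals $r_i$ on terminals, and that token-level normalization $\sum_a\pi_i^\dagger(a\mid s)=1$ makes it satisfy the soft Bellman recursion. Uniqueness of the finite-horizon Bellman solution then gives $\widetilde V_i=V_i^*$. This yields the root constant and $\widehat A_i^\dagger=A_i^*$ together, and $\pi^\dagger=\pi^*$ follows.

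You instead normalize \emph{globally}. Summing $\pi_i(y\mid q)=\pi_{\mathrm{ref}}(y\mid q)e^{(r_i(\ell_y)-c_i(q))/\beta}$ over terminals fixes $c_i(q)$. This implicitly uses the leaf representation $e^{V_i^*(q)/\beta}=\sum_{\ell\in L(q)}P_0(\ell\mid q)e^{r_i(\ell)/\beta}$; alternatively, $\pi_i^*$ satisfies the same identity and normalization determines the constant uniquely. The two sequence-level distributions then agree, and the chain rule recovers each token conditional. Finally, $\widehat A_i^\dagger=A_i^*$ follows from the closed form.

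Your route is the familiar sequence-level identification argument used in direct preference optimization. It makes explicit that the loss pins down the entire exponentially tilted response distribution. The paper's route stays at the token level and never forms subtree masses. Its auxiliary $\widetilde V_i$ is exactly the serving-time ledger, so the same computation also shows that the ledger recursion is Bellman-consistent and correctly anchored.
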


\paragraph{Shared report model.}
Motivated by this recovery result, and to reduce the cost of training and
serving separate models, the platform instantiates the two learned components
with a single shared pretrained backbone. All advertisers share the parameters, while information specific to each advertiser \(i\) enters only through a
conditioning prompt \(\kappa_i\), which contains its identity and
campaign information, such as the creative, landing page, and targeting
context. For any generation prefix \(s\), the same model is run on the
concatenated context \([\kappa_i,s]\) and produces next-token logits
\(z_\theta(a\mid[\kappa_i,s])\). The policy for advertiser \(i\) is the softmax
of these logits:
\[
\begin{aligned}
    \pi_{\theta,i}(a\mid s)
    :=
    \operatorname{softmax}_a z_\theta(a\mid[\kappa_i,s])
    =
    \frac{
        \exp z_\theta(a\mid[\kappa_i,s])
    }{
        \sum_{b\in\mathcal V}
        \exp z_\theta(b\mid[\kappa_i,s])
    } .
\end{aligned}
\]
And the root anchor is produced by a fine-tuned scalar value head attached to the same shared
representation:
\[
    v_{\phi,i}(q)
    :=
    v_\phi([\kappa_i,q]).
\]
The deployed object is therefore a single shared model that produces
advertiser-specific outputs: next-token logits for local reports and a scalar
root value for initializing the ledger. Appendix~\ref{app:practical-training}
gives the practical finite-sample training procedure in
Algorithm~\ref{alg:report-model-training}.

\subsection{Serving-time Procedure}

Figure~\ref{fig:lama-overview} provides an intuitive overview of LAMA. At serving time, the learned objects above are exposed as a single report
service. For each query \(q\), the platform first retrieves a small candidate
set of advertisers through the standard advertising funnel. For each retrieved
advertiser \(i\), it builds the advertiser prefix \(\kappa_i\), evaluates the
root value, and initializes the value ledger by
\(
    \widehat V_i(s_0)
    :=
    v_{\phi,i}(q).
\)
The same root values initialize the allocation posterior. At each non-terminal
prefix \(s_t\), the service returns advertiser-conditioned logits. The platform
converts them into local advantages
\(
    \widehat A_{\theta,i}(s_t,a)
    =
    \beta
    \log(
        \pi_{\theta,i}(a\mid s_t)
        /
        \pi_{\mathrm{ref}}(a\mid s_t)
    )
\)
and therefore into child-value reports
\(
    \widehat V_i^{s_t}([s_t,a])
    =
    \widehat V_i(s_t)+\widehat A_{\theta,i}(s_t,a)
\).
\footnote{This report is Bellman-consistent, regardless of whether the learned model is already optimal. To see this, since
$
    \sum_{a\in\mathcal V}
    \pi_{\mathrm{ref}}(a\mid s_t)
    \exp(
        \widehat A_{\theta,i}(s_t,a)/\beta
    )
    =
    \sum_{a\in\mathcal V}
    \pi_{\theta,i}(a\mid s_t)
    =
    1,
$
we therefore have
$
    \beta
    \log
    \sum_{a\in\mathcal V}
    \pi_{\mathrm{ref}}(a\mid s_t)
    \exp(
        \widehat V_i^{s_t}([s_t,a])/\beta
    )
     =
    \beta
    \log
    [
        \exp(
            \widehat V_i(s_t)/\beta
        )
        \sum_{a\in\mathcal V}
        \pi_{\mathrm{ref}}(a\mid s_t)
        \exp(
            \widehat A_{\theta,i}(s_t,a)/\beta
        )
    ]  =
    \widehat V_i(s_t).
$
}
The remaining steps follow the theoretical mechanism directly. Since
the candidate advertiser set is typically small, LAMA incurs only a small
constant-factor overhead in language model forward passes relative to generating a reference
answer, keeping online serving lightweight.
Algorithm~\ref{alg:lama}
formally summarizes the serving procedure.

\begin{algorithm}[t]
\caption{Latent Advertiser Mixture Auction (LAMA)}
\label{alg:lama}
\begin{algorithmic}[1]
\Require Query $q$, Candidate advertisers $\mathcal{N} = \{1, \dots, n\}$, Reference policy $\pi_{\text{ref}}$, Shared report model $(\pi_\theta, v_\phi)$, Temperature $\beta$
\State \textbf{Initialize:} Context $s_0 \leftarrow q$, Step $t \leftarrow 0$, Allocation posterior $\rho_i(s_0) \leftarrow \frac{\exp(v_{\phi, i}(q) / \beta)}{\sum_{j \in \mathcal{N}} \exp(v_{\phi, j}(q) / \beta)}$ for all $i \in \mathcal{N}$
\State \textbf{Initialize Ledger:} $\hat{V}_i(s_0) \leftarrow v_{\phi, i}(q)$ for all $i \in \mathcal{N}$

\While{$s_t$ is not terminal}
    \For{each advertiser $i \in \mathcal{N}$}
        \State \textcolor{blue}{\textsc{// Compute local advantage and consistent child-value reports}}
        \State $\hat{A}_{\theta, i}(s_t, a) \leftarrow \beta \log \frac{\pi_{\theta, i}(a \mid s_t)}{\pi_{\text{ref}}(a \mid s_t)} \quad \forall a \in \mathcal{V}$
        \State $\hat{V}_i^{s_t}([s_t, a]) \leftarrow \hat{V}_i(s_t) + \hat{A}_{\theta, i}(s_t, a) \quad \forall a \in \mathcal{V}$
    \EndFor

    \State \textcolor{blue}{\textsc{// Hierarchical Sampling Process}}
    \State Sample a latent advertiser $I \sim \rho(s_t)$ \Comment{Choose guiding advertiser} 
    \State Sample next token $a_t \sim \hat{\pi}_I(\cdot \mid s_t)$ \Comment{Generate token from that advertiser's policy}
    
    \State \textcolor{blue}{\textsc{// Update state and Bayesian allocation posterior}}
    \State $s_{t+1} \leftarrow [s_t, a_t]$
    \For{each advertiser $i \in \mathcal{N}$}
        \State $\rho_i(s_{t+1}) \leftarrow \frac{\rho_i(s_t) \cdot \pi_{\theta, i}(a_t \mid s_t)}{x(a_t \mid s_t, \hat{V}^{s_t}, \rho(s_t))}$ \Comment {Posterior update}
        \State $\hat{V}_i(s_{t+1}) \leftarrow \hat{V}_i(s_t) + \hat{A}_{\theta, i}(s_t, a_t)$ \Comment{Update value ledger}
    \EndFor
    \State $t \leftarrow t + 1$
\EndWhile

\State \textbf{Terminal State Reached:} Let $s_T$ be the terminal state, Total tokens generated $T$
\State \textcolor{blue}{\textsc{// Winner-Pay Settlement: Sample a single winner from the final posterior}}
\State Sample a winner $I^* \sim \rho(s_T)$

\For{each advertiser $i \in \mathcal{N}$}
    \If{$i = I^*$}
        \State \textcolor{blue}{\textsc{// Telescoped trajectory-level payment}}
        \State $P_i \leftarrow \rho_i(s_T)\hat{V}_i(s_T) - \beta \log \sum_{j \in \mathcal{N}} \exp(\hat{V}_j(s_0)/\beta) + \beta \log \left( 1 + \sum_{j \neq i} \exp(\widehat{V}_j(s_0)/\beta) \right)$
        \State $P_i \leftarrow {\rho_i(s_T)}^{-1} P_i$ \Comment{Adjust by winning probability}
    \Else
        \State $P_i \leftarrow 0$ \Comment{Non-winners pay nothing}
    \EndIf
\EndFor

\State \Return Generated sequence $s_T$, Winner $I^*$, Payments $(P_1, \dots, P_n)$
\end{algorithmic}
\end{algorithm}


Notably, the fractional form of LAMA may
charge an advertiser that does not ultimately receive the advertising
opportunity. We therefore use the equivalent randomized settlement in
Algorithm~\ref{alg:lama}: after the final posterior is formed, the platform
samples the winner \(I^*\sim\rho(s_T)\), settles only with that winner, and
charges zero to all non-winners. Because this preserves expected
utilities, the incentive-compatibility guarantees carry over directly. It also
yields outcome IR while preserving expected weak budget balance, even though
individual payment transfers may have either sign.


\begin{proposition}[Outcome IR and Expected WBB]
\label{prop:lama-winner-payment-ir-wbb}
If the served reports are exact, then the
winner-settled LAMA implementation in Algorithm~\ref{alg:lama} is outcome
individually rational: for every realized terminal state and sampled winner,
non-winners pay zero and the winning advertiser \(i\) obtains non-negative realized utility. 
Moreover, conditional on the query and candidate advertiser set, the platform is
weakly budget balanced in expectation:
\(
    \mathbb E
    [
        \sum_{i\in\mathcal N}
        P_i(s_T,I^*)
    ]
    \ge 0.
\)
\end{proposition}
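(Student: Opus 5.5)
The plan is to reduce both claims to the fractional mechanism, using the exact-report identities from the proof of Theorem~\ref{thm:latent-advertiser-mixture-markov-dsic}. Write $z_i(s)=\exp(\widehat V_i(s)/\beta)$ and $Z=\sum_j z_j(q)$. With exact reports, $\widehat V_i=V_i^*$ on every reached prefix, and $\widehat V_i(\ell)=r_i(\ell)$ at terminal states.

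\emph{Step 1 (telescoping).} I first check that the winner charge in Algorithm~\ref{alg:lama} is the fractional total payment divided by $\rho_i(s_T)$. Summing the instantaneous payments $\rho_i([s,a])\widehat V_i^s([s,a])-\rho_i(s)\widehat V_i(s)$ along the path telescopes to $\rho_i(s_T)\widehat V_i(s_T)-\rho_i(q)\widehat V_i(q)$. The first term of the entry fee cancels $-\rho_i(q)\widehat V_i(q)$. The fractional total is therefore
\[
\rho_i(s_T)r_i(s_T)-\beta\log Z+\beta\log\Bigl(1+\sum_{j\neq i}z_j(q)\Bigr),
\]
which matches the numerator of $P_i$.

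\emph{Step 2 (outcome IR).} Non-winners pay $0$ by construction. If $i$ wins at $s_T$, its realized utility is
\[
r_i(s_T)-P_i=\frac{\beta}{\rho_i(s_T)}\Bigl[\log Z-\log\Bigl(1+\sum_{j\ne i}z_j(q)\Bigr)\Bigr].
\]
This is nonnegative iff $z_i(q)\ge1$, i.e.\ $V_i^*(q)\ge0$. That holds because $V_i^*(q)$ is at least the value of $\pi_{\mathrm{ref}}$ itself, which is $\mathbb E_{\pi_{\mathrm{ref}}}[r_i]\ge0$, since the KL term vanishes and rewards are nonnegative. We have $\rho_i(s_T)>0$ because every $\widehat\pi_i$ has full support. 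Note that the utility does not depend on $s_T$ at all.

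\emph{Step 3 (expected WBB).} Since $I^*\sim\rho(s_T)$, we have $\mathbb E[P_i\mathbf 1\{I^*=i\}\mid s_T]$ equal to the fractional total from Step 1. It then suffices to show that each advertiser's expected fractional payment is nonnegative, and I would prove this via the potential from the proof sketch. Let
\[
\Phi(r)=\beta\log\sum_{\ell}P_0(\ell\mid q)\sum_j e^{r_j(\ell)/\beta}.
\]
By harmonicity of $z_j$ under $\pi_{\mathrm{ref}}$, $\Phi(r)=\beta\log Z$. Its partial gradient in $r_i(\ell)$ is $\alpha_{i,\ell}$, the joint probability of ending at $\ell$ with allocation to $i$. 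Setting $r_i\equiv0$ gives $z_i(q)=1$, so $\Phi(r_{-i},0)=\beta\log(1+\sum_{j\neq i}z_j(q))$.

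The expected payment of $i$ is then
\[
\sum_\ell\alpha_{i,\ell}r_i(\ell)-\bigl[\Phi(r)-\Phi(r_{-i},0)\bigr].
\]
Convexity of the log-sum-exp $\Phi$ gives the supporting-hyperplane inequality $\Phi(r_{-i},0)\ge\Phi(r)-\langle\nabla_{r_i}\Phi(r),r_i\rangle$, which is exactly this expression being $\ge0$. Summing over $i$ yields $\mathbb E[\sum_iP_i]\ge0$.

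The main obstacle is Step 3. A direct pointwise comparison of $\mathbb E[\rho_i r_i]$ with $\beta\log(Z/(1+\sum_{j\ne i}z_j))$ through elementary inequalities points the wrong way. The argument goes through only after rewriting the payment as a Bregman-type gap of $\Phi$, which depends on the identities $\rho_i=z_i/\overline z$ and the terminal joint law $\alpha_{i,\ell}$ from Theorem~\ref{thm:latent-advertiser-mixture-markov-dsic}'s proof.
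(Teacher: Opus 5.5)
Your proof is correct. Steps 1 and 2 coincide with the paper's argument. The paper gets $z_i(q)\ge 1$ from the leaf representation $z_i(q)=\sum_{\ell}P_0(\ell\mid q)e^{r_i(\ell)/\beta}$ rather than by comparing with $\pi_{\mathrm{ref}}$, but that is the same fact.

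Step 3 takes a genuinely different route. The paper argues with scalars:
\begin{itemize}
\item it writes $\mathbb E[\rho_i(s_T)r_i(s_T)]=\frac{z_i(q)}{Z}\,\mathbb E_{\nu_i}[r_i]$, where $\nu_i(\ell)\propto P_0(\ell\mid q)e^{r_i(\ell)/\beta}$ is the tilted terminal law;
\item it lower-bounds $\mathbb E_{\nu_i}[r_i]\ge\beta\log z_i(q)$ using nonnegativity of $\mathbb{D}_{\mathrm{KL}}(\nu_i\Vert P_0(\cdot\mid q))$;
\item it closes $\frac{z_i(q)}{Z}\log z_i(q)-\log Z+\log(1+Z-z_i(q))\ge 0$ with a weighted AM--GM inequality.
\end{itemize}
You instead recognize the expected fractional payment as exactly the Bregman divergence $\Phi(r_{-i},0)-\Phi(r)-\langle\nabla_{r_i}\Phi(r),\,0-r_i\rangle$ of the potential. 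Nonnegativity is then immediate from Lemma~\ref{lem:logsumexp-convex} and Lemma~\ref{lem:gradient-representation}.

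Your route is shorter and reuses the DSIC machinery. It also explains why WBB holds: the outside-option offset is the potential evaluated at the null report $r_i\equiv 0$ (using $\sum_\ell P_0(\ell\mid q)=1$), so expected revenue is a Bregman gap. The argument would carry over to any convex potential-based payment with that offset. The paper's route is self-contained and gives an explicit lower bound that depends only on the root desirabilities.

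Two small corrections. First, the winner's realized utility in Step 2 does depend on $s_T$, through the factor $1/\rho_i(s_T)$. Only the bracketed term, and hence the sign, is independent of $s_T$. Second, your closing remark overstates the obstacle. A direct elementary comparison does go through once you change measure to $\nu_i$ (Gibbs inequality plus weighted AM--GM), which is exactly the paper's argument.
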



%% file: sections/exp.tex
In this section, we empirically evaluate the proposed LAMA in generation-native advertising settings. The goal is to test whether token-level participation of advertisers can improve monetization while preserving the quality and naturalness of generated responses. 

\subsection{Setup}
We evaluate mechanisms across representative advertising verticals. Each vertical consists of a set of user queries, candidate advertisers and advertiser descriptions. We use a reference language model (Qwen3-14B \cite{yang2025qwen3}), advertiser-shared report models (trained LoRA heads \cite{hu2021lora}), and a prediction model for oracle impression value \cite{lv12_esci_msmarco_minilm_l12_v2,qwen3embedding}. The main experiments cover three verticals constructed from real-world commercial search query splits in the Webis Generated Native Ads 2024 dataset \cite{schmidt2024detecting}: \textbf{Workout}, \textbf{Vacation}, and \textbf{Car}. Each vertical contains three heterogeneous advertisers.

\paragraph{Baselines.} We first compare against six heuristic incentive compatible baselines obtained by combining two auction rules with three generation rules.
The auction rules are: 
\textbf{(i)} \textsc{Before}. The winner is chosen before generation uniformly at random, and the advertiser pays nothing.
\textbf{(ii)} \textsc{After}. The candidate responses are generated first. Then advertisers are scored using predicted impression value, and a second-price auction is applied.
The generation rules are:
\textbf{(i)} \textsc{Original}. Generate only with the reference model.
\textbf{(ii)} \textsc{Edit}. Use a no-ad reference response as the base and insert sponsored content of an advertiser through a fused editing process.
\textbf{(iii)} \textsc{Policy}. Each advertiser submits its preferred response, i.e., the content it expects to maximize its impression value. We approximate this response using best-of-K sampling from the advertiser's optimal policy, with candidates ranked by self-predicted impression value.

We also compare against MOSAIC \cite{DBLP:journals/corr/abs-2405-05905}, a representative response-level aggregation mechanism. It generates $M$ complete context-aware candidate responses, scores each candidate by aggregate advertiser value, and selects a final response via a weighted lottery.

\paragraph{Metrics.} 
Our evaluation considers three perspectives: the platform, the advertiser, and the user. On the platform side, we report \textbf{(i)} \textsc{Welfare}, defined as the realized advertising value minus the KL regularization term that captures the user-experience cost, and \textbf{(ii)} \textsc{Revenue}, defined as the payment collected from the winning advertiser. On the advertiser side, we measure \textbf{(i)} \textsc{Value}, namely the impression value obtained by the winning advertiser. On the user side, we evaluate \textbf{(i)} \textsc{Quality}. Since the KL penalty alone does not fully reflect the actual user experience of the generated text, we adopt the evaluation rubrics from the GEM-Bench \cite{hu2026gembenchbenchmarkadinjectedresponse} to assess user utility. Specifically, \textsc{Quality} aggregates multiple dimensions, including answer relevance, informativeness, fluency and coherence, ad naturalness, ad factuality, and intrusiveness.

\subsection{Main Results}

\definecolor{darkgreen}{RGB}{90,121,62}
        
        
        
        
        
\begin{table}[t!]
    \centering
    \caption{
        Main results on three representative verticals from Webis. Smaller gray values denote 95\% bootstrap errors.
    }
    \label{tab:main_exp}

    \newcommand{\result}[2]{%
        \begin{tabular}[c]{@{}c@{}}
            #1 \\[-2pt]
            {\scriptsize\textcolor{gray}{$\pm\,#2$}}
        \end{tabular}%
    }
    \newcommand{\bestresult}[2]{%
        \begin{tabular}[c]{@{}c@{}}
            \textbf{#1} \\[-2pt]
            {\scriptsize\textcolor{gray}{$\pm\,#2$}}
        \end{tabular}%
    }

    \begin{adjustbox}{max width=\textwidth}
    \begin{tabular}{l c c c c}
        \toprule
        \multirow{2}{*}{\textbf{Method}}
        & \multicolumn{2}{c}{\textsc{Platform}}
        & \textsc{User}
        & \textsc{Advertiser} \\

        \cmidrule(lr){2-3}
        \cmidrule(lr){4-4}
        \cmidrule(lr){5-5}

        & Wel.\,\textcolor{gray}{$\uparrow$}
        & Rev.\,\textcolor{gray}{$\uparrow$}
        & Qua.\,\textcolor{gray}{$\uparrow$}
        & Val.\,\textcolor{gray}{$\uparrow$} \\

        \midrule
        \multicolumn{5}{c}{\textbf{Allocate before generation}} \\
        \midrule

        Reference
        & \result{0.2945}{0.0117}
        & \result{0.0000}{0.0000}
        & \result{52.9429}{0.2753}
        & \result{0.2945}{0.0116} \\

        Edit
        & \result{0.0297}{0.0138}
        & \result{0.0000}{0.0000}
        & \result{65.7685}{0.2323}
        & \result{0.2598}{0.0133} \\

        Policy
        & \result{0.4562}{0.0070}
        & \result{0.0000}{0.0000}
        & \result{65.1635}{0.2230}
        & \result{0.8225}{0.0045} \\

        \midrule
        \multicolumn{5}{c}{\textbf{Allocate after generation}} \\
        \midrule

        Reference
        & \result{0.3558}{0.0134}
        & \result{0.2875}{0.0112}
        & \result{53.8997}{0.2642}
        & \result{0.3558}{0.0132} \\

        Edit
        & \result{0.1772}{0.0135}
        & \result{0.2490}{0.0117}
        & \result{66.4785}{0.2423}
        & \result{0.4046}{0.0132} \\

        Policy
        & \result{0.5080}{0.0041}
        & \result{0.7501}{0.0053}
        & \result{65.5939}{0.2197}
        & \result{0.8253}{0.0044} \\

        MOSAIC
        & \result{0.4390}{0.0118}
        & \result{0.5274}{0.0121}
        & \result{60.4100}{0.3236}
        & \result{0.5550}{0.0118} \\

        \midrule
        \rowcolor{green!15}
        \textbf{LAMA (Ours)}
        & \bestresult{0.5205}{0.0053}
        & \bestresult{0.8305}{0.0049}
        & \bestresult{66.5239}{0.2278}
        & \bestresult{0.8568}{0.0043} \\

        \bottomrule
    \end{tabular}
    \end{adjustbox}
\end{table}

Table~\ref{tab:main_exp} summarizes the main results. LAMA achieves the strongest overall performance among the compared methods, attaining the highest mean platform welfare ($0.5205$), revenue ($0.8305$), advertiser value ($0.8568$), and user quality ($66.5239$). Relative to the allocate-after policy baseline, LAMA delivers a clear improvement in revenue, together with further gains in welfare and advertiser value. At the same time, it preserves user-facing response quality, matching the best-performing baseline on this dimension. 
Appendix~\ref{app:vertical-advertiser-overview} provides additional experimental evidence, including case studies and token-level analyses for complementary and competing advertisers, latency measurements, value prediction accuracy and calibration, and bid-offset sweeps of allocation, payment, and advertiser utility.

Taken together, these results provide an initial proof of concept for LAMA: token-level advertiser participation can improve monetization without sacrificing response quality, suggesting the potential of generation-native advertising as a viable paradigm.

%% file: sections/related.tex
Classical online advertising auctions typically allocate exogenously defined inventory, such as sponsored-search or display slots, through mechanisms including GSP and VCG \cite{Edelman_2007,Varian_2007}. Subsequent work incorporates cascade effects, ad externalities, and richer click-through-rate structures \cite{DBLP:conf/wine/AggarwalFMP08,DBLP:conf/ecai/0001RSV16,DBLP:conf/wine/CavalloW14}, and studies the joint optimization of advertising and organic content across search, recommendation, e-commerce, and contextual advertising \cite{Carrion_2023,Li_2023,An_2025,DBLP:conf/www/GhoshM08,DBLP:conf/wine/GiotisK08,DBLP:journals/tcs/0001RSV18}. Despite richer interactions with users and surrounding content, these models largely retain a common abstraction: the mechanism allocates pre-existing slots, items, or positions whose advertising opportunities are defined before allocation.

Generative interfaces challenge this abstraction because advertising opportunities may depend on the response being generated. More broadly, recent work studies mechanisms in which strategic agents influence LLM outputs through next-token distributions or preferences, making generation itself part of the mechanism outcome \cite{DBLP:conf/www/DuttingMLXZ24,DBLP:conf/kdd/Dubey0KM024,DBLP:journals/corr/abs-2405-05905}. In LLM advertising, some works define opportunities over discourse segments or context-dependent position--creative pairs \cite{NEURIPS2024_20dcab0f,balseiro2025positionauctionsaigeneratedcontent}, while others study response-level ad insertion \cite{xu2026adinsertionllmgeneratedresponses}. More LLM-native approaches incorporate response-quality constraints \cite{han2026mechanismdesignqualitypreservingllmadvertising}, allocate advertising through the model's output distribution \cite{zhao2026llmauctiongenerativeauctionllmnative}, or explore opportunities arising from retrieval decisions and internal model representations \cite{sun2026lerallmenhancedragad,yun2026llmadvertisementbasedneuron}. Together, these works progressively move advertising allocation from fixed slots toward the generation process itself.

Our work takes a further step toward generation-native advertising by allowing advertisers to influence generation at token-level granularity. Rather than allocating an opportunity identified before or after generation, LAMA operates along the generation trajectory: advertiser influence evolves with the current prefix, posterior beliefs, and continuation values. Thus, the advertising opportunity itself emerges dynamically through generation, shifting the mechanism-design problem from allocating sponsored content within generated responses to governing how advertiser influence shapes the response-generation process.

%% file: sections/training_algorithm.tex
Algorithm~\ref{alg:report-model-training} gives a practical implementation
of the decomposed two learning tasks in Section~\ref{subsec:value-decomposition}.

\begin{algorithm}[t]
\caption{Practical Training of the Shared Report Model}
\label{alg:report-model-training}
\begin{algorithmic}[1]
\Require Advertiser-Query pairs \(\mathcal D_{IQ}\), Advertiser prefix \(\{\kappa_i\}_{i\in\mathcal N}\), Reference policy \(\pi_{\mathrm{ref}}\), Rollout budget \(B\), Temperature \(\beta\)

\State \textcolor{blue}{\textsc{// Construct pairwise comparison data}}
\State \(\widehat{\mathcal D}_{\mathrm{pair}}\leftarrow\varnothing,\,\widehat{\mathcal D}_{\mathrm{root}}\leftarrow\varnothing\) \Comment{Initialize training datasets}
\For{each advertiser-query pair \((i,q)\in\mathcal D_{IQ}\)}
    \For{\(b=1,\dots,B\)}
        \State \(y_{i,q}^{(b)}\sim\pi_{\mathrm{ref}}(\cdot\mid[\kappa_i,q])\) \Comment{Sample responses}
        \State \(r_i(\ell_{y_{i,q}^{(b)}})\leftarrow\operatorname{OnlineTest}(i,q,y_{i,q}^{(b)})\) \Comment{Estimate population impression value from online experiments}
    \EndFor
    \For{every \(1\le b<b'\le B\)} \Comment{Full comparison}
        \State \(\omega_i^{(b,b')}(q)\leftarrow\sigma\!\left(r_i(\ell_{y_{i,q}^{(b)}})-r_i(\ell_{y_{i,q}^{(b')}})\right)\)
        \State \(\widehat{\mathcal D}_{\mathrm{pair}}\leftarrow\widehat{\mathcal D}_{\mathrm{pair}}\cup\left\{\bigl(i,q,y_{i,q}^{(b)},y_{i,q}^{(b')},\omega_i^{(b,b')}(q)\bigr)\right\}\)
    \EndFor
\EndFor

\State \textcolor{blue}{\textsc{// Train local advantages}}
\State \(\pi_\theta\leftarrow\pi_{\mathrm{ref}}\) \Comment{Initialize advertiser-shared policy model}
\State \(\displaystyle
    \Delta_i^{\pi_\theta}(q;y,y')
    \leftarrow
    \beta \left[
        \begin{aligned}
        \sum_{t=0}^{T_y-1}
        &\log\frac{\pi_{\theta,i}(a_t\mid [\kappa_i,s_t])}{\pi_{\mathrm{ref}}(a_t\mid [\kappa_i,s_t])}
        \\&-\sum_{t=0}^{T_{y'}-1}
        \log\frac{\pi_{\theta,i}(a'_t\mid [\kappa_i,s'_t])}{\pi_{\mathrm{ref}}(a'_t\mid [\kappa_i,s'_t])}
        \end{aligned}
    \right]
    \)
\State \(\displaystyle
    \widehat{\mathcal L}(\pi_\theta)
    \leftarrow
    \frac{1}{|\widehat{\mathcal D}_{\mathrm{pair}}|}
    \sum_{\substack{(i,q,y,y',\omega)\\\in\widehat{\mathcal D}_{\mathrm{pair}}}}
    \left [\begin{aligned}
        &\omega\log\sigma\!\left(\Delta_i^{\pi_\theta}(q;y,y')\right)\\[-0.2em]
        &+(1-\omega)\log\sigma\!\left(-\Delta_i^{\pi_\theta}(q;y,y')\right)
    \end{aligned} \right]
    \)
\State \(\theta^\dagger\leftarrow\arg\min_\theta\widehat{\mathcal L}(\pi_\theta)\) \Comment{Optimize local advantages}

\State \textcolor{blue}{\textsc{// Learn root values from residual targets}}
\For{each advertiser-query pair \((i,q)\in\mathcal D_{IQ}\)}
    \For{\(b=1,\dots,B\)}
        \State \(\widetilde y_{i,q}^{(b)}\sim\pi_{\theta^\dagger,i}(\cdot\mid [\kappa_i, q])\) \Comment{Sample on-policy responses}
        \State \(r_i(\ell_{\widetilde y_{i,q}^{(b)}})\leftarrow\operatorname{OnlineTest}(i,q,\widetilde y_{i,q}^{(b)})\)
        \State \(\displaystyle \widetilde v_{i,q}^{(b)}\leftarrow r_i(\ell_{\widetilde y_{i,q}^{(b)}})-\beta\sum_{t=0}^{T_{\widetilde y}-1}\log\frac{\pi_{\theta^\dagger,i}(a_t\mid [\kappa_i,s_t])}{\pi_{\mathrm{ref}}(a_t\mid [\kappa_i,s_t])}\)
        \State \(\widehat{\mathcal D}_{\mathrm{root}}\leftarrow\widehat{\mathcal D}_{\mathrm{root}}\cup\left\{\bigl(i,q,\widetilde v_{i,q}^{(b)}\bigr)\right\}\)
    \EndFor
\EndFor
\State \(v_\phi\leftarrow\operatorname{AttachValueHead}(\pi_{\theta^\dagger})\) \Comment{Initialize root-value model}
\State \(\displaystyle \phi^\dagger\leftarrow\arg\min_\phi\frac{1}{|\widehat{\mathcal D}_{\mathrm{root}}|}\sum_{(i,q,\widetilde v)\in\widehat{\mathcal D}_{\mathrm{root}}}\bigl(v_{\phi,i}(q)-\widetilde v\bigr)^2\)

\State \Return Shared report model \(\bigl(\pi_{\theta^\dagger},v_{\phi^\dagger}\bigr)\)
\end{algorithmic}
\end{algorithm}

%% file: sections/potential_business_format.tex
Figure~\ref{fig:lama-presentation-formats} presents three representative business formats for applying LAMA in the single-winner setting: a hyperlink embedded in the generated response, a clickable response block, and a sponsored card displayed alongside the response.

\begin{figure*}[t]
    \centering
    \begin{minipage}[t]{0.31\textwidth}
        \centering
        \includegraphics[width=\linewidth]{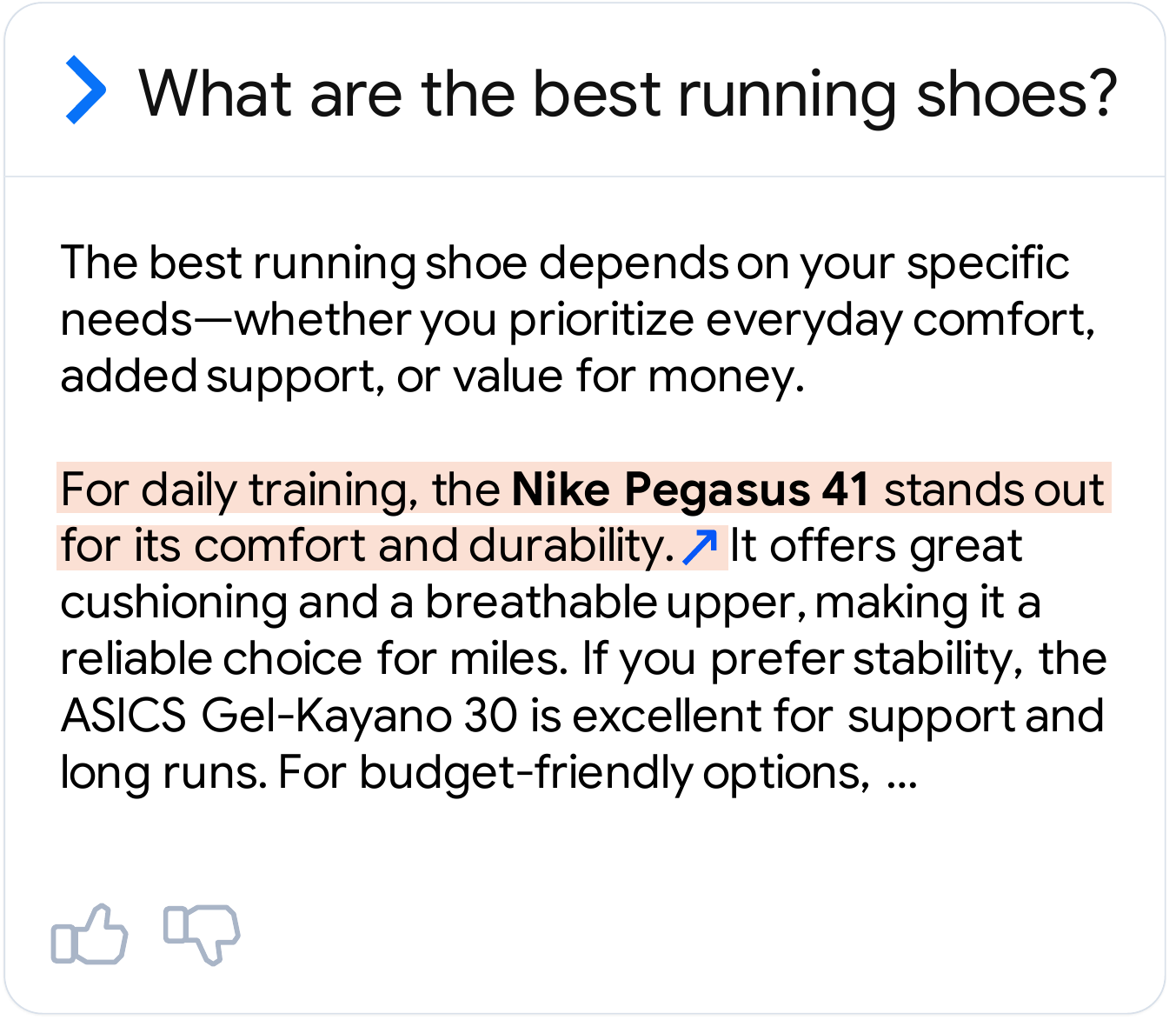}
        \par\smallskip
        \textbf{(a)} Hyperlink
    \end{minipage}%
    \hspace{0.02\textwidth}%
    \begin{minipage}[t]{0.31\textwidth}
        \centering
        \includegraphics[width=\linewidth]{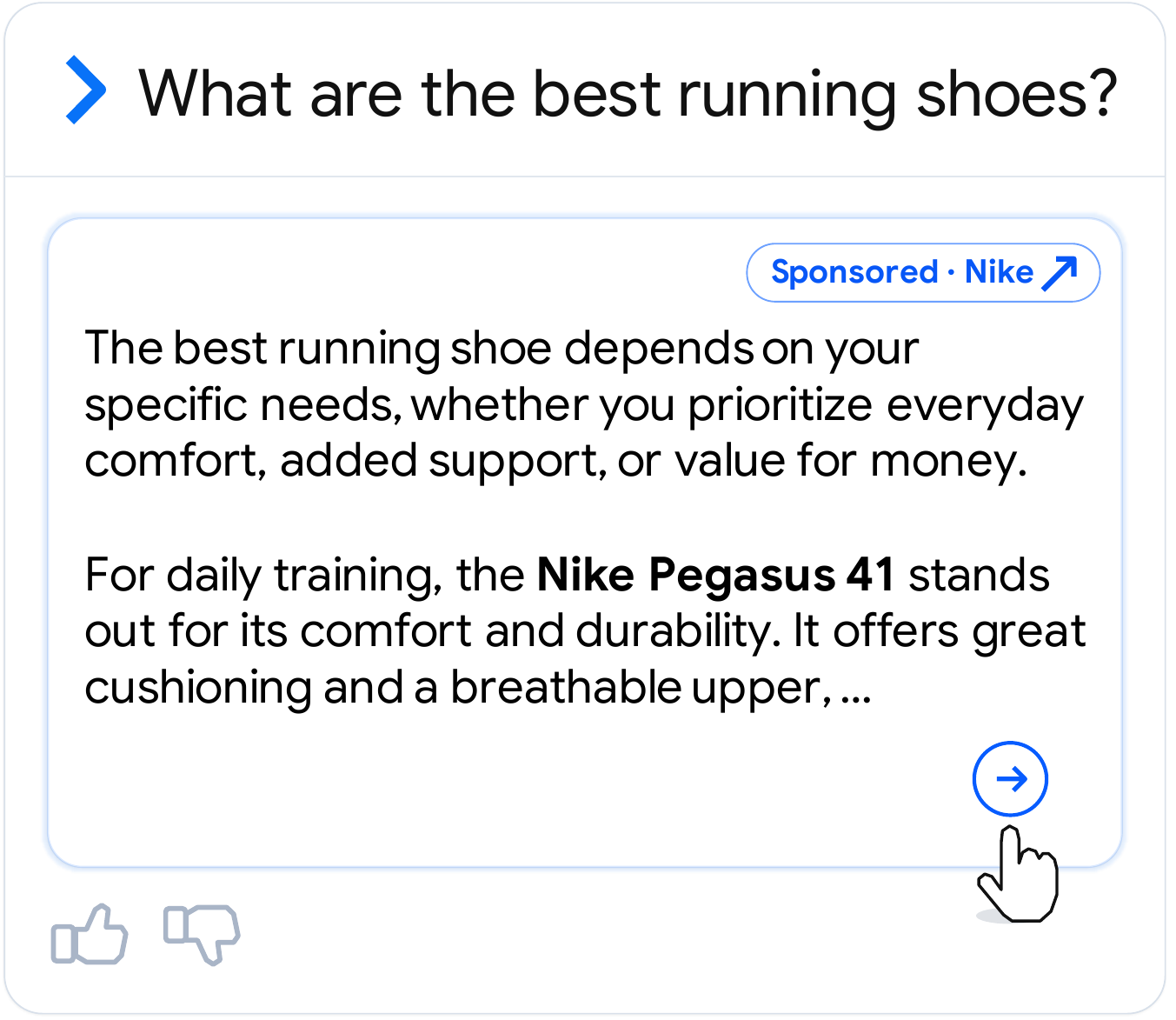}
        \par\smallskip
        \textbf{(b)} Clickable Block
    \end{minipage}%
    \hspace{0.02\textwidth}%
    \begin{minipage}[t]{0.31\textwidth}
        \centering
        \includegraphics[width=\linewidth]{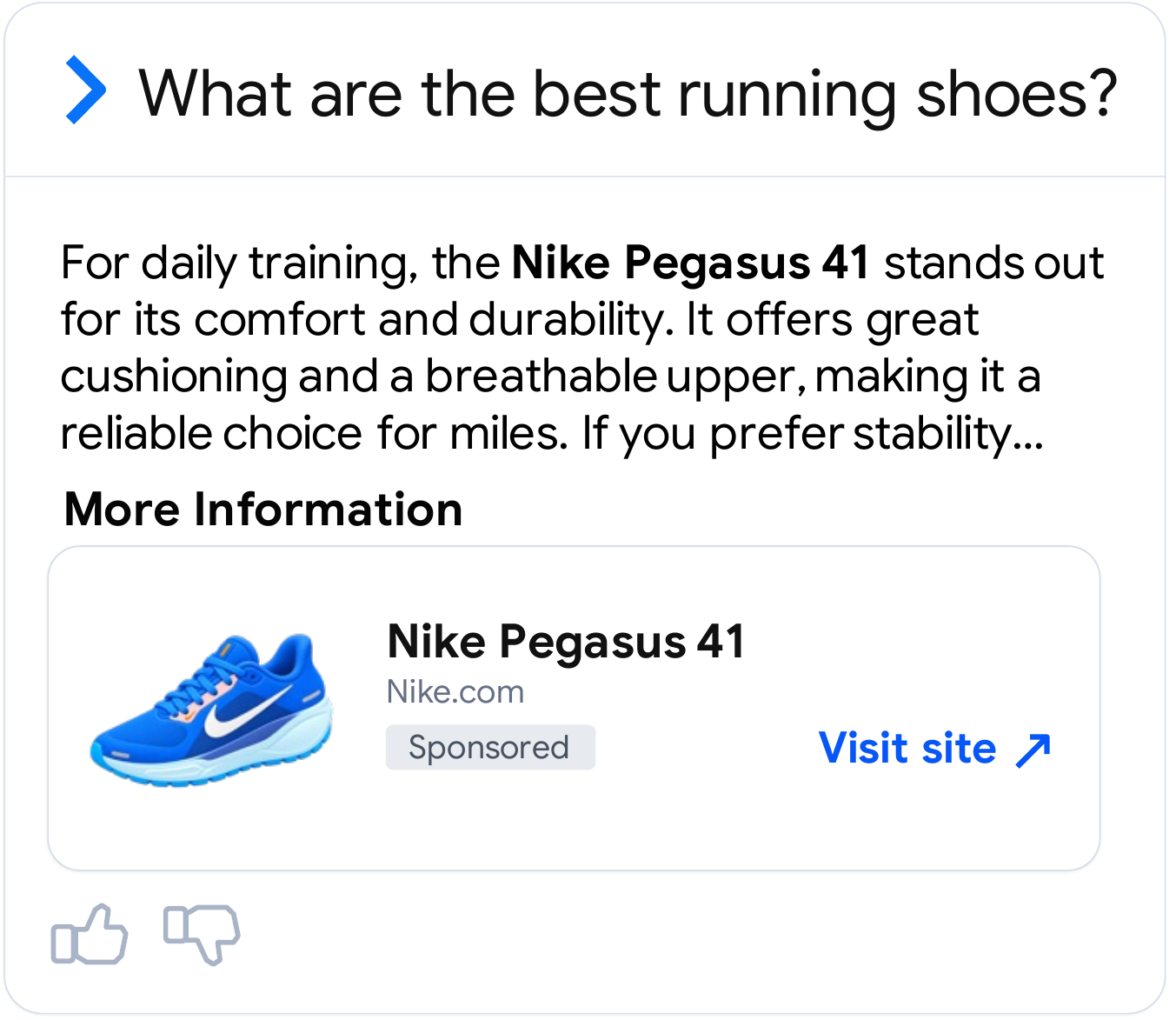}
        \par\smallskip
        \textbf{(c)} Sponsored Card
    \end{minipage}
    \caption{Potential commercialization formats for LAMA. \textbf{(a)} Hyperlink: the winning advertiser's brand name in the generated response is hyperlinked to its landing page; this is the format used in our experiments. \textbf{(b)} Clickable Block: the entire response block links to the advertiser's landing page. \textbf{(c)} Sponsored Card: an advertisement card is displayed below the generated response, a format currently adopted by platforms such as ChatGPT Go, Google AI Overviews, and Baidu AI Smart Answers.}
    \label{fig:lama-presentation-formats}
\end{figure*}

%% file: sections/proof.tex
\begin{lemma}[Optimal soft Bellman recursion]
\label{lem:optimal-soft-bellman}
Consider a finite-horizon token MDP with deterministic transition
\(s_{t+1}=[s_t,a_t]\), reference policy \(\pi_{\mathrm{ref}}\), and terminal
reward \(r:\mathcal S_T\to\mathbb R\). For every state \(s\), define the optimal
KL-regularized continuation value
\[
    V^*(s)
    :=
    \sup_{\pi}
    \mathbb E_{\tau\sim\pi(\cdot\mid s)}
    \left[
        r(s_T)
        -
        \beta
        \sum_{t=0}^{T_\tau-1}
        \log
        \frac{\pi(a_t\mid s_t)}
        {\pi_{\mathrm{ref}}(a_t\mid s_t)}
    \right].
\]
Assume \(\pi_{\mathrm{ref}}(a\mid s)>0\) for every feasible action \(a\) at
every non-terminal state \(s\). Then \(V^*\) satisfies the terminal boundary
condition
\(
    V^*(\ell)=r(\ell),
    \,\ell\in\mathcal S_T,
\)
and the soft Bellman recursion
\[
    V^*(s)
    =
    \beta
    \log
    \sum_{a\in\mathcal V}
    \pi_{\mathrm{ref}}(a\mid s)
    \exp\left(
        \frac{V^*([s,a])}{\beta}
    \right),
    \, s\notin\mathcal S_T.
\]
Moreover, the optimal continuation policy is
\[
    \pi^*(a\mid s)
    =
    \frac{
        \pi_{\mathrm{ref}}(a\mid s)
        \exp\left(
            V^*([s,a])/\beta
        \right)
    }{
        \sum_{a'\in\mathcal V}
        \pi_{\mathrm{ref}}(a'\mid s)
        \exp\left(
            V^*([s,a'])/\beta
        \right)
    }.
\]
\end{lemma}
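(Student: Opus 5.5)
We prove the lemma by backward induction on the remaining horizon, using the Gibbs (Donsker--Varadhan) variational identity at each non-terminal state. Since the context space is finite and every trajectory has length at most \(L\), the induction runs on \(L-|s|\), handling terminal states as the base case.

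\textbf{Base case.} At a terminal state \(\ell\in\mathcal S_T\), no tokens are generated, so the KL sum is empty and every policy yields the value \(r(\ell)\). Hence \(V^*(\ell)=r(\ell)\).

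\textbf{Decomposition at a non-terminal state.} Fix \(s\notin\mathcal S_T\) and assume the claims hold at every child \([s,a]\). For any policy \(\pi\), the tower property splits the objective into the first step and the continuation:
\[
J^\pi(s)=\sum_{a}\pi(a\mid s)\Bigl[J^\pi([s,a])-\beta\log\frac{\pi(a\mid s)}{\pi_{\mathrm{ref}}(a\mid s)}\Bigr].
\]
The continuation behaviour of \(\pi\) below different children can be chosen independently, because the subtrees are disjoint. So the supremum factorizes: first replace each \(J^\pi([s,a])\) by its supremum \(V^*([s,a])\), then optimize over the single distribution \(p=\pi(\cdot\mid s)\). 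This gives
\[
V^*(s)=\sup_{p\in\Delta(\mathcal V)}\sum_a p(a)\bigl[V^*([s,a])-\beta\log\bigl(p(a)/\pi_{\mathrm{ref}}(a\mid s)\bigr)\bigr].
\]
Two facts justify this step. Finiteness of \(V^*([s,a])\) follows by induction, since rewards are real-valued on the finite set \(\mathcal S_T\). Full support of \(\pi_{\mathrm{ref}}\) keeps every log term well defined.

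\textbf{The Gibbs step (main obstacle).} Let \(g(a)\propto \pi_{\mathrm{ref}}(a\mid s)e^{V^*([s,a])/\beta}\), and write \(Z\) for its normalizer, which is strictly positive by full support. Then the bracketed objective equals
\[
\beta\log Z-\beta\,\mathbb{D}_{\mathrm{KL}}(p\,\|\,g).
\]
Since KL divergence is nonnegative and vanishes exactly when \(p=g\), the supremum is attained uniquely at \(p=g\), with value \(\beta\log Z\). This is precisely the soft Bellman recursion together with the stated form of \(\pi^*\). Concatenating these per-state optimizers gives a policy attaining \(V^*\) at every state, which justifies the interchange of supremum and expectation used in the decomposition.

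The only care needed is in that interchange. The upper bound holds because \(J^\pi([s,a])\le V^*([s,a])\) for every \(\pi\). The lower bound holds because the explicit composed policy achieves equality. Both directions are routine once the induction hypothesis supplies optimal continuation policies at the children.
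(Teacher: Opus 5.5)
Your proof is correct. It shares the paper's skeleton: a terminal base case, then splitting the objective at a non-terminal state into the first-step distribution plus the child continuations, which reduces everything to the one-step problem $\sup_{\mu\in\Delta(\mathcal V)}\{\sum_a\mu(a)V^*([s,a])-\beta\,\mathbb{D}_{\mathrm{KL}}(\mu\,\|\,\pi_{\mathrm{ref}}(\cdot\mid s))\}$.

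You differ in how you solve that one-step problem. The paper introduces a Lagrange multiplier for the simplex constraint and writes the stationarity condition for actions with $\mu(a)>0$. It normalizes to obtain the Gibbs form, appeals to strict concavity, and then substitutes back to compute the value $\beta\log Z(s)$. You instead rewrite the objective as $\beta\log Z-\beta\,\mathbb{D}_{\mathrm{KL}}(p\,\|\,g)$. That single identity gives the upper bound, the unique maximizer, and the optimal value at once, and it handles boundary points of the simplex automatically.

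The paper's stationarity condition is stated only on the support of $\mu$, so its normalization step tacitly assumes the optimizer has full support. This is true, because the entropy term has infinite slope at zero, but the paper does not argue it. Your backward induction, with the upper bound $J^\pi([s,a])\le V^*([s,a])$ and the matching lower bound from composing optimal child policies, also makes rigorous the ``principle of optimality'' step and shows the supremum is attained; the paper simply asserts both. In exchange, the Lagrangian route derives the form of the optimizer, whereas your argument needs $g$ guessed in advance.
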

\begin{proof}
The terminal condition is immediate. If \(s=\ell\in\mathcal S_T\), the process has
already terminated and no further KL penalty is incurred, so
\[
    V^*(\ell)=r(\ell).
\]

Now fix a non-terminal state \(s\). Any continuation policy can be decomposed
into its first-step action distribution
\[
    \mu(a):=\pi(a\mid s)\in\Delta(\mathcal V)
\]
and continuation policies after each child state \([s,a]\). Hence, by the
principle of optimality,
\[
    V^*(s)
    =
    \max_{\mu\in\Delta(\mathcal V)}
    \left\{
        \sum_{a\in\mathcal V}
        \mu(a)V^*([s,a])
        -
        \beta
        \sum_{a\in\mathcal V}
        \mu(a)
        \log
        \frac{\mu(a)}
        {\pi_{\mathrm{ref}}(a\mid s)}
    \right\}.
\]
This is a strictly concave optimization problem over the simplex. Introduce a
Lagrange multiplier \(\xi\) for the constraint
\(\sum_a\mu(a)=1\). The Lagrangian is
\[
\begin{aligned}
\mathcal L(\mu,\xi)
= &
\sum_{a\in\mathcal V}
\mu(a)V^*([s,a])
-
\beta
\sum_{a\in\mathcal V}
\mu(a)
\log
\frac{\mu(a)}
{\pi_{\mathrm{ref}}(a\mid s)} \\
&+
\xi
\left(
\sum_{a\in\mathcal V}\mu(a)-1
\right).
\end{aligned}
\]
For every action \(a\) with \(\mu(a)>0\), the first-order condition gives
\[
    \frac{\partial \mathcal L}{\partial \mu(a)}
    =
    V^*([s,a])
    -
    \beta
    \left(
        \log
        \frac{\mu(a)}
        {\pi_{\mathrm{ref}}(a\mid s)}
        +1
    \right)
    +
    \xi
    =
    0.
\]
Therefore
\[
    \log
    \frac{\mu(a)}
    {\pi_{\mathrm{ref}}(a\mid s)}
    =
    \frac{V^*([s,a])+\xi-\beta}{\beta},
\]
or equivalently
\[
    \mu(a)
    =
    \pi_{\mathrm{ref}}(a\mid s)
    \exp\left(
        \frac{V^*([s,a])}{\beta}
    \right)
    \exp\left(
        \frac{\xi-\beta}{\beta}
    \right).
\]
The final exponential factor is independent of \(a\), so normalization gives
\[
    \mu^*(a)
    =
    \frac{
        \pi_{\mathrm{ref}}(a\mid s)
        \exp\left(
            V^*([s,a])/\beta
        \right)
    }{
        \sum_{a'\in\mathcal V}
        \pi_{\mathrm{ref}}(a'\mid s)
        \exp\left(
            V^*([s,a'])/\beta
        \right)
    }.
\]
This is the optimal continuation policy \(\pi^*(a\mid s)\).

It remains to compute the optimal value. Let
\[
    Z(s)
    :=
    \sum_{a\in\mathcal V}
    \pi_{\mathrm{ref}}(a\mid s)
    \exp\left(
        \frac{V^*([s,a])}{\beta}
    \right).
\]
From the expression for \(\mu^*\),
\[
    \log
    \frac{\mu^*(a)}
    {\pi_{\mathrm{ref}}(a\mid s)}
    =
    \frac{V^*([s,a])}{\beta}
    -
    \log Z(s).
\]
Substituting this identity into the one-step objective gives
\[
\begin{aligned}
    V^*(s)
    &=
    \sum_a \mu^*(a)V^*([s,a])
    -
    \beta
    \sum_a \mu^*(a)
    \log
    \frac{\mu^*(a)}
    {\pi_{\mathrm{ref}}(a\mid s)}
    \\
    &=
    \sum_a \mu^*(a)V^*([s,a])
    -
    \beta
    \sum_a \mu^*(a)
    \left(
        \frac{V^*([s,a])}{\beta}
        -
        \log Z(s)
    \right)
    \\
    &=
    \beta\log Z(s).
\end{aligned}
\]
Hence
\[
    V^*(s)
    =
    \beta
    \log
    \sum_{a\in\mathcal V}
    \pi_{\mathrm{ref}}(a\mid s)
    \exp\left(
        \frac{V^*([s,a])}{\beta}
    \right),
\]
which proves the soft Bellman recursion.
\end{proof}

\begin{lemma}[Convexity and gradient of weighted log-sum-exp]\label{lem:logsumexp-convex}
Let \(\beta>0\), let \(\mathcal K\) be a finite index set, and let
\(\{w_k\}_{k\in\mathcal K}\) be nonnegative weights, not all zero. Define
\[
    F_\beta(x)
    =
    \beta \log
    \left(
        \sum_{k\in\mathcal K}
        w_k \exp\left(\frac{x_k}{\beta}\right)
    \right),
    \qquad x\in \mathbb R^{\mathcal K}.
\]
Then \(F_\beta\) is convex on \(\mathbb R^{\mathcal K}\). Moreover, \(F_\beta\)
is differentiable and, for each \(k\in\mathcal K\),
\[
    \frac{\partial F_\beta(x)}{\partial x_k}
    =
    \frac{
        w_k \exp(x_k/\beta)
    }{
        \sum_{k'\in\mathcal K}
        w_{k'} \exp(x_{k'}/\beta)
    }.
\]
\end{lemma}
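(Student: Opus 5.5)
We prove Lemma~\ref{lem:logsumexp-convex} directly. Let \(\mathcal K_+ := \{k : w_k > 0\}\), which is nonempty by assumption. Indices with \(w_k = 0\) contribute nothing to the sum, so
\[
F_\beta(x) = \beta \log \sum_{k\in\mathcal K_+} \exp\!\Bigl(\frac{x_k + \beta\log w_k}{\beta}\Bigr).
\]
The inner sum is strictly positive and smooth in \(x\), so \(F_\beta\) is well defined and \(C^\infty\) on all of \(\mathbb R^{\mathcal K}\). Write
\[
S(x) := \sum_{k'} w_{k'} e^{x_{k'}/\beta}.
\]
The chain rule gives
\[
\frac{\partial F_\beta}{\partial x_k} = \beta \cdot \frac{1}{S(x)} \cdot \frac{w_k e^{x_k/\beta}}{\beta} = \frac{w_k e^{x_k/\beta}}{S(x)}.
\]
This is the claimed formula, and it correctly returns \(0\) whenever \(w_k = 0\).

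For convexity, the cleanest route is H\"older's inequality. Fix \(x, y\) and \(\theta \in [0,1]\). We have
\[
w_k e^{(\theta x_k + (1-\theta) y_k)/\beta} = \bigl(w_k e^{x_k/\beta}\bigr)^{\theta} \bigl(w_k e^{y_k/\beta}\bigr)^{1-\theta}.
\]
Summing over \(k\) and applying H\"older with exponents \(1/\theta\) and \(1/(1-\theta)\) yields
\[
S(\theta x + (1-\theta) y) \le S(x)^\theta\, S(y)^{1-\theta}.
\]
Taking \(\beta\log\) of both sides gives \(F_\beta(\theta x + (1-\theta) y) \le \theta F_\beta(x) + (1-\theta) F_\beta(y)\). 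The endpoint cases \(\theta \in \{0,1\}\) are trivial.

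As an alternative check, the Hessian can be computed directly. Set \(p_k := \partial_k F_\beta\), so that \(p\) is a probability vector. Then
\[
\nabla^2 F_\beta = \beta^{-1}\bigl(\mathrm{diag}(p) - pp^\top\bigr),
\]
and for any \(v\),
\[
v^\top \bigl(\mathrm{diag}(p) - pp^\top\bigr) v = \sum_k p_k v_k^2 - \Bigl(\sum_k p_k v_k\Bigr)^2 = \mathrm{Var}_p(v) \ge 0.
\]
So the Hessian is positive semidefinite, which again gives convexity.

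There is no real obstacle here. The only care point is handling the zero weights, and restricting attention to \(\mathcal K_+\) as above disposes of it.
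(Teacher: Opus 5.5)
Your proof is correct. The gradient computation matches the paper's. Your ``alternative check'' via $\nabla^2 F_\beta=\beta^{-1}(\operatorname{diag}(p)-pp^\top)$ and the variance identity is exactly the paper's convexity argument.

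Your primary route is genuinely different. H\"older's inequality with exponents $1/\theta$ and $1/(1-\theta)$ verifies the convexity inequality directly from the definition. It needs no second derivatives and absorbs zero weights automatically, since $0^{\theta}0^{1-\theta}=0$ for $\theta\in(0,1)$.

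What the paper's Hessian route buys is reuse. The explicit matrix $\operatorname{diag}(p)-pp^\top$ is invoked again in the proof of Proposition~\ref{prop:direct-latent-advertiser-mixture-implementable}. There a positive constant $B_{-i}$ sits inside the logarithm, so the advertiser's coordinates carry total mass $\sum_\ell p_\ell<1$. The paper reads off positive definiteness directly from this and concludes strict dominance of truthful reporting.

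H\"older as you use it yields only the weak inequality. You could recover strictness by tracking its equality case: equality holds iff $x-y$ is constant on $\mathcal K_+$, and the fixed outside coordinate forces that constant to be zero. This works, but it is less direct for that later use.
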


\begin{proof}
Let
\[
    Z_\beta(x)
    =
    \sum_{k\in\mathcal K}
    w_k \exp\left(\frac{x_k}{\beta}\right),
\]
and define
\[
    p_k(x)
    =
    \frac{
        w_k \exp(x_k/\beta)
    }{
        Z_\beta(x)
    }.
\]
Since the weights are nonnegative and not all zero, \(Z_\beta(x)>0\) for every
\(x\). Hence \(p(x)=(p_k(x))_{k\in\mathcal K}\) is a probability vector over
\(\mathcal K\). Direct differentiation gives
\[
    \frac{\partial F_\beta(x)}{\partial x_k}
    =
    \beta \cdot
    \frac{1}{Z_\beta(x)}
    \cdot
    w_k \cdot \frac{1}{\beta}
    \exp\left(\frac{x_k}{\beta}\right)
    =
    p_k(x).
\]
Therefore,
\[
    \nabla F_\beta(x)=p(x).
\]

We next compute the Hessian. For any \(k,k'\in\mathcal K\),
\[
    \frac{\partial p_k(x)}{\partial x_{k'}}
    =
    \frac{1}{\beta}
    p_k(x)
    \left(
        \mathbf 1\{k=k'\}-p_{k'}(x)
    \right).
\]
Thus
\[
    \nabla^2 F_\beta(x)
    =
    \frac{1}{\beta}
    \left(
        \operatorname{diag}(p(x)) - p(x)p(x)^\top
    \right).
\]
For any vector \(v\in\mathbb R^{\mathcal K}\),
\[
\begin{aligned}
    v^\top \nabla^2 F_\beta(x)v
    &=
    \frac{1}{\beta}
    \left[
        \sum_{k\in\mathcal K} p_k(x)v_k^2
        -
        \left(
            \sum_{k\in\mathcal K} p_k(x)v_k
        \right)^2
    \right].
\end{aligned}
\]
The term inside the brackets is the variance of the random variable that takes
value \(v_k\) with probability \(p_k(x)\). Hence it is nonnegative. Since
\(\beta>0\), we have
\[
    v^\top \nabla^2 F_\beta(x)v \ge 0
\]
for every \(v\in\mathbb R^{\mathcal K}\). Therefore,
\[
    \nabla^2 F_\beta(x)\succeq 0,
\]
which implies that \(F_\beta\) is convex.
\end{proof}

\begin{lemma}[Desirability Representation]
\label{lem:desirability-representation}
Fix an initial context \(q\). Let
\(
    \widehat V
    =
    \bigl(\widehat V_i(u)\bigr)_{i\in\mathcal N,\,u\succeq q}
\)
be a globally Bellman-consistent report ledger on the subtree rooted at \(q\).
Define
\[
    \widehat z_i(u)
    :=
    \exp\!\left(\frac{\widehat V_i(u)}{\beta}\right),
    \,
    \overline z(u;\widehat V)
    :=
    \sum_{j\in\mathcal N}\widehat z_j(u).
\]
Then the online decision rule is equivalently represented by
\[
    \rho_i(u;\widehat V)
    =
    \frac{\widehat z_i(u)}
    {\overline z(u;\widehat V)},
    \,
    x_{\widehat V}(a\mid u)
    =
    \pi_{\mathrm{ref}}(a\mid u)
    \frac{\overline z([u,a];\widehat V)}
    {\overline z(u;\widehat V)}
\]
for every reached non-terminal descendant \(u\succeq q\). In particular, if the
rollout terminates at \(\ell\in S_T\), then the terminal ad allocation is
\[
    \lambda_i(\ell;\widehat V)
    =
    \rho_i(\ell;\widehat V)
    =
    \frac{\widehat z_i(\ell)}
    {\overline z(\ell;\widehat V)}.
\]
\end{lemma}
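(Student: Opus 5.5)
The plan is to argue by induction along the rollout: depth from \(q\) first, then the posterior formula, then the mixture formula. The one identity that drives everything is harmonicity. Global Bellman consistency (Definition~\ref{def:bellman-consistent-report}, extended to every non-terminal \(u\succeq q\)) says \(\widehat V_i(u)=\beta\log\sum_a\pi_{\mathrm{ref}}(a\mid u)e^{\widehat V_i([u,a])/\beta}\). Exponentiating gives
\[
    \widehat z_i(u)=\sum_{a\in\mathcal V}\pi_{\mathrm{ref}}(a\mid u)\,\widehat z_i([u,a]),
\]
so each \(\widehat z_i\) is harmonic under \(\pi_{\mathrm{ref}}\), and hence so is \(\overline z\).

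\textbf{Step 1 (local policy).} Substituting harmonicity into the denominator of \(\widehat\pi_i\), with the child reports \(\widehat V_i^u([u,a])=\widehat V_i([u,a])\) taken from the ledger, gives
\[
    \widehat\pi_i(a\mid u)=\pi_{\mathrm{ref}}(a\mid u)\,\frac{\widehat z_i([u,a])}{\widehat z_i(u)}.
\]

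\textbf{Step 2 (base case).} At \(u=q\), the quantity \(Z_i(q)\) equals \(\widehat z_i(q)\) by Bellman consistency at \(q\). Here \(\widehat V_i(q)\) is defined as \(\beta\log Z_i(q)\), so this is a definition rather than a constraint. Therefore \(\rho_i(q)=\widehat z_i(q)/\overline z(q)\).

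\textbf{Step 3 (inductive step).} Assume \(\rho_i(u)=\widehat z_i(u)/\overline z(u)\) at a reached non-terminal \(u\). Then
\[
    x(a\mid u)=\sum_i \frac{\widehat z_i(u)}{\overline z(u)}\,\pi_{\mathrm{ref}}(a\mid u)\,\frac{\widehat z_i([u,a])}{\widehat z_i(u)}
    =\pi_{\mathrm{ref}}(a\mid u)\,\frac{\overline z([u,a])}{\overline z(u)}.
\]
This is the claimed mixture formula. Plugging it into the Bayes update, the factors \(\widehat z_i(u)\), \(\pi_{\mathrm{ref}}(a\mid u)\) and \(\overline z(u)\) cancel, leaving
\[
    \rho_i([u,a])=\frac{\widehat z_i([u,a])}{\overline z([u,a])}.
\]
This propagates the hypothesis to every child; the update is well-defined because \(x(a\mid u)>0\) whenever \(a\) is reached.

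\textbf{Step 4 (terminal allocation).} Since \(\lambda(h_T)=\rho(\ell)\), the same formula gives the terminal allocation.

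The only delicate point is bookkeeping: we must confirm that the online reports at \(u\) are exactly the children's ledger values, so that the ledger value \(\widehat V_i(u)\) used in consistency coincides with \(\beta\log\widehat z_i(u)\). Once this is stated, every cancellation is routine, and positivity of \(\pi_{\mathrm{ref}}\) and of the exponentials guarantees that no denominator vanishes.
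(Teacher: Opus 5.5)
Your proposal is correct and follows essentially the same route as the paper: exponentiating global Bellman consistency to get harmonicity of each \(\widehat z_i\), rewriting \(\widehat\pi_i(a\mid u)=\pi_{\mathrm{ref}}(a\mid u)\,\widehat z_i([u,a])/\widehat z_i(u)\), and then inducting along the realized path (root initialization, mixture formula, Bayes update) to obtain the terminal allocation. Your bookkeeping remarks make explicit two points the paper leaves implicit: that \(\widehat V_i(q)\) is defined as \(\beta\log Z_i(q)\), and that the child reports coincide with the ledger values.
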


\begin{proof}
Global Bellman consistency implies that, for every non-terminal descendant
\(u\succeq q\),
\[
    \widehat z_i(u)
    =
    \sum_{a\in\mathcal V}
    \pi_{\mathrm{ref}}(a\mid u)\widehat z_i([u,a]).
\]
Hence the advertiser-specific token rule can be rewritten as
\[
    \widehat\pi_i(a\mid u)
    =
    \pi_{\mathrm{ref}}(a\mid u)
    \frac{\widehat z_i([u,a])}{\widehat z_i(u)}.
\]

We prove the representation by induction along the realized path. At the root,
the initialization gives
\[
    \rho_i(q;\widehat V)
    =
    \frac{\widehat z_i(q)}
    {\overline z(q;\widehat V)}.
\]
Suppose the claim holds at a reached non-terminal state \(u\). Then
\[
\begin{aligned}
    x_{\widehat V}(a\mid u)
    &=
    \sum_{i\in\mathcal N}
    \rho_i(u;\widehat V)\widehat\pi_i(a\mid u) \\
    &=
    \sum_{i\in\mathcal N}
    \frac{\widehat z_i(u)}{\overline z(u;\widehat V)}
    \cdot
    \pi_{\mathrm{ref}}(a\mid u)
    \frac{\widehat z_i([u,a])}{\widehat z_i(u)} \\
    &=
    \pi_{\mathrm{ref}}(a\mid u)
    \frac{\overline z([u,a];\widehat V)}
    {\overline z(u;\widehat V)}.
\end{aligned}
\]
The Bayesian posterior update then gives
\[
\begin{aligned}
    \rho_i([u,a];\widehat V)
    &=
    \frac{\rho_i(u;\widehat V)\widehat\pi_i(a\mid u)}
    {x_{\widehat V}(a\mid u)}  \\
    &=
    \frac{\widehat z_i([u,a])}
    {\overline z([u,a];\widehat V)}.
\end{aligned}
\]
Thus the representation is preserved after every realized transition. The
terminal expression follows by applying the same posterior representation at
\(\ell\).
\end{proof}

\begin{lemma}[Joint Allocation Probability]
\label{lem:joint-allocation-probability}
For every globally Bellman-consistent report ledger \(\widehat V\), advertiser
\(i\in\mathcal N\), and terminal state \(\ell\in L(q)\), the joint allocation
mass assigned to advertiser \(i\) at terminal state \(\ell\) is
\[
    \alpha_{i,\ell}(\widehat V)
    :=
    \Pr_{x_{\widehat V}}(s_T=\ell\mid q)
    \lambda_i(\ell;\widehat V)
    =
    P_{0}(\ell\mid q)
    \frac{
        \exp(\widehat V_i(\ell)/\beta)
    }{
        \overline z(q;\widehat V)
    },
\]
where \(P_0(\ell\mid q)\) is the probability of reaching \(\ell\) from \(q\)
under the reference policy \(\pi_{\mathrm{ref}}\). Equivalently,
\[
    \alpha_{i,\ell}(\widehat V)
    =
    \frac{
        P_0(\ell\mid q)
        \exp(\widehat V_i(\ell)/\beta)
    }{
        \sum_{\ell'\in L(q)}
        P_0(\ell'\mid q)
        \sum_{j\in\mathcal N}
        \exp(\widehat V_j(\ell')/\beta)
    }.
\]
\end{lemma}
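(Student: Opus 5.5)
The plan is to combine the path representation of Lemma~\ref{lem:desirability-representation} with a telescoping product along the unique root-to-leaf path. Fix a terminal state \(\ell\in L(q)\) and let \(q=u_0,u_1,\ldots,u_T=\ell\) be the unique path from \(q\) to \(\ell\), with \(u_{t+1}=[u_t,a_t]\).

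Because transitions are deterministic, the probability of reaching \(\ell\) under \(x_{\widehat V}\) is the product of the one-step token probabilities along this path. By Lemma~\ref{lem:desirability-representation}, each factor equals
\[
x_{\widehat V}(a_t\mid u_t)=\pi_{\mathrm{ref}}(a_t\mid u_t)\,\frac{\overline z(u_{t+1};\widehat V)}{\overline z(u_t;\widehat V)}.
\]
The ratios telescope, and the product of the reference factors is \(P_0(\ell\mid q)\). This gives
\[
\Pr_{x_{\widehat V}}(s_T=\ell\mid q)=P_0(\ell\mid q)\,\frac{\overline z(\ell;\widehat V)}{\overline z(q;\widehat V)}.
\]
The same lemma gives the terminal allocation \(\lambda_i(\ell;\widehat V)=\widehat z_i(\ell)/\overline z(\ell;\widehat V)\). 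Multiplying the two expressions cancels \(\overline z(\ell;\widehat V)\), which yields the first displayed formula.

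For the equivalent form, I will show that
\[
\overline z(q;\widehat V)=\sum_{\ell'\in L(q)}P_0(\ell'\mid q)\sum_j \widehat z_j(\ell').
\]
Global Bellman consistency makes each \(\widehat z_j\) harmonic: \(\widehat z_j(u)=\sum_a\pi_{\mathrm{ref}}(a\mid u)\,\widehat z_j([u,a])\) at every non-terminal \(u\succeq q\). Iterating this identity down the finite tree, by induction on the remaining depth \(L-|u|\), gives \(\widehat z_j(u)=\sum_{\ell'\in L(u)}P_0(\ell'\mid u)\,\widehat z_j(\ell')\). Summing over \(j\) gives the claimed identity for \(\overline z(q;\widehat V)\).

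As a consistency check, the \(\alpha_{i,\ell}\) then sum to one over all pairs \((i,\ell)\).

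The only mild obstacle is bookkeeping. Lemma~\ref{lem:desirability-representation} is stated at reached states, but it applies at every \(u_t\) on the path to \(\ell\), since each such state is reached with positive probability (\(\pi_{\mathrm{ref}}\) has full support). Terminal states are leaves, so \(P_0(\cdot\mid q)\) is a genuine distribution on \(L(q)\) and the harmonic induction terminates.
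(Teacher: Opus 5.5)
Your proposal is correct and follows the paper's proof essentially step for step: you telescope the token-rule representation of Lemma~\ref{lem:desirability-representation} along the unique root-to-leaf path, multiply by the terminal posterior, and use the leaf representation of \(\overline z(q;\widehat V)\) implied by Bellman consistency. Your induction on remaining depth for that leaf representation spells out a step the paper only asserts.
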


\begin{proof}
Let the unique path from \(q=s_0\) to \(\ell=s_T\) be
\[
    s_{t+1}=[s_t,a_t],
    \, t=0,\ldots,T-1.
\]
By Lemma~\ref{lem:desirability-representation},
\[
    x_{\widehat V}(a_t\mid s_t)
    =
    \pi_{\mathrm{ref}}(a_t\mid s_t)
    \frac{
        \overline z(s_{t+1};\widehat V)
    }{
        \overline z(s_t;\widehat V)
    }.
\]
Therefore,
\[
\begin{aligned}
    \Pr_{x_{\widehat V}}(s_T=\ell\mid q)
    &=
    \prod_{t=0}^{T-1}
    x_{\widehat V}(a_t\mid s_t) \\
    &=
    \prod_{t=0}^{T-1}
    \pi_{\mathrm{ref}}(a_t\mid s_t)
    \prod_{t=0}^{T-1}
    \frac{\overline z(s_{t+1};\widehat V)}
    {\overline z(s_t;\widehat V)} \\
    &=
    P_0(\ell\mid q)
    \frac{\overline z(\ell;\widehat V)}
    {\overline z(q;\widehat V)}.
\end{aligned}
\]
The terminal allocation equals the final posterior:
\[
    \lambda_i(\ell;\widehat V)
    =
    \rho_i(\ell;\widehat V)
    =
    \frac{\widehat z_i(\ell)}
    {\overline z(\ell;\widehat V)}
    =
    \frac{\exp(\widehat V_i(\ell)/\beta)}
    {\overline z(\ell;\widehat V)}.
\]
Multiplying the two identities gives
\[
    \alpha_{i,\ell}(\widehat V)
    =
    P_0(\ell\mid q)
    \frac{\exp(\widehat V_i(\ell)/\beta)}
    {\overline z(q;\widehat V)}.
\]

Finally, Bellman consistency implies the leaf representation
\[
    \overline z(q;\widehat V)
    =
    \sum_{j\in\mathcal N}\widehat z_j(q)
    =
    \sum_{\ell'\in L(q)}
    P_0(\ell'\mid q)
    \sum_{j\in\mathcal N}
    \exp(\widehat V_j(\ell')/\beta),
\]
which gives the equivalent normalized form.
\end{proof}

\begin{lemma}[Gradient Representation]
\label{lem:gradient-representation}
Define the soft potential
\[
    \Phi_q(\widehat V)
    :=
    \beta\log \overline z(q;\widehat V).
\]
Then the joint allocation rule is the gradient of this potential with respect to
terminal reported values:
\(
    \nabla \Phi_q(\widehat V)
    =
    \alpha(\widehat V),
\)
or equivalently, for every \(i\in\mathcal N\) and \(\ell\in L(q)\),
\(
    \frac{\partial \Phi_q(\widehat V)}
    {\partial \widehat V_i(\ell)}
    =
    \alpha_{i,\ell}(\widehat V).
\)
Consequently, \(\alpha\) is cyclically monotone.
\end{lemma}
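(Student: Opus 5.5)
The plan is to reduce the statement to Lemma~\ref{lem:logsumexp-convex}. The first step is a reparametrization. Because \(\mathcal V\) is finite and the horizon is at most \(L\), the set \(L(q)\) is finite. A globally Bellman-consistent ledger on the subtree rooted at \(q\) is uniquely determined by its terminal values \(\bigl(\widehat V_i(\ell)\bigr)_{i\in\mathcal N,\ell\in L(q)}\): given these, the soft Bellman recursion of Definition~\ref{def:bellman-consistent-report} fixes \(\widehat V_i(u)\) at every interior node by backward induction on depth. Conversely, any terminal vector in \(\mathbb R^{\mathcal N\times L(q)}\) produces such a ledger. So \(\Phi_q\) can be regarded as a function on \(\mathbb R^{\mathcal N\times L(q)}\), and the partial derivatives in the statement are taken in these coordinates, with interior values moving as the recursion dictates.

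Next, I would write \(\Phi_q\) explicitly in these coordinates. Bellman consistency says \(\widehat z_i(u)=\sum_a\pi_{\mathrm{ref}}(a\mid u)\widehat z_i([u,a])\), so each \(\widehat z_i\) is harmonic. Iterating this from \(q\) down to the leaves gives \(\widehat z_i(q)=\sum_{\ell\in L(q)}P_0(\ell\mid q)\exp(\widehat V_i(\ell)/\beta)\); this is the leaf representation already used at the end of Lemma~\ref{lem:joint-allocation-probability}. Summing over \(i\) yields
\[
\Phi_q(\widehat V)=\beta\log\sum_{(j,\ell)\in\mathcal N\times L(q)} P_0(\ell\mid q)\exp\bigl(\widehat V_j(\ell)/\beta\bigr).
\]
This is exactly \(F_\beta\) of Lemma~\ref{lem:logsumexp-convex}. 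The index set is \(\mathcal K=\mathcal N\times L(q)\), the weights are \(w_{j,\ell}=P_0(\ell\mid q)\), and the coordinates are \(x_{j,\ell}=\widehat V_j(\ell)\). The weights are nonnegative and not all zero; in fact they are strictly positive because \(\pi_{\mathrm{ref}}\) has full support, and they sum to one.

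Lemma~\ref{lem:logsumexp-convex} then gives two things. First, \(\Phi_q\) is convex and differentiable. Second, its gradient is
\[
\frac{\partial\Phi_q}{\partial \widehat V_i(\ell)}=\frac{P_0(\ell\mid q)\exp(\widehat V_i(\ell)/\beta)}{\sum_{\ell'}P_0(\ell'\mid q)\sum_j\exp(\widehat V_j(\ell')/\beta)}.
\]
This matches the normalized form of \(\alpha_{i,\ell}(\widehat V)\) in Lemma~\ref{lem:joint-allocation-probability} term by term, which proves \(\nabla\Phi_q=\alpha\).

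For cyclic monotonicity, I would use the standard supporting-hyperplane argument. Convexity and differentiability give \(\Phi_q(y)\ge\Phi_q(x)+\langle\alpha(x),y-x\rangle\) for all \(x,y\). Take any cycle \(x^0,x^1,\ldots,x^m=x^0\) and apply the inequality to each consecutive pair \((x^k,x^{k+1})\). Summing over \(k\), the \(\Phi_q\) terms telescope to zero, which leaves \(\sum_k\langle\alpha(x^k),x^{k+1}-x^k\rangle\le 0\). This is cyclic monotonicity.

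The main obstacle is the reparametrization in the first step. One has to argue that differentiating with respect to terminal values is well defined: the interior ledger entries must be treated as dependent variables determined by the recursion, not as independent coordinates. Once \(\Phi_q\) is written as a leaf sum via iterated harmonicity, everything else is a direct appeal to earlier lemmas.
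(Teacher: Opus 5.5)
Your proposal is correct and follows essentially the same route as the paper. You use the leaf representation $\overline z(q;\widehat V)=\sum_{\ell}P_0(\ell\mid q)\sum_j \exp(\widehat V_j(\ell)/\beta)$ to write $\Phi_q$ as a weighted log-sum-exp in the terminal values, recover the normalized form of $\alpha_{i,\ell}$ from Lemma~\ref{lem:joint-allocation-probability}, and get cyclic monotonicity from the convexity in Lemma~\ref{lem:logsumexp-convex}. The differences are minor. You spell out two points the paper leaves implicit: that terminal values are valid free coordinates for globally Bellman-consistent ledgers, and the telescoping supporting-hyperplane argument that turns $\alpha=\nabla\Phi_q$ into cyclic monotonicity. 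You also read the gradient off Lemma~\ref{lem:logsumexp-convex} rather than differentiating by hand.
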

\begin{proof}
By the leaf representation of the induced desirability functions,
\[
\begin{aligned}
    \overline z(q;\widehat V)
    &=
    \sum_{j\in\mathcal N}\widehat z_j(q) \\
    &=
    \sum_{\ell\in L(q)}
    P_0(\ell\mid q)
    \sum_{j\in\mathcal N}
    \exp\left(\frac{\widehat V_j(\ell)}{\beta}\right).
\end{aligned}
\]
Thus
\[
    \Phi_q(\widehat V)
    =
    \beta\log
    \left[
    \sum_{\ell\in L(q)}
    P_0(\ell\mid q)
    \sum_{j\in\mathcal N}
    \exp\left(\frac{\widehat V_j(\ell)}{\beta}\right)
    \right].
\]
Differentiating directly with respect to \(\widehat V_i(\ell)\) gives
\[
\begin{aligned}
    \frac{\partial \Phi_q(\widehat V)}
    {\partial \widehat V_i(\ell)}
    &=
    \beta
    \frac{
        P_0(\ell\mid q)
        \frac{1}{\beta}
        \exp\left(\widehat V_i(\ell)/\beta\right)
    }{
        \overline z(q;\widehat V)
    } \\
    &=
    P_0(\ell\mid q)
    \frac{
        \exp\left(\widehat V_i(\ell)/\beta\right)
    }{
        \overline z(q;\widehat V)
    } \\
    &=
    \alpha_{i,\ell}(\widehat V),
\end{aligned}
\]
where the last equality follows from
Lemma~\ref{lem:joint-allocation-probability}.

Finally, \(\Phi_q\) is a positive scalar multiple of a log-sum-exp function, and by Lemma~\ref{lem:logsumexp-convex}, it is therefore convex. Since \(\alpha=\nabla \Phi_q\), the allocation rule
\(\alpha\) is cyclically monotone.
\end{proof}

\begin{proposition}[Direct implementability of LAMA]
\label{prop:direct-latent-advertiser-mixture-implementable}
For any prefix \(q\), the direct latent-mixture allocation rule induced by
globally Bellman-consistent reports is implementable with quasi-linear payments.
Therefore truthful reporting \(\widehat V_i=V_i^*\) is DSIC and IR.
\end{proposition}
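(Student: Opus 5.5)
The plan is to reduce everything to a standard ``convex potential'' (Rochet-type) argument over terminal reports. I will use Lemma~\ref{lem:gradient-representation}, which gives the joint allocation rule $\alpha$ as the gradient of the convex potential $\Phi_q$.

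\textbf{Step 1: the direct type space.} A globally Bellman-consistent ledger $\widehat V_i$ on the subtree rooted at $q$ is uniquely determined by its terminal values $(\widehat V_i(\ell))_{\ell\in L(q)}$. Iterating the soft Bellman recursion gives
\[
\widehat V_i(u)=\beta\log\sum_{\ell\in L(u)}P_0(\ell\mid u)\exp\bigl(\widehat V_i(\ell)/\beta\bigr).
\]
So I treat a direct report of advertiser $i$ as a vector $\widehat V_i|_{L(q)}\in\mathbb R^{L(q)}$. By Lemma~\ref{lem:optimal-soft-bellman}, the truthful report is $V_i^*$, whose terminal values equal $r_i$. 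In particular
\[
\Phi_q(\widehat V)=\beta\log\sum_j \exp\bigl(\widehat V_j(q)/\beta\bigr).
\]

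\textbf{Step 2: expected payment in potential form.} Along any realized path $q=s_0,\dots,s_T=\ell$, the instantaneous payments telescope to $\rho_i(\ell)\widehat V_i(\ell)-\rho_i(q)\widehat V_i(q)$. Adding the entry fee cancels the $\rho_i(q)\widehat V_i(q)$ term, so the total realized payment is
\[
\rho_i(\ell)\widehat V_i(\ell)-\Phi_q(\widehat V)+h_i(\widehat V_{-i}),
\qquad
h_i(\widehat V_{-i}):=\beta\log\Bigl(1+\sum_{j\ne i}e^{\widehat V_j(q)/\beta}\Bigr).
\]
The function $h_i$ depends only on the others' root values. Taking expectation over $\ell$ and using Lemma~\ref{lem:joint-allocation-probability} (namely $\Pr(\ell)\rho_i(\ell)=\alpha_{i,\ell}$), the expected payment is
\[
\sum_{\ell}\alpha_{i,\ell}(\widehat V)\widehat V_i(\ell)-\Phi_q(\widehat V)+h_i(\widehat V_{-i}).
\]
The expected gross value is $\sum_\ell\alpha_{i,\ell}(\widehat V)r_i(\ell)$. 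Hence advertiser $i$'s expected utility is
\[
\Phi_q(\widehat V)+\langle\nabla_i\Phi_q(\widehat V),\,r_i-\widehat V_i\rangle-h_i(\widehat V_{-i}).
\]

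\textbf{Step 3: DSIC and IR.} By Lemma~\ref{lem:logsumexp-convex}, $\Phi_q$ is convex in $\widehat V_i$ for fixed $\widehat V_{-i}$. The supporting-hyperplane inequality then bounds the expression in Step 2 by $\Phi_q(r_i,\widehat V_{-i})-h_i(\widehat V_{-i})$. This bound is exactly the truthful utility, so truth-telling is dominant for arbitrary $\widehat V_{-i}$.

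For IR, the truthful utility is
\[
\beta\log\frac{e^{V_i^*(q)/\beta}+\sum_{j\ne i}e^{\widehat V_j(q)/\beta}}{1+\sum_{j\ne i}e^{\widehat V_j(q)/\beta}}.
\]
This is nonnegative because $V_i^*(q)\ge \mathbb E_{P_0}[r_i]\ge 0$: the reference policy is feasible in the definition of $V_i^*$ with zero KL cost, and $r_i\ge0$. Hence $e^{V_i^*(q)/\beta}\ge1$.

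\textbf{Main obstacle.} The delicate step is Step 2. I must verify that the realized LAMA payments, which use ledger parent values and posteriors, aggregate exactly to the potential-based payment. Two points need care. First, the entry fee's $\rho_i(q)\widehat V_i(q)$ must cancel the first telescoped term. Second, $h_i$ must be genuinely independent of $i$'s own report: the ``$1+$'' in $h_i$ plays the role of $i$'s outside-option weight $e^{0/\beta}$, not a function of $\widehat V_i$. Once this identification is checked, the convexity argument is routine.
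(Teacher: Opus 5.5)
Your proposal is correct and follows essentially the same route as the paper: you telescope the pathwise charges plus the entry fee into the potential-based payment $\rho_i(\ell)\widehat V_i(\ell)-\Phi_q(\widehat V)+\Phi^0_{-i,q}(\widehat V_{-i})$, rewrite expected utility as $\Phi_q+\langle\nabla_i\Phi_q,r_i-\widehat V_i\rangle-\Phi^0_{-i,q}$ via the gradient representation, and apply the supporting-hyperplane inequality for the convex $\Phi_q$. The differences are minor: you obtain $e^{V_i^*(q)/\beta}\ge 1$ from feasibility of $\pi_{\mathrm{ref}}$ in the variational definition of $V_i^*$, whereas the paper reads it off the leaf sum $\sum_\ell P_0(\ell\mid q)e^{r_i(\ell)/\beta}\ge 1$, and the paper additionally proves strict dominance, which the statement does not require.
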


\begin{proof}[Proof of Proposition~\ref{prop:direct-latent-advertiser-mixture-implementable}]
Fix advertiser \(i\), true type \(r_i\), opponent reports \(\widehat V_{-i}\),
and an arbitrary globally Bellman-consistent report \(\widehat V_i\). Define the
outside-option offset
\[
    \Phi^0_{-i,q}(\widehat V_{-i})
    :=
    \beta\log
    \left(
        1+
        \sum_{j\ne i}\widehat z_j(q)
    \right).
\]
The direct payment corresponding to the pathwise LAMA payment rule is the
terminal-settled charge
\[
    p_i(\ell;\widehat V)
    :=
    \lambda_i(\ell;\widehat V)\widehat V_i(\ell)
    -
    \Phi_q(\widehat V)
    +
    \Phi^0_{-i,q}(\widehat V_{-i}).
\]
Indeed, the sequential entry fee equals
\(
    \rho_i(q;\widehat V)\widehat V_i(q)
    -\Phi_q(\widehat V)
    +\Phi^0_{-i,q}(\widehat V_{-i})
\), while the subsequent continuation-value charges telescope to
\(
    \lambda_i(\ell;\widehat V)\widehat V_i(\ell)
    -
    \lambda_i(q;\widehat V)\widehat V_i(q)
\).

Advertiser \(i\)'s expected utility under report \(\widehat V_i\) is therefore
\[
\begin{aligned}
    U_i(q;r_i,\widehat V_i,\widehat V_{-i})
    = &
    \sum_{\ell\in L(q)}
    \Pr_{x_{\widehat V}}(s_T=\ell\mid q)
    \lambda_i(\ell;\widehat V)
    \bigl(r_i(\ell)-\widehat V_i(\ell)\bigr)
    \\
    & +
    \Phi_q(\widehat V)
    -
    \Phi^0_{-i,q}(\widehat V_{-i}) \\
    = &
    \sum_{\ell\in L(q)}
    \alpha_{i,\ell}(\widehat V)
    \bigl(r_i(\ell)-\widehat V_i(\ell)\bigr)
    +
    \Phi_q(\widehat V)
    -
    \Phi^0_{-i,q}(\widehat V_{-i}).
\end{aligned}
\]
By Lemma~\ref{lem:gradient-representation},
\[
    \alpha_{i,\ell}(\widehat V)
    =
    \frac{\partial \Phi_q(\widehat V)}
    {\partial \widehat V_i(\ell)}.
\]
Hence, with the inner product taken over terminal states \(L(q)\),
\[
\begin{aligned}
    U_i(q;r_i,\widehat V_i,\widehat V_{-i})
    = &
    \Phi_q(\widehat V_i,\widehat V_{-i})
    +
    \left\langle
        \nabla_i\Phi_q(\widehat V_i,\widehat V_{-i}),
        r_i-\widehat V_i|_{L(q)}
    \right\rangle
    \\ & -
    \Phi^0_{-i,q}(\widehat V_{-i}).
\end{aligned}
\]

Since \(\Phi_q\) is convex in advertiser \(i\)'s terminal reported values when
\(\widehat V_{-i}\) is fixed, and since the truthful Bellman-consistent report
\(V_i^*\) satisfies \(V_i^*(\ell)=r_i(\ell)\) for every \(\ell\in L(q)\),
\[
    \Phi_q(V_i^*,\widehat V_{-i})
    \ge
    \Phi_q(\widehat V_i,\widehat V_{-i})
    +
    \left\langle
        \nabla_i\Phi_q(\widehat V_i,\widehat V_{-i}),
        r_i-\widehat V_i|_{L(q)}
    \right\rangle .
\]
Therefore,
\[
    U_i(q;r_i,\widehat V_i,\widehat V_{-i})
    \le
    \Phi_q(V_i^*,\widehat V_{-i})
    -
    \Phi^0_{-i,q}(\widehat V_{-i}).
\]
When \(\widehat V_i=V_i^*\), the inequality is tight, so truthful reporting is a
dominant strategy.

The same argument also gives strict dominance under the full-support
conditions. Since \(|\mathcal N|\ge2\) and \(P_0(\ell\mid q)>0\) for every
\(\ell\in L(q)\). Fixing \(\widehat V_{-i}\), write
\[
    B_{-i}(\widehat V_{-i})
    :=
    \sum_{\ell\in L(q)}
    P_0(\ell\mid q)
    \sum_{j\ne i}
    \exp(\widehat V_j(\ell)/\beta).
\]
Then \(B_{-i}(\widehat V_{-i})>0\), and as a function of advertiser \(i\)'s
terminal report \(y=(y_\ell)_{\ell\in L(q)}\),
\[
    \Phi_q(y,\widehat V_{-i})
    =
    \beta\log
    \left(
        B_{-i}(\widehat V_{-i})
        +
        \sum_{\ell\in L(q)}
        P_0(\ell\mid q)\exp(y_\ell/\beta)
    \right),
\]
Following the proof of Lemma~\ref{lem:logsumexp-convex}, its Hessian is
\(
    \frac{1}{\beta}
    \left(\operatorname{diag}(p)-pp^\top\right),
\)
where \( p_\ell=
    \frac{P_0(\ell\mid q)\exp(y_\ell/\beta)}
    {B_{-i}(\widehat V_{-i})+\sum_{\ell'}P_0(\ell'\mid q)\exp(y_{\ell'}/\beta)} \). 
For any non-zero vector $z=(z_\ell)_{\ell\in L(q)}$, we have 
\(
z^\top(\operatorname{diag}(p)-pp^\top)z = 
\sum_\ell p_\ell z_\ell^2-\left(\sum_\ell p_\ell z_\ell\right)^2.
\)
Let $s=\sum_\ell p_\ell<1,\, \bar p_\ell=\frac{p_\ell}{s}$, then

\[
\begin{aligned}
\sum_\ell p_\ell z_\ell^2-\left(\sum_\ell p_\ell z_\ell\right)^2 & = s\sum_\ell \bar p_\ell z_\ell^2 - s^2\left(\sum_\ell \bar p_\ell z_\ell\right)^2 \\
& = s\operatorname{Var}_{\bar p}(z)+s(1-s)\left(\mathbb E_{\bar p}[z]\right)^2 > 0.
\end{aligned}    
\]

Thus this Hessian is positive definite. Hence the convexity inequality above is strict
whenever \(\widehat V_i|_{L(q)}\ne r_i\). Since global Bellman consistency
uniquely determines $V^*$ from terminal values, truthful reporting
strictly dominates every distinct globally Bellman-consistent report.

Finally, truthful utility is
\[
    U_i(q;r_i,V_i^*,\widehat V_{-i})
    =
    \Phi_q(V_i^*,\widehat V_{-i})
    -
    \Phi^0_{-i,q}(\widehat V_{-i}).
\]
Because \(r_i(\ell)\ge 0\), Bellman consistency gives
\[
    z_i^*(q)
    =
    \sum_{\ell\in L(q)}
    P_0(\ell\mid q)
    \exp(r_i(\ell)/\beta)
    \ge
    1.
\]
Thus
\[
    \overline z(q;V_i^*,\widehat V_{-i})
    =
    z_i^*(q)+\sum_{j\ne i}\widehat z_j(q)
    \ge
    1+
    \sum_{j\ne i}\widehat z_j(q).
\]
Since \(\beta\log(\cdot)\) is increasing,
\[
    \Phi_q(V_i^*,\widehat V_{-i})
    \ge
    \Phi^0_{-i,q}(\widehat V_{-i}),
\]
and truthful expected utility is nonnegative.
\end{proof}

\begin{lemma}[Subtree patching]
\label{lem:subtree-patching}
Fix a reachable platform history \(h\) with current non-terminal state \(s\),
after advertiser \(i\) has reported truthfully before \(h\). Every feasible
continuation deviation \(\sigma_i'\) from \(h\) can be embedded in a globally
Bellman-consistent direct report \(\widetilde V_i\) on the full tree rooted at
\(q\): the report \(\widetilde V_i\) agrees with \(V_i^*\) outside the subtree
rooted at \(s\) and reproduces \(\sigma_i'\) inside that subtree. Holding
opponents fixed, conditional on \(h\), the patched report induces the same
continuation token distribution, terminal allocation, and future payments as
\(\sigma_i'\).
\end{lemma}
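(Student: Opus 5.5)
The plan is to build \(\widetilde V_i\) explicitly and then check two things: global Bellman consistency, and that the online mechanism cannot tell \(\sigma_i'\) apart from the patched report.

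\textbf{Step 1: construction.} A continuation deviation \(\sigma_i'\) from \(h\) is a map from public states to messages. Since \(\sigma_i'\) depends only on the public state, it assigns to every non-terminal descendant \(u\succeq s\) a child-value vector \(\sigma_i'(r_i,u)\in\mathbb R^{|\mathcal V|}\). We define \(\widetilde V_i\) by three rules:
\begin{itemize}
\item \(\widetilde V_i(u):=V_i^*(u)\) for every \(u\) not a strict descendant of \(s\).
\item For \(s\ne q\), \(\widetilde V_i(s):=V_i^*(s)\). For \(s=q\), \(\widetilde V_i(q):=\beta\log\sum_a\pi_{\mathrm{ref}}(a\mid q)\exp(\sigma_i'(r_i,q)_a/\beta)\).
\item \(\widetilde V_i([u,a]):=\sigma_i'(r_i,u)_a\) for every non-terminal \(u\succeq s\).
\end{itemize}
The tree structure makes this well defined: each state \(w\ne q\) has a unique parent, so its value is assigned exactly once, either by the parent's report (if the parent lies in the subtree of \(s\)) or by \(V_i^*\).

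\textbf{Step 2: global Bellman consistency.} We check the soft Bellman equation at every non-terminal \(u\) by cases.
\begin{itemize}
\item If \(u\) is outside the subtree of \(s\), all children of \(u\) are also outside that subtree. Then \(\widetilde V_i\) coincides with \(V_i^*\) at \(u\) and its children, and Lemma~\ref{lem:optimal-soft-bellman} gives consistency.
\item If \(u\) is a strict descendant of \(s\), then \(\widetilde V_i(u)\) equals the entry reported at \(u\)'s parent, which is exactly the ledger value stored for \(u\). Feasibility of \(\sigma_i'(r_i,u)\in\mathcal M_i\) (Definition~\ref{def:bellman-consistent-report}) says the soft aggregation of its entries equals this ledger value. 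So consistency holds at \(u\).
\item If \(u=s\ne q\), the ledger value at \(s\) is \(V_i^*(s)\) because \(i\) was truthful along the prefix, and the ledger stores exactly the previously reported child value. Feasibility then gives consistency.
\item If \(u=s=q\), consistency holds by the definition of \(\widetilde V_i(q)\).
\end{itemize}

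\textbf{Step 3: identical outcomes.} Conditional on \(h\), at each reached \(u\succeq s\) the online mechanism receives \(\sigma_i'(r_i,u)\), which equals \((\widetilde V_i([u,a]))_a\). So \(\widehat\pi_i\), the mixture \(x\), and the Bayes update coincide step by step, starting from the same \(\rho(s)\), which is already fixed by \(h\). The terminal allocation \(\rho(\ell)\) therefore also coincides.

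Future payments match as well. Each payment \(\rho_i([u,a])\widehat V_i^u([u,a])-\rho_i(u)\widehat V_i(u)\) uses the ledger values, and these equal \(\widetilde V_i\) along the path. When \(s=q\), the entry fee is computed from \(\widehat V_i(q)=\beta\log Z_i(q)=\widetilde V_i(q)\), so it matches too.

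Opponents are held fixed, and opponent strategies depend only on public states. Hence every quantity in \(U_i\) agrees.

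\textbf{Main obstacle.} The delicate point is the boundary at \(s\): the root value of the patched subtree must match the ancestor structure outside it. For \(s\ne q\) this hinges on feasibility pinning \(\widehat V_i(s)\) to the truthfully recorded \(V_i^*(s)\), and I would verify carefully that the ledger stores the parent's reported child entry. For \(s=q\) there is no ancestor to match, which is why that case must be handled separately.

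A secondary point is that the direct mechanism is defined on the whole tree while the online one only visits reached states. Off-path values of \(\widetilde V_i\) affect nothing except through Bellman consistency, and consistency is exactly what Lemma~\ref{lem:desirability-representation} requires.
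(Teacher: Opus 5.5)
Your proposal is correct and follows essentially the paper's proof: you patch the deviation's child values into \(V_i^*\) below \(s\), get global Bellman consistency from feasibility (with the truthfully stored value \(V_i^*(s)\) pinning the boundary when \(s\ne q\)), and observe that the online mechanism then sees identical child-value vectors, posterior, and ledger values, so the token rule, allocation, and payments coincide. One small slip: in the first case of Step~2, the parent of \(s\) lies outside the subtree but has \(s\) as a child, so the claim that all its children are outside fails there; the conclusion still holds only because you set \(\widetilde V_i(s):=V_i^*(s)\). Relatedly, your first construction rule should exclude \(s=q\) so that it does not clash with your second rule.
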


\begin{proof}[Proof of Lemma~\ref{lem:subtree-patching}]
For every non-terminal \(u\succeq s\), let
\(\widehat V_i^{\sigma'}([u,a])\) be the \(a\)-component submitted by
\(\sigma_i'\) at \(u\), and define the value at \(s\) by its soft Bellman
aggregate. Feasibility implies that these values are Bellman-consistent on the
subtree rooted at \(s\). If \(s\ne q\), it also implies
\(
    \widehat V_i^{\sigma'}(s)=V_i^*(s)
\), because the latter is the value already stored in the ledger after the
truthful prefix.

Now patch this subtree into the truthful direct report:
\[
    \widetilde V_i(u)
    :=
    \begin{cases}
        \widehat V_i^{\sigma'}(u), & u\succeq s,\\
        V_i^*(u), & u\not\succeq s.
    \end{cases}
\]
Lemma~\ref{lem:optimal-soft-bellman} gives Bellman consistency of \(V_i^*\).
When \(s\ne q\), the equality at \(s\) leaves every ancestor's Bellman equation
unchanged; when \(s=q\), there are no ancestors. Feasibility gives every
Bellman equation within the patched subtree. Hence \(\widetilde V_i\) is
globally Bellman-consistent on the tree rooted at \(q\).

Before reaching \(s\), the patched and truthful ledgers coincide. From \(s\)
onward, the patched child values are exactly those submitted by \(\sigma_i'\).
Therefore, against fixed opponent reports, the posterior at \(h\) is unchanged,
and Lemma~\ref{lem:desirability-representation} gives the same continuation
token rule and terminal allocation. The instantaneous payment rule then gives
the same payment on every continuation transition. This proves the claimed
conditional equivalence.
\end{proof}

\begin{proof}[Proof of Theorem~\ref{thm:latent-advertiser-mixture-markov-dsic}]
Lemma~\ref{lem:optimal-soft-bellman} first shows that the truthful child-value
reports are feasible. Fix advertiser \(i\), opponents' feasible strategies
\(\sigma_{-i}\), and a reachable history \(h\) with current state \(s\), after
\(i\) has reported truthfully before \(h\). Recursively recording the child
values submitted by each opponent gives a globally Bellman-consistent direct
ledger \(\widehat V_{-i}\). By
Lemma~\ref{lem:desirability-representation} and the telescoping payment rule,
the sequential mechanism is pathwise equivalent to the direct implementation
under these ledgers.

Let \(\sigma_i'\) be any feasible continuation deviation. By
Lemma~\ref{lem:subtree-patching}, it induces a globally Bellman-consistent direct
report \(\widetilde V_i\) that differs from \(V_i^*\) only on the subtree rooted
at \(s\). Proposition~\ref{prop:direct-latent-advertiser-mixture-implementable}
therefore implies
\[
\begin{aligned}
    0
    &\le
    U_i(q;r_i,V_i^*,\widehat V_{-i})
    -
    U_i(q;r_i,\widetilde V_i,\widehat V_{-i}) \\
    &=
    \Pr(h)
    \Bigl[
        U_i^\Gamma
        (h;r_i,\sigma_i^{\mathrm{tr}},\sigma_{-i};r_{-i})
        -
        U_i^\Gamma
        (h;r_i,\sigma_i',\sigma_{-i};r_{-i})
    \Bigr].
\end{aligned}
\]
Here \(\Pr(h)\) is the common probability of reaching \(h\). The equality holds
because the two direct reports coincide before \(h\) and on every branch not
passing through \(s\); past payments cancel, and conditional on \(h\) the
patched report reproduces \(\sigma_i'\). Since \(h\) is reachable,
\(\Pr(h)>0\), so the bracketed continuation-utility difference is nonnegative.
This proves Markov DSIC.

Finally, from the initial query, the truthful sequential outcome and payments
are pathwise equivalent to the direct report \((V_i^*,\widehat V_{-i})\).
The IR part of
Proposition~\ref{prop:direct-latent-advertiser-mixture-implementable} therefore
gives
\(
    U_i^\Gamma
    (q;r_i,\sigma_i^{\mathrm{tr}},\sigma_{-i};r_{-i})\ge 0
\), proving IR.
\end{proof}

\begin{proof}[Proof of Theorem~\ref{thm:latent-advertiser-mixture-efficiency}]
For every terminal state \(\ell\in L(q)\), define
\[
    G_\beta(\ell)
    :=
    \beta\log\left(
        \sum_{i\in\mathcal N}
        \exp\left(\frac{r_i(\ell)}{\beta}\right)
    \right).
\]

We first characterize the terminal allocation induced by LAMA.
Under truthful reporting, the terminal boundary condition of each
advertiser-specific value function is
\[
    V_i^*(\ell)=r_i(\ell).
\]
Therefore, by Lemma~\ref{lem:desirability-representation}, the terminal
posterior, and hence the terminal allocation, is
\[
    \lambda_i^\beta(\ell)
    =
    \frac{\exp(V_i^*(\ell)/\beta)}
         {\sum_{j\in\mathcal N}\exp(V_j^*(\ell)/\beta)}
    =
    \frac{\exp(r_i(\ell)/\beta)}
         {\sum_{j\in\mathcal N}\exp(r_j(\ell)/\beta)}.
\]

This allocation is precisely the optimizer of the entropy-regularized
terminal allocation problem. Indeed, relative to the uniform distribution on
\(\mathcal N\),
\[
    \operatorname{\mathbb{D}_{KL}}
    \bigl(\lambda\,\Vert\,U(\mathcal N)\bigr)
    =
    \log|\mathcal N|-\mathcal H(\lambda),
\]
and hence
\[
    \sum_{i\in\mathcal N}\lambda_i r_i(\ell)
    +
    \beta\mathcal H(\lambda)
    =
    \sum_{i\in\mathcal N}\lambda_i r_i(\ell)
    -
    \beta\operatorname{\mathbb{D}_{KL}}
    \bigl(\lambda\,\Vert\,U(\mathcal N)\bigr)
    +
    \beta\log|\mathcal N|.
\]
Thus the entropy-regularized objective differs by the constant
\(\beta\log|\mathcal N|\) from the KL-regularized one-step problem in
Lemma~\ref{lem:optimal-soft-bellman}, with action set \(\mathcal N\) and
terminal reward \(r_i(\ell)\) for action \(i\). Applying the lemma and adding
the same constant to its optimal value gives
\begin{equation}
\label{eq:lama-terminal-variational}
\begin{aligned}
    G_\beta(\ell)
    &=
    \max_{\lambda\in\Delta(\mathcal N)}
    \left\{
        \sum_{i\in\mathcal N}\lambda_i r_i(\ell)
        +
        \beta\mathcal H(\lambda)
    \right\},                                                     \\
    G_\beta(\ell)
    &=
    \sum_{i\in\mathcal N}\lambda_i^\beta(\ell)r_i(\ell)
    +
    \beta\mathcal H(\lambda^\beta(\ell)).
\end{aligned}
\end{equation}
The maximizer is unique because Shannon entropy is strictly concave on the
simplex.

We next characterize the token policy. Define
\[
    W_\beta(u)
    :=
    \beta\log\left(
        \sum_{i\in\mathcal N}
        \exp\left(\frac{V_i^*(u)}{\beta}\right)
    \right).
\]
At a terminal state \(\ell\), the terminal boundary condition gives
\[
    W_\beta(\ell)=G_\beta(\ell).
\]
Moreover, Lemma~\ref{lem:optimal-soft-bellman} implies, at every non-terminal
state \(u\),
\begin{equation}
\label{eq:lama-aggregate-bellman}
    \exp\left(\frac{W_\beta(u)}{\beta}\right)
    =
    \sum_{a\in\mathcal V}
    \pi_{\mathrm{ref}}(a\mid u)
    \exp\left(\frac{W_\beta([u,a])}{\beta}\right).
\end{equation}
Indeed, this follows by exponentiating each advertiser-specific soft Bellman
recursion and then summing over advertisers. By
Lemma~\ref{lem:desirability-representation}, the truthful LAMA token policy is
\[
    x^\beta(a\mid u)
    =
    \frac{
        \pi_{\mathrm{ref}}(a\mid u)
        \exp(W_\beta([u,a])/\beta)
    }{
        \sum_{a'\in\mathcal V}
        \pi_{\mathrm{ref}}(a'\mid u)
        \exp(W_\beta([u,a'])/\beta)
    }.
\]
Together with the terminal boundary condition,
Equation~\eqref{eq:lama-aggregate-bellman} identifies \(W_\beta\) as the
optimal soft value function for terminal reward \(G_\beta\). Hence
Lemma~\ref{lem:optimal-soft-bellman} gives
\begin{equation}
\label{eq:lama-relaxed-token-control}
    x^\beta
    \in
    \arg\max_x
    \mathbb E_{\tau\sim x(\cdot\mid q)}
    \left[
        G_\beta(\ell(\tau))
        -
        \beta
        \sum_{t=0}^{T_\tau-1}
        \log
        \frac{x(a_t\mid s_t)}
             {\pi_{\mathrm{ref}}(a_t\mid s_t)}
    \right].
\end{equation}

Combining the terminal optimization with the outer token-control
problem proves the first claim. Indeed, for any fixed token policy
\(x\), the choice of \(\lambda(\tau)\) is separable across terminal
trajectories. By Equation~\eqref{eq:lama-terminal-variational},
\[
\begin{aligned}
    &\max_{\lambda}
    \mathbb E_{\tau\sim x(\cdot\mid q)}
    \left[
    \begin{aligned}
        &\sum_{i\in\mathcal N}\lambda_i(\tau)r_i(\ell(\tau))
        +
        \beta\mathcal H(\lambda(\tau))
        \\
        &\quad-
        \beta\sum_{t=0}^{T_\tau-1}
        \log
        \frac{x(a_t\mid s_t)}
             {\pi_{\mathrm{ref}}(a_t\mid s_t)}
    \end{aligned}
    \right]                                                       \\
    &\qquad =
    \mathbb E_{\tau\sim x(\cdot\mid q)}
    \left[
        G_\beta(\ell(\tau))
        -
        \beta\sum_{t=0}^{T_\tau-1}
        \log
        \frac{x(a_t\mid s_t)}
             {\pi_{\mathrm{ref}}(a_t\mid s_t)}
    \right].
\end{aligned}
\]
The inner maximizer is \(\lambda^\beta\), while
Equation~\eqref{eq:lama-relaxed-token-control} shows that the outer maximizer is
\(x^\beta\). Hence
\((x^\beta,\lambda^\beta)\) maximizes the stated entropy-regularized
welfare.

We now prove the welfare bound. For every
\(\lambda\in\Delta(\mathcal N)\), each term
\(-\lambda_i\log\lambda_i\) is nonnegative, while
\[
    \log|\mathcal N|-\mathcal H(\lambda)
    =
    \operatorname{\mathbb{D}_{KL}}
    \bigl(\lambda\,\Vert\, U(\mathcal N)\bigr)
    \geq 0.
\]
Therefore, in particular,
\[
    0
    \leq
    \mathcal H(\lambda^\beta(\ell))
    \leq
    \log|\mathcal N|.
\]
Moreover, for every \(i\in\mathcal N\),
\[
\begin{aligned}
    G_\beta(\ell)
    &=
    \beta\log
    \sum_{j\in\mathcal N}
    \exp\left(\frac{r_j(\ell)}{\beta}\right)                    \\
    &\geq
    \beta\log
    \exp\left(\frac{r_i(\ell)}{\beta}\right)
    =
    r_i(\ell).
\end{aligned}
\]
Taking the maximum over \(i\) gives
\(G_\beta(\ell)\geq\max_{i\in\mathcal N}r_i(\ell)\). Using
Equations~\eqref{eq:lama-terminal-variational} and
\eqref{eq:lama-relaxed-token-control}, we obtain
\[
\begin{aligned}
    &\mathrm{SW}_\beta
        (x^\beta,\lambda^\beta\mid q,r)                          \\
    = &
    \sup_x
    \mathbb E_{\tau\sim x(\cdot\mid q)}
    \left[
        G_\beta(\ell(\tau))
        -
        \beta\sum_{t=0}^{T_\tau-1}
        \log
        \frac{x(a_t\mid s_t)}
             {\pi_{\mathrm{ref}}(a_t\mid s_t)}
    \right]
    \\ &-
    \beta\,
    \mathbb E_{\tau\sim x^\beta(\cdot\mid q)}
    \left[\mathcal H(\lambda^\beta(\tau))\right]                \\
    \geq &
    \sup_x
    \mathbb E_{\tau\sim x(\cdot\mid q)}
    \left[
        G_\beta(\ell(\tau))
        -
        \beta\sum_{t=0}^{T_\tau-1}
        \log
        \frac{x(a_t\mid s_t)}
             {\pi_{\mathrm{ref}}(a_t\mid s_t)}
    \right]
    \\ &-
    \beta\log|\mathcal N|                                      \\
    \geq &
    \sup_x
    \mathbb E_{\tau\sim x(\cdot\mid q)}
    \left[
        \max_{i\in\mathcal N}r_i(\ell(\tau))
        -
        \beta\sum_{t=0}^{T_\tau-1}
        \log
        \frac{x(a_t\mid s_t)}
             {\pi_{\mathrm{ref}}(a_t\mid s_t)}
    \right]
    \\ &-
    \beta\log|\mathcal N|                                      \\
    = &
    \sup_{\pi,\lambda}
    \mathrm{SW}_\beta(\pi,\lambda\mid q,r)
    -
    \beta\log|\mathcal N|.
\end{aligned}
\]
The last equality follows because, conditional on each terminal state, the
simplex first-best assigns the opportunity to an advertiser with maximal
value. Since
\((x^\beta,\lambda^\beta)\) is itself feasible for the original
simplex-allocation problem,
\[
    \mathrm{SW}_\beta
        (x^\beta,\lambda^\beta\mid q,r)
    \leq
    \sup_{\pi,\lambda}
    \mathrm{SW}_\beta(\pi,\lambda\mid q,r).
\]
Combining the last two displays gives
\[
    0
    \leq
    \sup_{\pi,\lambda}
    \mathrm{SW}_\beta(\pi,\lambda\mid q,r)
    -
    \mathrm{SW}_\beta
        (x^\beta,\lambda^\beta\mid q,r)
    \leq
    \beta\log|\mathcal N|.
\]

Finally, let \(\ell^*\) be a maximizing terminal state satisfying
\(P_0(\ell^*\mid q)>0\). Applying
Lemma~\ref{lem:optimal-soft-bellman} to the terminal reward
\(\max_{i\in\mathcal N}r_i(\ell)\) gives
\[
    \sup_{\pi,\lambda}
    \mathrm{SW}_\beta(\pi,\lambda\mid q,r)
    =
    \beta\log
    \sum_{\ell\in L(q)}
    P_0(\ell\mid q)
    \exp\left(
        \frac{\max_{i\in\mathcal N}r_i(\ell)}{\beta}
    \right).
\]
It follows that
\[
\begin{aligned}
    & \max_{\ell\in L(q)}\max_{i\in\mathcal N}r_i(\ell)
    +
    \beta\log P_0(\ell^*\mid q)
    \\ \leq&
    \sup_{\pi,\lambda}
    \mathrm{SW}_\beta(\pi,\lambda\mid q,r)
    \leq
    \max_{\ell\in L(q)}\max_{i\in\mathcal N}r_i(\ell).
\end{aligned}
\]
Both bounds converge to
\(\max_{\ell\in L(q)}\max_{i\in\mathcal N}r_i(\ell)\) as
\(\beta\to0^+\). Together with the welfare-gap bound, the squeeze theorem
yields
\[
    \mathrm{SW}_\beta
        (x^\beta,\lambda^\beta\mid q,r)
    \to
    \max_{\ell\in L(q)}\max_{i\in\mathcal N}r_i(\ell).
\]
The welfare-gap inequality is pointwise in \(q\), so taking expectation
over \(q\sim\mathcal D\) preserves the same
\(\beta\log|\mathcal N|\) bound.
\end{proof}

\begin{proof}[Proof of Theorem~\ref{thm:value-report-consistency}]
Fix an advertiser-query pair $(i,q)$, and let $\pi^\dagger$ be any population
minimizer of the local-advantage objective. Define
\[
    \widehat A_i^\dagger(s,a)
    :=
    \beta\log
    \frac{\pi_i^\dagger(a\mid s)}
         {\pi_{\mathrm{ref}}(a\mid s)}.
\]
For a completed response
$y=(a_0,\ldots,a_{T_y-1})$, where
$s_0=q$, $s_{t+1}=[s_t,a_t]$, and $s_{T_y}=\ell_y$, define its
cumulative score
\[
    G_i^\dagger(q,y)
    :=
    \sum_{t=0}^{T_y-1}
    \widehat A_i^\dagger(s_t,a_t).
\]
By autoregressive factorization,
\[
\begin{aligned}
    G_i^\dagger(q,y)
    &=
    \sum_{t=0}^{T_y-1}
    \beta\log
    \frac{\pi_i^\dagger(a_t\mid s_t)}
         {\pi_{\mathrm{ref}}(a_t\mid s_t)}
\\
    &=
    \beta\log
    \frac{\pi_i^\dagger(y\mid q)}
         {\pi_{\mathrm{ref}}(y\mid q)}.
\end{aligned}
\]
For a comparison pair $(y,y')$, let
\[
    \Delta_i^\dagger(q;y,y')
    :=
    G_i^\dagger(q,y)-G_i^\dagger(q,y').
\]
The Bradley-Terry target is
\[
    \omega_i(q;y,y')
    =
    \sigma\left(
        r_i(\ell_y)-r_i(\ell_{y'})
    \right).
\]
For a predicted logit $\Delta$, abbreviate
\(
    p:=\omega_i(q;y,y')
\)
and
\(
    \widehat p_\Delta:=\sigma(\Delta).
\)
Since \(\sigma(-\Delta)=1-\widehat p_\Delta\), the binary cross-entropy satisfies
\[
\begin{aligned}
&-p\log \widehat p_\Delta-(1-p)\log(1-\widehat p_\Delta)
\\
&=
\bigl[-p\log p-(1-p)\log(1-p)\bigr]
\\
&\quad+
p\log\frac{p}{\widehat p_\Delta}
+(1-p)\log\frac{1-p}{1-\widehat p_\Delta}
\\
&=
\mathcal H\bigl(\operatorname{Bern}(p)\bigr)
+
\operatorname{\mathbb{D}_{KL}}
\bigl(
    \operatorname{Bern}(p)
    \,\Vert\,
    \operatorname{Bern}(\widehat p_\Delta)
\bigr).
\end{aligned}
\]
The middle equality adds and subtracts the Bernoulli entropy terms. By
nonnegativity of the KL divergence, the loss is minimized exactly when
\(\widehat p_\Delta=p\).

By Lemma~\ref{lem:optimal-soft-bellman}, the optimal policy satisfies
\[
    \beta\log
    \frac{\pi_i^*(a\mid s)}
         {\pi_{\mathrm{ref}}(a\mid s)}
    =
    V_i^*([s,a])-V_i^*(s).
\]
Telescoping along a completed response gives
\[
    \beta\log
    \frac{\pi_i^*(y\mid q)}
         {\pi_{\mathrm{ref}}(y\mid q)}
    =
    r_i(\ell_y)-V_i^*(q).
\]
Hence \(\pi_i^*\) induces the target logit
\(r_i(\ell_y)-r_i(\ell_{y'})\). Because \(\pi_i^*\) is feasible, the pointwise
lower bound is attainable, so every population minimizer satisfies
\[
    \sigma\left(\Delta_i^\dagger(q;y,y')\right)
    =
    \sigma\left(
        r_i(\ell_y)-r_i(\ell_{y'})
    \right)
\]
for every pair of completed responses. Since \(\sigma\) is injective,
\[
    G_i^\dagger(q,y)-G_i^\dagger(q,y')
    =
    r_i(\ell_y)-r_i(\ell_{y'}).
\]
Therefore there exists a constant $c_i(q)$ such that
\begin{equation}
\label{eq:exact-recovery-path-score}
    G_i^\dagger(q,y)
    =
    r_i(\ell_y)+c_i(q).
\end{equation}

Equation~\eqref{eq:exact-recovery-path-score} identifies cumulative scores up
to the query-specific constant \(c_i(q)\). We next determine this constant from
token-policy normalization.

Normalization of \(\pi_i^\dagger\) implies, at every reachable non-terminal
prefix $s$,
\begin{equation}
\label{eq:exact-recovery-normalization}
    \sum_{a\in\mathcal V}
    \pi_{\mathrm{ref}}(a\mid s)
    \exp\left(
        \frac{\widehat A_i^\dagger(s,a)}{\beta}
    \right)
    =
    \sum_{a\in\mathcal V}\pi_i^\dagger(a\mid s)
    =1.
\end{equation}
Because token states are prefixes, every reachable state has a unique
path from the root. For $s_t=[q,a_{<t}]$, define
\begin{equation}
\label{eq:exact-recovery-candidate-value}
\begin{aligned}
    \widetilde V_i(s_t)
    &:=
    -c_i(q)
    +
    \sum_{k=0}^{t-1}
    \widehat A_i^\dagger(s_k,a_k),
\\
    \widetilde V_i([s,a])
    &:=
    \widetilde V_i(s)+\widehat A_i^\dagger(s,a).
\end{aligned}
\end{equation}
At a terminal state $\ell_y$, Equation~\eqref{eq:exact-recovery-path-score}
gives
\[
    \widetilde V_i(\ell_y)
    =
    -c_i(q)+G_i^\dagger(q,y)
    =
    r_i(\ell_y).
\]
For every non-terminal prefix $s$,
Equations~\eqref{eq:exact-recovery-normalization}
and~\eqref{eq:exact-recovery-candidate-value} imply
\[
\begin{aligned}
    &\beta\log
    \sum_{a\in\mathcal V}
    \pi_{\mathrm{ref}}(a\mid s)
    \exp\left(
        \frac{\widetilde V_i([s,a])}{\beta}
    \right)
\\
    &=
    \widetilde V_i(s)
    +
    \beta\log
    \sum_{a\in\mathcal V}
    \pi_{\mathrm{ref}}(a\mid s)
    \exp\left(
        \frac{\widehat A_i^\dagger(s,a)}{\beta}
    \right)
    =
    \widetilde V_i(s).
\end{aligned}
\]
Thus \(\widetilde V_i\) satisfies the soft Bellman system in
Lemma~\ref{lem:optimal-soft-bellman}. Its finite-horizon solution is unique by
backward induction from the terminal boundary, so
\[
    \widetilde V_i(s)=V_i^*(s)
\]
at every reachable state. In particular,
\begin{equation}
\label{eq:exact-recovery-root-constant}
    -c_i(q)=\widetilde V_i(q)=V_i^*(q).
\end{equation}

We can now identify the local scores. For a fixed pair \((s,a)\), write a
continuation trajectory from the child state as
\(\tau=(u_0,b_0,\ldots,u_{T_\tau})\), where \(u_0=[s,a]\). The optimal
soft advantage \(A_i^*(s,a)\)of committing to \(a\) at \(s\) is commonly defined as
\[
\begin{aligned}
    &\sup_{\pi}
    \mathbb E_{\tau\sim\pi(\cdot\mid u_0)}
    \left[
        r_i(u_{T_\tau})
        -
        \beta
        \sum_{k=0}^{T_\tau-1}
        \log
        \frac{\pi(b_k\mid u_k)}
             {\pi_{\mathrm{ref}}(b_k\mid u_k)}
    \right]
    -V_i^*(s).
\end{aligned}
\]
Because the transition is deterministic and there are no intermediate rewards,
the supremum is \(V_i^*([s,a])\). Thus
\(A_i^*(s,a)=V_i^*([s,a])-V_i^*(s)\), and
Equation~\eqref{eq:exact-recovery-candidate-value} gives
\begin{equation}
\label{eq:exact-recovery-advantage}
\begin{aligned}
    \widehat A_i^\dagger(s,a)
    =
    \widetilde V_i([s,a])-\widetilde V_i(s)
    =
    V_i^*([s,a])-V_i^*(s)
    =
    A_i^*(s,a).
\end{aligned}
\end{equation}
Using the policy formula in Lemma~\ref{lem:optimal-soft-bellman}, the definition
of \(\widehat A_i^\dagger\), and
Equation~\eqref{eq:exact-recovery-advantage},
\[
\begin{aligned}
    \pi_i^\dagger(a\mid s)
    &=
    \pi_{\mathrm{ref}}(a\mid s)
    \exp\left(
        \frac{\widehat A_i^\dagger(s,a)}{\beta}
    \right)
\\
    &=
    \pi_{\mathrm{ref}}(a\mid s)
    \exp\left(
        \frac{A_i^*(s,a)}{\beta}
    \right)
\\
    &=
    \pi_i^*(a\mid s).
\end{aligned}
\]
Thus \(\pi_i^\dagger=\pi_i^*\). Since \((i,q)\) was arbitrary, the population
minimizer is unique in induced policy space.

It remains to identify the root value used to initialize the ledger. Combining
Equations~\eqref{eq:exact-recovery-path-score}
and~\eqref{eq:exact-recovery-root-constant} gives, for every completed response
$y$,
\[
\begin{aligned}
    r_i(\ell_y)
    -
    \sum_{t=0}^{T_y-1}
    \widehat A_i^\dagger(s_t,a_t)
    &=
    r_i(\ell_y)-G_i^\dagger(q,y)
    =
    -c_i(q)
    =
    V_i^*(q).
\end{aligned}
\]
The residual is pointwise equal to \(V_i^*(q)\). Taking its conditional
expectation therefore gives
\[
    v_i^\dagger(q)
    =
    \mathbb E_{y\sim\mathcal D(\cdot\mid i,q)}
    \left[
        r_i(\ell_y)
        -
        \sum_{t=0}^{T_y-1}
        \widehat A_i^\dagger(s_t,a_t)
    \right]
    =
    V_i^*(q).
\]
Thus the population root value is recovered on the training support.

With the root value identified, initialize the serving-time ledger by
\[
    \widehat V_i(q):=v_i^\dagger(q)=V_i^*(q)
\]
and update it recursively according to
\[
    \widehat V_i([s,a])
    :=
    \widehat V_i(s)+\widehat A_i^\dagger(s,a).
\]
If $\widehat V_i(s)=V_i^*(s)$, then
\[
\begin{aligned}
    \widehat V_i([s,a])
    & =
    \widehat V_i(s)+\widehat A_i^\dagger(s,a)
    \\ &=
    V_i^*(s)+V_i^*([s,a])-V_i^*(s)
    =
    V_i^*([s,a]).
\end{aligned}
\]
Induction over prefix length gives \(\widehat V_i(s)=V_i^*(s)\) at every
reachable state, so the constructed child-value vector is the optimal
soft-value report required by LAMA.
\end{proof}

\begin{proof}[Proof of Proposition~\ref{prop:lama-winner-payment-ir-wbb}]
Fix a query \(q\) and candidate advertiser set \(\mathcal N\). Under exact
reporting, specialize the desirability notation of
Lemma~\ref{lem:desirability-representation} to the truthful value profile:
\[
    z_i^*(q)
    :=
    \exp\left(\frac{V_i^*(q)}{\beta}\right),
    \,
    \overline z(q;V^*)
    =
    \sum_{j\in\mathcal N}z_j^*(q).
\]
The initial allocation posterior therefore satisfies
\[
    \rho_i(q)
    =
    \frac{z_i^*(q)}{\overline z(q;V^*)}.
\]

For a realized terminal state \(s_T\), let
\[
    P_i^{\mathrm{frac}}(s_T)
    :=
    \rho_i(s_T)r_i(s_T)
    -
    \beta\log\overline z(q;V^*)
    +
    \beta\log\bigl(1+\overline z(q;V^*)-z_i^*(q)\bigr)
\]
denote advertiser \(i\)'s telescoped fractional payment before the
winner-probability adjustment in Algorithm~\ref{alg:lama}. The winner-settled
payment is therefore
\[
    P_i(s_T,I^*)
    =
    \mathbf 1\{I^*=i\}
    \frac{P_i^{\mathrm{frac}}(s_T)}{\rho_i(s_T)},
    \,
    I^*\sim\rho(s_T).
\]

We first prove outcome individual rationality. Because terminal values
are nonnegative, Lemma~\ref{lem:optimal-soft-bellman} gives
\[
\begin{aligned}
    z_i^*(q)
    &=
    \exp\left(\frac{V_i^*(q)}{\beta}\right)                         \\
    &=
    \sum_{\ell\in L(q)}
    P_0(\ell\mid q)
    \exp\left(\frac{r_i(\ell)}{\beta}\right)
    \geq 1.
\end{aligned}
\]
Consequently,
\(
    \overline z(q;V^*)
    \geq
    1+\overline z(q;V^*)-z_i^*(q).
\)

Suppose advertiser \(i\) is the sampled winner at terminal state
\(s_T\). By exactness of the terminal report,
\(V_i^*(s_T)=r_i(s_T)\), so its realized utility is
\[
\begin{aligned}
    r_i(s_T)-P_i(s_T,i)
    &=
    \frac{\beta}{\rho_i(s_T)}
    \log\left(
        \frac{
            \overline z(q;V^*)
        }{
            1+\overline z(q;V^*)-z_i^*(q)
        }
    \right)
    \geq 0,
\end{aligned}
\]
where the inequality follows from the preceding desirability bound. Thus the
winner's payment does not exceed its realized gain.
If advertiser \(i\) is not selected, it receives no advertising
opportunity and pays zero, so its realized utility is also zero.
Therefore outcome IR holds pointwise for every realized terminal state
and every sampled winner.

We next prove expected weak budget balance. Conditional on a realized
terminal state \(s_T\), taking expectation only over the sampled winner
gives
\[
\begin{aligned}
    \mathbb E_{I^*\sim\rho(s_T)}
    \left[
        P_i(s_T,I^*)
        \mid s_T
    \right]
    &=
    \rho_i(s_T)
    \frac{P_i^{\mathrm{frac}}(s_T)}{\rho_i(s_T)}
    =
    P_i^{\mathrm{frac}}(s_T).
\end{aligned}
\]
Thus it suffices to show that the expected fractional payments are
nonnegative.

By the definition of \(P_i^{\mathrm{frac}}\),
\[
\begin{aligned}
    \frac{1}{\beta}
    \mathbb E_{s_T\sim x^\beta}
    \left[P_i^{\mathrm{frac}}(s_T)\right]
    = &
    \frac{1}{\beta}
    \mathbb E_{s_T\sim x^\beta}
    \left[\rho_i(s_T)r_i(s_T)\right]
    -
    \log\overline z(q;V^*)
    \\ & +
    \log\bigl(1+\overline z(q;V^*)-z_i^*(q)\bigr).
\end{aligned}
\]
The last two terms are deterministic, so the only random term that must be
controlled is the expected allocation-weighted value. By
Lemma~\ref{lem:joint-allocation-probability},
\[
\begin{aligned}
    &\mathbb E_{s_T\sim x^\beta}
    \left[\rho_i(s_T)r_i(s_T)\right]                              \\
    &\qquad =
    \frac{1}{\overline z(q;V^*)}
    \sum_{\ell\in L(q)}
    P_0(\ell\mid q)
    \exp\left(\frac{r_i(\ell)}{\beta}\right)
    r_i(\ell).
\end{aligned}
\]

To normalize this sum, define the terminal distribution
\[
    \nu_i(\ell)
    :=
    \frac{
        P_0(\ell\mid q)\exp(r_i(\ell)/\beta)
    }{
        z_i^*(q)
    }.
\]
This is a probability distribution because the leaf representation above gives
\(
    z_i^*(q)
    =
    \sum_{\ell\in L(q)}
    P_0(\ell\mid q)\exp(r_i(\ell)/\beta).
\)
The expected allocation-weighted value can now be written as
\[
\begin{aligned}
    \mathbb E_{s_T\sim x^\beta}
    \left[
        \rho_i(s_T)r_i(s_T)
    \right]
    &=
    \frac{z_i^*(q)}{\overline z(q;V^*)}
    \mathbb E_{\ell\sim\nu_i}[r_i(\ell)].
\end{aligned}
\]

It remains to lower-bound the expectation under \(\nu_i\). For every \(\ell\)
with \(\nu_i(\ell)>0\), the definition of \(\nu_i\) implies
\[
    \log
    \frac{\nu_i(\ell)}{P_0(\ell\mid q)}
    =
    \frac{r_i(\ell)}{\beta}
    -
    \log z_i^*(q).
\]
Taking expectation under \(\nu_i\) and rearranging gives
\[
    \mathbb E_{\ell\sim\nu_i}[r_i(\ell)]
    -
    \beta\operatorname{\mathbb{D}_{KL}}
    \bigl(\nu_i\Vert P_0(\cdot\mid q)\bigr)
    =
    \beta\log z_i^*(q)
    =
    V_i^*(q).
\]
Since the KL divergence is nonnegative,
\[
    \mathbb E_{\ell\sim\nu_i}[r_i(\ell)]
    \geq
    V_i^*(q)
    =
    \beta\log z_i^*(q).
\]
Combining the last inequality with the representation of the
allocation-weighted value yields
\[
    \mathbb E_{s_T\sim x^\beta}
    \left[\rho_i(s_T)r_i(s_T)\right]
    \geq
    \beta
    \frac{z_i^*(q)}{\overline z(q;V^*)}
    \log z_i^*(q).
\]
Substituting this bound into the payment decomposition gives
\[
\begin{aligned}
    \frac{1}{\beta}
    \mathbb E[P_i^{\mathrm{frac}}(s_T)]
    \geq &
    \frac{z_i^*(q)}{\overline z(q;V^*)}\log z_i^*(q)
    -
    \log\overline z(q;V^*)
    \\ &+
    \log\bigl(1+\overline z(q;V^*)-z_i^*(q)\bigr).
\end{aligned}
\]

It remains to show that the right-hand side is nonnegative. Since
\(
    0<z_i^*(q)/\overline z(q;V^*)\leq 1,
\)
the weighted arithmic-geometric mean inequality gives
\[
\begin{aligned}
    \frac{1+\overline z(q;V^*)-z_i^*(q)}{\overline z(q;V^*)}
    &=
    \left(
        1-\frac{z_i^*(q)}{\overline z(q;V^*)}
    \right)\!\cdot 1
    +
    \frac{z_i^*(q)}{\overline z(q;V^*)}
    \cdot
    \frac{1}{z_i^*(q)}                                           \\
    &\geq
    1^{\,1-z_i^*(q)/\overline z(q;V^*)}
    \left(\frac{1}{z_i^*(q)}\right)^{
        z_i^*(q)/\overline z(q;V^*)
    }
    \\ &=
    z_i^*(q)^{-z_i^*(q)/\overline z(q;V^*)}.
\end{aligned}
\]
Taking logarithms,
\[
    \log\bigl(1+\overline z(q;V^*)-z_i^*(q)\bigr)
    -
    \log\overline z(q;V^*)
    \geq
    -
    \frac{z_i^*(q)}{\overline z(q;V^*)}
    \log z_i^*(q).
\]
Therefore,
\(
    \mathbb E[P_i^{\mathrm{frac}}(s_T)]\geq 0 \)
for every
\(i\in\mathcal N.
\)

Finally, taking expectations in the conditional payment identity above and
summing over advertisers gives
\[
\begin{aligned}
    \mathbb E_{s_T,I^*}
    \left[
        \sum_{i\in\mathcal N}P_i(s_T,I^*)
    \right]
    &=
    \sum_{i\in\mathcal N}
    \mathbb E_{s_T}[P_i^{\mathrm{frac}}(s_T)]                     \\
    &\geq 0.
\end{aligned}
\]
The expectation is over both token-generation randomness and the
terminal winner draw, conditional on the query and candidate
advertiser set. Hence the winner-settled implementation is weakly
budget balanced in expectation.
\end{proof}

%% file: sections/vertical_advertiser_overview.tex
This appendix provides supporting experimental details and diagnostics. Table~\ref{tab:vertical_advertiser_overview} summarizes the evaluated verticals and their advertisers, and Tables~\ref{tab:case_study_workout}--\ref{tab:case_study_car} present representative generated responses for each vertical. Tables~\ref{tab:usedcars-kbb-carfax} and~\ref{tab:usedcars-kbb-edmunds} illustrate how the generated response changes with bids for complementary and competing advertiser pairs, respectively. Table~\ref{tab:latency-ratio} compares the end-to-end latency of LAMA with reference-answer generation. Table~\ref{tab:ctr-calibration} and Figure~\ref{fig:value-model-calibration} report the prediction and calibration errors of the learned shared report model, while Figure~\ref{fig:bid-offset-sweep} empirically evaluates its monotonicity under bid perturbations.

\begin{table*}[htbp]
\centering
\small
\renewcommand{\arraystretch}{1.3}
\begin{tabularx}{\textwidth}{l c X l l X}
\toprule
\textbf{Vertical} & \textbf{\# Queries} & \textbf{Description} & \textbf{Advertisers} & \textbf{Type} & \textbf{Qualities} \\
\midrule
\multirow{3}{*}{\textbf{Workout}} & \multirow{3}{*}{291} & \multirow{3}{*}{\parbox{\linewidth}{Fitness and workout-intent search queries, covering supplements, gym routines, class discovery, and home exercise equipment.}}
& Bowflex SelectTech 552 & Adjustable dumbbells & Replace a full rack of weights with adjustable dumbbells for home strength training. \\
\cline{4-6}
& & & C4 Sport & Pre-workout supplement & Prepare for demanding gym sessions with a familiar pre-workout option. \\
\cline{4-6}
& & & ClassPass & Gym and studio marketplace & Book gyms, barre, Pilates, boxing, and cycling classes with flexible choices. \\
\midrule
\multirow{3}{*}{\textbf{Vacation}} & \multirow{3}{*}{456} & \multirow{3}{*}{\parbox{\linewidth}{Travel-planning search queries, including destinations, packages, flights, hotels, cruises, resorts, and itinerary planning.}}
& Expedia & Travel booking platform & Compare flights, hotels, cars, and vacation packages for flexible trip planning. \\
\cline{4-6}
& & & Priceline & Travel deal booking platform & Search travel deals across hotels, flights, rental cars, and vacation packages. \\
\cline{4-6}
& & & Tripadvisor & Travel review and planning platform & Compare destinations, resorts, attractions, and traveler reviews while planning a trip. \\
\midrule
\multirow{3}{*}{\textbf{Car}} & \multirow{3}{*}{492} & \multirow{3}{*}{\parbox{\linewidth}{Automotive search queries for vehicle research, used-car shopping, pricing/value checks, EV/hybrid comparison, and dealer-local intents.}}
& Kelley Blue Book & Vehicle valuation and pricing research service & Check vehicle value ranges, pricing guidance, and market information when comparing cars. \\
\cline{4-6}
& & & Edmunds & Vehicle research and review platform & Compare vehicle reviews, specs, pricing guidance, and ownership factors when researching cars. \\
\cline{4-6}
& & & CARFAX & Vehicle history service & Review vehicle history details when evaluating used cars or comparing specific vehicles. \\
\bottomrule
\end{tabularx}
\caption{Vertical overview and advertiser metadata used in the experiments.}
\label{tab:vertical_advertiser_overview}
\end{table*}

\begin{table*}[htbp]
\centering
\small
\renewcommand{\arraystretch}{1.3}
\begin{tabularx}{\textwidth}{l X}
\toprule
\textbf{Field} & \textbf{Content} \\
\midrule
\textbf{Vertical} & Workout \\
\midrule
\textbf{Query} & best bicep exercises for mass \\
\midrule
\textbf{Reference} & The best bicep exercises for mass typically include compound and isolation movements that target the biceps effectively. Common options are barbell curls, dumbbell curls, chin-ups, hammer curls, and concentration curls. These exercises vary in intensity and muscle engagement, allowing for progressive overload, which is essential for building mass. Key factors to consider include proper form, controlled movements, and sufficient resistance to challenge the muscles. It's also important to incorporate a variety of exercises to stimulate different parts of the biceps and avoid plateaus. Training frequency, rest periods, and nutrition play crucial roles in muscle growth. For optimal results, pair these exercises with a well-rounded upper body routine and adequate protein intake. Always ensure that exercises are performed with correct technique to prevent injury, and consider consulting a fitness professional for personalized guidance. \\
\midrule
\textbf{LAMA} & The best bicep exercises for mass focus on progressive overload and proper form to effectively build size and strength. Common effective moves include barbell curls, dumbbell curls, cable chin-ups or pull-downs, hammer curls, and concentration curls. These exercises target the biceps brachii from different angles, promoting muscle hypertrophy. Key factors to prioritize are using heavy enough weights for 6-12 reps, maintaining a full range of motion, and ensuring rest and nutrition support growth. \winnerhl{For those with limited space, \textbf{@Bowflex SelectTech 552@} can be an excellent choice as they offer adjustable weights that make performing a variety of bicep-focused lifts more convenient at home.} Incorporating a mix of free weights and machine-based exercises can lead to more balanced development. Also, supersetting bicep workouts with opposing tricep movements may enhance muscle growth. Always warm up adequately to reduce injury risk and maximize performance. \\
\midrule
\textbf{Winner} & Bowflex \\
\bottomrule
\end{tabularx}
\caption{Case Study 1: Workout vertical. In the generated content, the winning advertiser is shown in bold, and the highlighted sentence is the span allocated to the hyperlink.}
\label{tab:case_study_workout}
\end{table*}

\clearpage

\begin{table*}[ht]
\centering
\small
\renewcommand{\arraystretch}{1.3}
\begin{tabularx}{\textwidth}{l X}
\toprule
\textbf{Field} & \textbf{Content} \\
\midrule
\textbf{Vertical} & Vacation \\
\midrule
\textbf{Query} & best beach vacations in the world \\
\midrule
\textbf{Reference} & The best beach vacations in the world offer a mix of pristine sands, clear waters, and unique cultural experiences, appealing to travelers seeking relaxation or adventure. Popular destinations include the Maldives for luxury overwater bungalows, Bali for vibrant culture and natural beauty, and the Caribbean for family-friendly resorts and vibrant nightlife. Key factors to consider include travel timing, with peak seasons often running from late spring to early fall, and budget, which can vary greatly depending on whether you opt for all-inclusive resorts, boutique hotels, or local guesthouses. Location preferences also play a role, with some travelers favoring secluded spots and others prioritizing proximity to activities or nightlife. Flexibility in travel dates can lead to better deals and availability, while last-minute bookings may require more adaptable options. Practical considerations include researching visa requirements, health advisories, and weather patterns, and ensuring that travel insurance covers potential disruptions. Whether planning a romantic getaway, family trip, or solo adventure, comparing package deals, rental options, and local experiences can help tailor the perfect beach vacation to individual needs and constraints. \\
\midrule
\textbf{LAMA} & The best beach vacations in the world offer a mix of stunning scenery, clear waters, and diverse activities tailored to different preferences. \texttt{@Expedia@} highlights top options like the Maldives for luxury overwater bungalows, Bali for cultural richness and vibrant nightlife, and the Caribbean for family-friendly resorts and cruise access. White-sand beaches in Greece, such as those in Santorini or Mykonos, appeal to couples seeking romance, while Mexico’s Riviera Maya offers adventure and world-class resorts. Key factors include seasonality, with peak times often from June to August, affecting budget and availability. Travelers should consider visa requirements, health advisories, and local conditions. For flexibility, all-inclusive resorts or cruise packages may suit families or solo travelers, while last-minute options abound through online travel agencies. Beach vacationers should also plan for sun protection, water activities, and transportation to remote islands. \winnerhl{\textbf{@Tripadvisor@} can help compare destinations, read traveler reviews, and find the best accommodations and experiences that match personal preferences and travel styles.} Always verify current travel advisories and check with providers for the latest updates on services and restrictions. \\
\midrule
\textbf{Winner} & Tripadvisor \\
\bottomrule
\end{tabularx}
\caption{Case Study 2: Vacation vertical. In the generated content, the winning advertiser is shown in bold, and the highlighted sentence is the span allocated to the hyperlink.}
\label{tab:case_study_vacation}
\end{table*}

\clearpage

\begin{table*}[htbp]
\centering
\small
\renewcommand{\arraystretch}{1.3}
\begin{tabularx}{\textwidth}{l X}
\toprule
\textbf{Field} & \textbf{Content} \\
\midrule
\textbf{Vertical} & Car \\
\midrule
\textbf{Query} & cars for sale near me \\
\midrule
\textbf{Reference} & Find cars for sale near you by checking local dealerships or online listings that highlight nearby inventory. Many dealers offer in-person visits, virtual tours, and flexible scheduling to help you explore options conveniently. When looking for cars for sale near me, consider visiting multiple dealers to compare available models, pricing, and financing options. Local dealers often carry a range of new and used vehicles, including popular brands like Toyota, Honda, and Ford, and can provide details about vehicle history, warranty coverage, and service records. It's also a good idea to confirm current availability and pricing directly with the dealership, as inventory and offers can change frequently. Some dealers may offer test drives, maintenance packages, or special promotions to help you make an informed decision. Whether you're looking for a new car, used car, hybrid, or electric vehicle, checking with local dealers can provide a personalized shopping experience tailored to your needs and preferences. Be sure to ask about any current incentives or financing deals that may be available to you. \\
\midrule
\textbf{LAMA} & If you're looking for cars for sale near you, checking with local dealers is the best way to find current inventory. \winnerhl{\textbf{@CARFAX@} can help you explore used vehicles available at dealers near you by providing detailed history reports, so you can research cars for sale in your area with confidence.} Local dealers typically offer in-person visits, test drives, and financing options, making it easier to compare models and secure a deal. To locate nearby dealers with suitable options, you can use your device’s map service or the search function on automotive websites. Be sure to call ahead to confirm availability and set up a visit, as popular models may sell quickly. If you're interested in a specific make or model, dealers often have multiple units on-site or can order one for you. Used cars for sale near you usually come with a range of options, including different years, trims, and mileage, so you can find a fit for your needs and budget. @Kelley Blue Book@ is another resource that can help you understand fair pricing for used vehicles at dealers near you. By combining these tools with direct contact to local dealers, you can streamline your search for the right car in your area. \\
\midrule
\textbf{Winner} & CARFAX \\
\bottomrule
\end{tabularx}
\caption{Case Study 3: Car vertical. In the generated content, the winning advertiser is shown in bold, and the highlighted sentence is the span allocated to the hyperlink.}
\label{tab:case_study_car}
\end{table*}

\begin{table*}[t]
\centering
\small
\setlength{\tabcolsep}{4pt}
\renewcommand{\arraystretch}{1.08}
\begin{tabularx}{\textwidth}{@{}r r X@{}}
\toprule
\textbf{Offset} & \textbf{$\lambda$} & \textbf{Content} \\
\midrule
$-0.2$ & $0.007$ & \blueul{Used cars }\redul{are}\blueul{ a popular choice for buyers }\redul{seeking}\blueul{ reliable transportation at }\redul{a}\blueul{ lower }\redul{cost}\blueul{, offering }\redul{a}\blueul{ range of models, years, and conditions}\redul{.}\winnerblueul{ \textbf{@CARFAX@} is a trusted resource that provides detailed insights into a used car's history, }\winnerredul{including}\winnerblueul{ accident reports, service records, and ownership changes, which can be especially helpful when you're considering a specific used car.}\blueul{ The }\redul{key}\blueul{ factors to consider include the vehicle's mileage, overall condition, and any available service history. }\redul{For}\blueul{ models typically found in used inventory}\redul{,}\blueul{ like the Honda Civic or Toyota Corolla, their reputation for reliability and fuel efficiency makes them top contenders. Buyers should also pay attention to the seller type}\redul{---}\blueul{whether it's a private seller or }\redul{a}\blueul{ certified used dealer---as this can impact the car}\redul{'s}\blueul{ condition and included services. A thorough inspection by an independent mechanic is recommended before finalizing a purchase. Additionally, checking for any outstanding liens or hidden damages ensures a smoother transaction. Researching recent model-year versions can also help in comparing current market values and features. Always verify the details directly with the seller or a local dealership to get the most accurate and up-to-date information.} \\
\midrule
$0$ & $0.074$ & \redul{If}\blueul{ you}\redul{'re looking}\blueul{ for used cars}\redul{, start by}\blueul{ evaluating models }\redul{based on their mileage, condition,}\blueul{ and pricing }\redul{history,}\blueul{ with a }\redul{focus on}\blueul{ popular options }\redul{like the Honda Civic}\blueul{, Toyota Corolla}\redul{, and}\blueul{ Hyundai Elan}\redul{tra,}\blueul{ which typically offer strong }\redul{resale}\blueul{ value }\redul{and}\blueul{ long}\redul{-term}\blueul{ reliability. Used }\redul{cars}\blueul{ from }\redul{201}\blueul{5 to }\redul{202}\blueul{0 }\redul{are}\blueul{ especially common }\redul{and}\blueul{ often provide a }\redul{good}\blueul{ balance }\redul{of modern}\blueul{ features }\redul{and affordability.}\winnerblueul{ \textbf{@CARFAX@} is a }\winnerredul{valuable resource that can}\winnerblueul{ help }\winnerredul{you}\winnerblueul{ review the }\winnerredul{history of}\winnerblueul{ used cars}\winnerredul{,}\winnerblueul{ including accident reports, service records, and ownership details, which can inform }\winnerredul{your}\winnerblueul{ decision-making.}\blueul{ When shopping, consider the type }\redul{of}\blueul{ seller}\redul{---}\blueul{dealerships often provide certified }\redul{pre-owned vehicles}\blueul{ with warranties}\redul{,}\blueul{ while }\redul{private sellers}\blueul{ may offer }\redul{lower}\blueul{ prices }\redul{but}\blueul{ less protection. Be sure }\redul{to}\blueul{ inspect the car thoroughly, }\redul{ideally}\blueul{ with a professional mechanic, and ask for a detailed }\redul{vehicle}\blueul{ history }\redul{report}\blueul{. The best deals can be found by comparing }\redul{listings}\blueul{ across trusted platforms and understanding the local market to identify fair pricing and }\redul{hidden}\blueul{ value.} \\
\midrule
$+0.2$ & $0.926$ & \redul{If you're looking for used}\blueul{ cars}\redul{, start by}\blueul{ evaluating }\redul{models based on their mileage, condition,}\blueul{ and pricing }\redul{history, with popular options including the Honda Civic, Toyota Corolla, and Hyundai Elantra, which are known}\blueul{ for }\redul{their depend}\blueul{ability }\redul{and value in the pre-owned}\blueul{ market}\redul{. Used cars typically range in price from a few thousand to over \$30}\blueul{,}\redul{0}\blueul{00}\redul{, depending}\blueul{ on }\redul{the year, trim, and features,}\blueul{ and }\redul{it's essential to review their}\blueul{ complete }\redul{vehicle history for any past accidents or maintenance records.}\winnerredul{ \textbf{@Kelley Blue Book@} is a}\winnerblueul{ helpful }\winnerredul{resource that provides insights into used cars}\winnerblueul{ for }\winnerredul{buyers, offering tools to check fair market values, set realistic asking prices, and understand current trends.}\redul{ When considering a used car, always inspect the vehicle thoroughly or have a qualified mechanic perform a pre-purchase inspection, especially for models with higher mileage.}\blueul{ @CARFAX@ }\redul{can also provide a detailed}\blueul{ look }\redul{at a used car's history, including title status and service records, so you can make a more informed decision before finalizing a purchase.} \\
\bottomrule
\end{tabularx}
\caption{Car vertical, query ``used cars'', with participating advertisers \textcolor{adred}{Kelley Blue Book (KBB)} and \textcolor{adblue}{CARFAX}. \textcolor{adred}{KBB} primarily contributes used-car pricing and market-value guidance, while \textcolor{adblue}{CARFAX} contributes vehicle-history and title/service-record evidence; the pair is mainly complementary for used-car shopping, with limited overlap as automotive information resources. \textcolor{adred}{KBB}- and \textcolor{adblue}{CARFAX}-generated tokens are underlined in their respective colors. In the generated content, the winning advertiser is shown in bold, and the highlighted sentence is the span allocated to the hyperlink.}
\label{tab:usedcars-kbb-carfax}
\end{table*}

\begin{table*}[t]
\centering
\small
\setlength{\tabcolsep}{4pt}
\renewcommand{\arraystretch}{1.08}
\begin{tabularx}{\textwidth}{@{}r r X@{}}
\toprule
\textbf{Offset} & \textbf{$\lambda$} & \textbf{Content} \\
\midrule
$-0.2$ & $0.019$ & \greenul{If you're looking for used }\redul{cars}\greenul{, start by evaluating models based on their mileage, condition, and }\redul{pricing}\greenul{ history, with popular options including the Honda Civic, Toyota Corolla, and Hyundai Elantra, which are known }\redul{for}\greenul{ their reliability }\redul{and}\greenul{ affordability. Used cars from 2015 to 2020 typically offer a good balance of modern features and }\redul{lower prices}\greenul{, and their value depends largely on maintenance records, accident history, and overall exterior and interior }\redul{condition}\greenul{.}\winnergreenul{ \textbf{@Edmunds@} can help you explore detailed listings and pricing guidance for used }\winnerredul{cars}\winnergreenul{, providing insights into fair market value }\winnerredul{and}\winnergreenul{ common features across different years and trims.}\greenul{ Be sure to check the vehicle history report for any red flags and consider scheduling a pre-purchase inspection by a certified mechanic to ensure the car you're buying is in solid shape. When shopping for used cars}\redul{,}\greenul{ compare sellers like private individuals, dealerships, and certified pre-owned listings, as each offers different levels of warranty and service support, and always test drive the vehicle to confirm it meets your expectations for performance and comfort.} \\
\midrule
$0$ & $0.546$ & \greenul{If you're looking }\redul{for used cars}\greenul{, start by }\redul{evaluating models}\greenul{ based on their mileage, condition, }\redul{and pricing}\greenul{ history, }\redul{with popular}\greenul{ options including }\redul{the Honda}\greenul{ Civic, Toyota Cor}\redul{olla}\greenul{, }\redul{and}\greenul{ Hyundai Elan}\redul{tra, which}\greenul{ are known }\redul{for their}\greenul{ reliability }\redul{and}\greenul{ affordability}\redul{.}\greenul{ Used cars from }\redul{20}\greenul{1}\redul{5 to}\greenul{ 20}\redul{20 typically}\greenul{ offer a good }\redul{balance}\greenul{ of }\redul{modern features}\greenul{ and }\redul{lower prices}\greenul{, and }\redul{their}\greenul{ value depends }\redul{on}\greenul{ factors like maintenance records}\redul{, accident}\greenul{ history, and service }\redul{history}\greenul{. }\winnerredul{\textbf{@Kelley Blue Book@} is}\winnergreenul{ a trusted }\winnerredul{resource that}\winnergreenul{ can }\winnerredul{help you understand the right price range for used cars based on their year}\winnergreenul{, }\winnerredul{model, trim, mileage}\winnergreenul{, }\winnerredul{and location.}\greenul{ @Edmunds@ }\redul{can}\greenul{ also assist }\redul{in}\greenul{ exploring }\redul{detailed}\greenul{ insights about used cars, }\redul{including common}\greenul{ mechanical }\redul{issues}\greenul{, repair costs}\redul{,}\greenul{ and }\redul{owner}\greenul{ reviews, }\redul{so you}\greenul{ can }\redul{make}\greenul{ a }\redul{more informed}\greenul{ decision}\redul{.}\greenul{ Be sure }\redul{to}\greenul{ get }\redul{a professional}\greenul{ inspection }\redul{before finalizing}\greenul{ a }\redul{purchase}\greenul{, }\redul{especially}\greenul{ for }\redul{used}\greenul{ cars }\redul{with}\greenul{ higher }\redul{mileage, to ensure}\greenul{ there are }\redul{no hidden mechanical}\greenul{ problems that could increase }\redul{your long-term}\greenul{ costs}\redul{.} \\
\midrule
$+0.2$ & $0.974$ & \redul{If you}\greenul{'re }\redul{looking for used cars,}\greenul{ start }\redul{by evaluating models}\greenul{ based }\redul{on}\greenul{ their }\redul{mileage,}\greenul{ condition}\redul{, and pricing history, with popular options}\greenul{ including }\redul{the Honda Civic}\greenul{, }\redul{Toyota Corolla, and Hyundai Elantra, which are}\greenul{ known }\redul{for their dependability and value in the}\greenul{ pre}\redul{-owned market. Used cars typically range}\greenul{ in }\redul{price from a few thousand to}\greenul{ over }\redul{\$}\greenul{3}\redul{0,000, depending on the}\greenul{ year}\redul{, trim, and features, and}\greenul{ it's essential to }\redul{review}\greenul{ a }\redul{comprehensive vehicle}\greenul{ history }\redul{report to ensure}\greenul{ there }\redul{are no hidden issues.}\winnerredul{ \textbf{@Kelley Blue Book@} is a trusted resource that provides detailed pricing guidance for used cars, offering insights into fair market values and}\winnergreenul{ what }\winnerredul{to expect when you're ready to}\winnergreenul{ negotiate}\winnerredul{.}\redul{ Be sure to inspect the car thoroughly or}\greenul{ have }\redul{a qualified mechanic do so, and consider the seller type---whether a private individual or a certified dealer---as this can impact}\greenul{ the }\redul{overall reliability and warranty options available with the used cars you're considering.} \\
\bottomrule
\end{tabularx}
\caption{Car vertical, query ``used cars'', with participating advertisers \textcolor{adred}{Kelley Blue Book (KBB)} and \textcolor{adgreen}{Edmunds}. Both advertisers can answer used-car research and pricing-comparison intent, so the pair is more competitive than \textcolor{adred}{KBB}--\textcolor{adblue}{CARFAX}; they are partially complementary because \textcolor{adred}{KBB} emphasizes valuation while \textcolor{adgreen}{Edmunds} emphasizes reviews, model research, and shopping guidance. \textcolor{adred}{KBB}- and \textcolor{adgreen}{Edmunds}-generated tokens are underlined in their respective colors. In the generated content, the winning advertiser is shown in bold, and the highlighted sentence is the span allocated to the hyperlink.}
\label{tab:usedcars-kbb-edmunds}
\end{table*}

\begin{table*}[t]
    \centering
    \caption{Latency and seconds-per-token ratios (LAMA/Reference) across verticals. These ratios are close to the theoretical prediction ($|\mathcal{N}|+1$)}.
    \label{tab:latency-ratio}
    \begin{tabular}{lcccc}
        \toprule
        \textbf{Vertical}
        & \textbf{Latency Mean}
        & \textbf{p50}
        & \textbf{p90}
        & \textbf{Sec/Token} \\
        \midrule
        Workout  & $3.87\times$ & $3.90\times$ & $4.31\times$ & $3.51\times$ \\
        Vacation & $3.53\times$ & $3.47\times$ & $3.90\times$ & $3.53\times$ \\
        Car      & $3.94\times$ & $3.97\times$ & $4.56\times$ & $3.57\times$ \\
        \bottomrule
    \end{tabular}
\end{table*}

\begin{table*}[t]
    \centering
    \caption{Impression-value prediction and calibration performance across verticals. In our experiments, impression values lie in $[0,1]$ and are defined as click-through rates multiplied by a unit cost per click. The current experimental setup (Qwen3-14B + LoRA fine-tuning) already yields reasonably good recovery accuracy.}
    \label{tab:ctr-calibration}
    \begin{tabular}{lccc}
        \toprule
        \textbf{Vertical}
        & \textbf{MAE}
        & \textbf{Rel. MAE}
        & \textbf{ECE} \\
        \midrule
        Workout  & 0.052 & 7.1\% & 0.0149 \\
        Vacation & 0.065 & 8.5\% & 0.0300 \\
        Car      & 0.056 & 7.4\% & 0.0375 \\
        \midrule
        Macro Avg. & 0.058 & 7.7\% & 0.0275 \\
        \bottomrule
    \end{tabular}
\end{table*}

\clearpage

\begin{figure*}[t]
  \centering
  \includegraphics[width=0.5\textwidth]{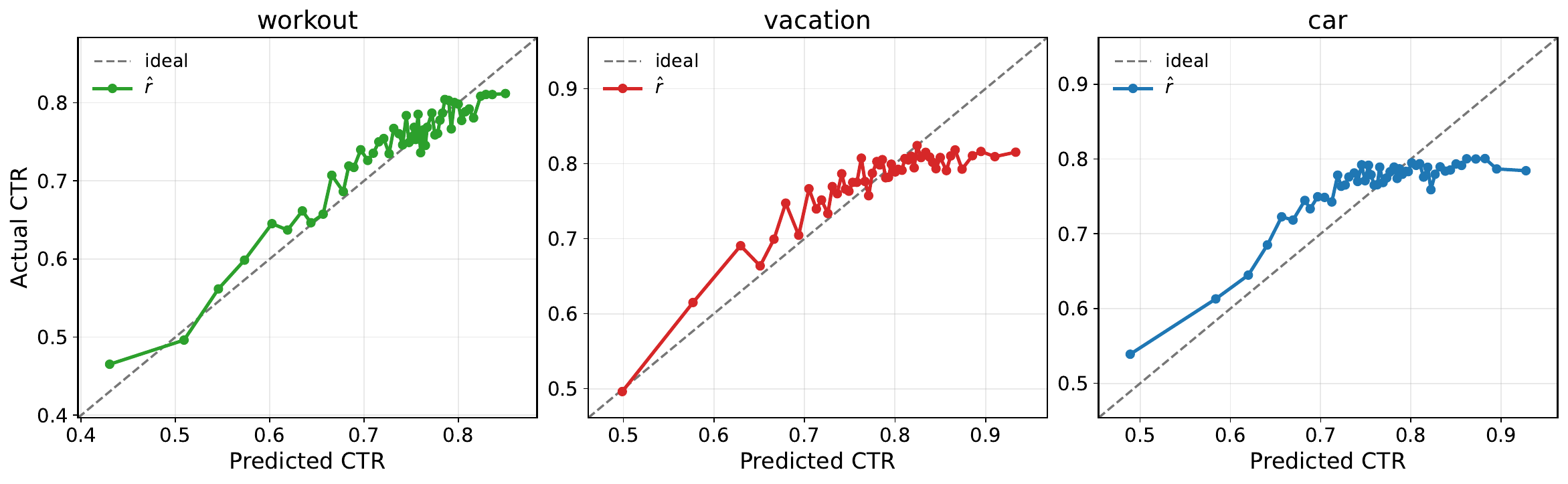}
  \caption{
    Quantile-binned calibration curves for the value model $\hat r$ across the three verticals.
    Each panel corresponds to one vertical, and each point represents one of 50 equal-frequency bins sorted by predicted impression value.
    The x-axis reports the mean predicted impression value within the bin, while the y-axis reports the mean ground-truth impression value for the same examples.
  }
  \Description{Calibration curves comparing mean predicted and ground-truth click-through rates across the car, workout, and vacation verticals. The current experimental setup demonstrates generally acceptable calibration performance, although it tends to overestimate CTRs in the high-CTR range.}
  \label{fig:value-model-calibration}
\end{figure*}

\begin{figure*}[t]
  \centering
  \includegraphics[width=0.5\textwidth]{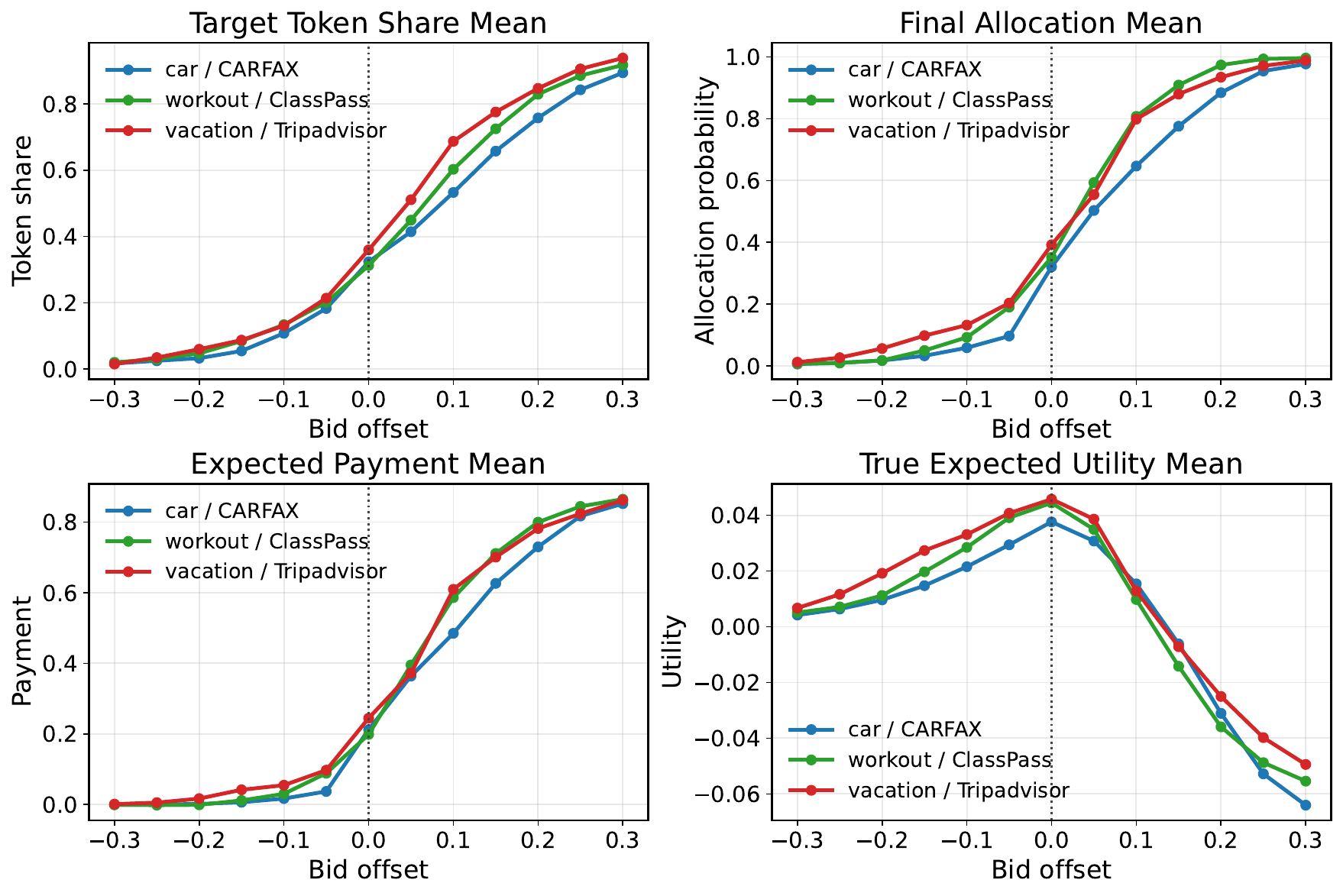}
  \caption{
    Bid-offset sweep for the LAMA token auction across the three verticals.
    For each vertical, we select one target advertiser and add a constant offset to its truthful value $v(q)$ while leaving the other advertisers' bids unchanged.
    The target advertisers are CARFAX for car, ClassPass for workout, and Tripadvisor for vacation.
    The sweep ranges from $-0.3$ to $+0.3$, while the advantage model settings remain fixed to the vertical-specific configuration.
    The four panels report the mean target token share, final allocation probability, expected payment, and true expected utility.
    Across all three verticals, increasing the bid offset monotonically increases token share, allocation, and payment.
    In contrast, true utility peaks at the truthful bid for all three verticals, then declines under overbidding as higher payments offset the additional allocation.
    This pattern provides qualitative evidence that the mechanism rewards truthful bidding while still allowing bids to control token-level allocation.
  }
  \Description{Bid-offset sweeps showing target token share, allocation probability, expected payment, and true expected utility for one advertiser in each vertical.}
  \label{fig:bid-offset-sweep}
\end{figure*}

%% file: classic_ad_auction_papers.bib
@article{Edelman_2007,
  title     = {Internet Advertising and the Generalized Second-Price Auction: Selling Billions of Dollars Worth of Keywords},
  volume    = {97},
  ISSN      = {0002-8282},
  url       = {http://dx.doi.org/10.1257/aer.97.1.242},
  DOI       = {10.1257/aer.97.1.242},
  number    = {1},
  journal   = {American Economic Review},
  publisher = {American Economic Association},
  author    = {Edelman, Benjamin and Ostrovsky, Michael and Schwarz, Michael},
  year      = {2007},
  month     = feb,
  pages     = {242--259}
}

@article{Varian_2007,
  title     = {Position Auctions},
  volume    = {25},
  ISSN      = {0167-7187},
  url       = {http://dx.doi.org/10.1016/j.ijindorg.2006.10.002},
  DOI       = {10.1016/j.ijindorg.2006.10.002},
  number    = {6},
  journal   = {International Journal of Industrial Organization},
  publisher = {Elsevier BV},
  author    = {Varian, Hal R.},
  year      = {2007},
  month     = dec,
  pages     = {1163--1178}
}

@inproceedings{DBLP:conf/wine/AggarwalFMP08,
  author       = {Gagan Aggarwal and
                  Jon Feldman and
                  S. Muthukrishnan and
                  Martin P{\'{a}}l},
  editor       = {Christos H. Papadimitriou and
                  Shuzhong Zhang},
  title        = {Sponsored Search Auctions with Markovian Users},
  booktitle    = {Internet and Network Economics, 4th International Workshop, {WINE}
                  2008, Shanghai, China, December 17-20, 2008. Proceedings},
  series       = {Lecture Notes in Computer Science},
  pages        = {621--628},
  publisher    = {Springer},
  year         = {2008},
  url          = {https://doi.org/10.1007/978-3-540-92185-1\_68},
  doi          = {10.1007/978-3-540-92185-1\_68},
  bibsource    = {dblp computer science bibliography, https://dblp.org}
}

@inproceedings{DBLP:conf/ecai/0001RSV16,
  author       = {Nicola Gatti and
                  Marco Rocco and
                  Paolo Serafino and
                  Carmine Ventre},
  editor       = {Gal A. Kaminka and
                  Maria Fox and
                  Paolo Bouquet and
                  Eyke H{\"{u}}llermeier and
                  Virginia Dignum and
                  Frank Dignum and
                  Frank van Harmelen},
  title        = {Towards Better Models of Externalities in Sponsored Search Auctions},
  booktitle    = {{ECAI} 2016 - 22nd European Conference on Artificial Intelligence,
                  29 August-2 September 2016, The Hague, The Netherlands - Including
                  Prestigious Applications of Artificial Intelligence {(PAIS} 2016)},
  series       = {Frontiers in Artificial Intelligence and Applications},
  pages        = {1167--1175},
  publisher    = {{IOS} Press},
  year         = {2016},
  url          = {https://doi.org/10.3233/978-1-61499-672-9-1167},
  doi          = {10.3233/978-1-61499-672-9-1167},
  bibsource    = {dblp computer science bibliography, https://dblp.org}
}

@article{DBLP:journals/tcs/0001RSV18,
  author       = {Nicola Gatti and
                  Marco Rocco and
                  Paolo Serafino and
                  Carmine Ventre},
  title        = {Towards better models of externalities in sponsored search auctions},
  journal      = {Theor. Comput. Sci.},
  volume       = {745},
  pages        = {150--162},
  year         = {2018},
  url          = {https://doi.org/10.1016/j.tcs.2018.06.011},
  doi          = {10.1016/J.TCS.2018.06.011},
  bibsource    = {dblp computer science bibliography, https://dblp.org}
}

@inproceedings{DBLP:conf/wine/CavalloW14,
  author       = {Ruggiero Cavallo and
                  Christopher A. Wilkens},
  editor       = {Tie{-}Yan Liu and
                  Qi Qi and
                  Yinyu Ye},
  title        = {{GSP} with General Independent Click-through-Rates},
  booktitle    = {Web and Internet Economics - 10th International Conference, {WINE}
                  2014, Beijing, China, December 14-17, 2014. Proceedings},
  series       = {Lecture Notes in Computer Science},
  pages        = {400--416},
  publisher    = {Springer},
  year         = {2014},
  url          = {https://doi.org/10.1007/978-3-319-13129-0\_32},
  doi          = {10.1007/978-3-319-13129-0\_32},
  bibsource    = {dblp computer science bibliography, https://dblp.org}
}

@article{Carrion_2023,
  title     = {Blending Advertising with Organic Content in E-commerce via Virtual Bids},
  volume    = {37},
  ISSN      = {2159-5399},
  url       = {http://dx.doi.org/10.1609/aaai.v37i13.26835},
  DOI       = {10.1609/aaai.v37i13.26835},
  number    = {13},
  journal   = {Proceedings of the AAAI Conference on Artificial Intelligence},
  publisher = {Association for the Advancement of Artificial Intelligence (AAAI)},
  author    = {Carrion, Carlos and Wang, Zenan and Nair, Harikesh and Luo, Xianghong and Lei, Yulin and Gu, Peiqin and Lin, Xiliang and Chen, Wenlong and Jin, Junsheng and Zhu, Fanan and Peng, Changping and Bao, Yongjun and Lin, Zhangang and Yan, Weipeng and Shao, Jingping},
  year      = {2023},
  month     = jun,
  pages     = {15476--15484}
}

@inproceedings{DBLP:conf/www/GhoshM08,
  author       = {Arpita Ghosh and
                  Mohammad Mahdian},
  editor       = {Jinpeng Huai and
                  Robin Chen and
                  Hsiao{-}Wuen Hon and
                  Yunhao Liu and
                  Wei{-}Ying Ma and
                  Andrew Tomkins and
                  Xiaodong Zhang},
  title        = {Externalities in online advertising},
  booktitle    = {Proceedings of the 17th International Conference on World Wide Web,
                  {WWW} 2008, Beijing, China, April 21-25, 2008},
  pages        = {161--168},
  publisher    = {{ACM}},
  year         = {2008},
  url          = {https://doi.org/10.1145/1367497.1367520},
  doi          = {10.1145/1367497.1367520},
  bibsource    = {dblp computer science bibliography, https://dblp.org}
}

@inproceedings{DBLP:conf/wine/GiotisK08,
  author       = {Ioannis Giotis and
                  Anna R. Karlin},
  editor       = {Christos H. Papadimitriou and
                  Shuzhong Zhang},
  title        = {On the Equilibria and Efficiency of the {GSP} Mechanism in Keyword
                  Auctions with Externalities},
  booktitle    = {Internet and Network Economics, 4th International Workshop, {WINE}
                  2008, Shanghai, China, December 17-20, 2008. Proceedings},
  series       = {Lecture Notes in Computer Science},
  pages        = {629--638},
  publisher    = {Springer},
  year         = {2008},
  url          = {https://doi.org/10.1007/978-3-540-92185-1\_69},
  doi          = {10.1007/978-3-540-92185-1\_69},
  bibsource    = {dblp computer science bibliography, https://dblp.org}
}

@article{Li_2023,
  title     = {Optimally integrating ad auction into e-commerce platforms},
  volume    = {976},
  ISSN      = {0304-3975},
  url       = {http://dx.doi.org/10.1016/j.tcs.2023.114141},
  DOI       = {10.1016/j.tcs.2023.114141},
  journal   = {Theoretical Computer Science},
  publisher = {Elsevier BV},
  author    = {Li, Weian and Qi, Qi and Wang, Changjun and Yu, Changyuan},
  year      = {2023},
  month     = oct,
  pages     = {114141}
}

@article{An_2025,
  title     = {Merging Mechanisms for Ads and Organic Items in E-commerce Platforms},
  volume    = {39},
  ISSN      = {2159-5399},
  url       = {http://dx.doi.org/10.1609/aaai.v39i13.33479},
  DOI       = {10.1609/aaai.v39i13.33479},
  number    = {13},
  journal   = {Proceedings of the AAAI Conference on Artificial Intelligence},
  publisher = {Association for the Advancement of Artificial Intelligence (AAAI)},
  author    = {An, Nan and Li, Weian and Qi, Qi and Zhang, Liang},
  year      = {2025},
  month     = apr,
  pages     = {13547--13554}
}


%% file: llm_advertising_papers.bib
@inproceedings{DBLP:conf/www/DuttingMLXZ24,
  author       = {Paul D{\"{u}}tting and
                  Vahab Mirrokni and
                  Renato Paes Leme and
                  Haifeng Xu and
                  Song Zuo},
  editor       = {Tat{-}Seng Chua and
                  Chong{-}Wah Ngo and
                  Ravi Kumar and
                  Hady W. Lauw and
                  Roy Ka{-}Wei Lee},
  title        = {Mechanism Design for Large Language Models},
  booktitle    = {Proceedings of the {ACM} on Web Conference 2024, {WWW} 2024, Singapore,
                  May 13-17, 2024},
  pages        = {144--155},
  publisher    = {{ACM}},
  year         = {2024},
  url          = {https://doi.org/10.1145/3589334.3645511},
  doi          = {10.1145/3589334.3645511},
  bibsource    = {dblp computer science bibliography, https://dblp.org}
}

@inproceedings{DBLP:conf/kdd/Dubey0KM024,
  author       = {Avinava Dubey and
                  Zhe Feng and
                  Rahul Kidambi and
                  Aranyak Mehta and
                  Di Wang},
  editor       = {Ricardo Baeza{-}Yates and
                  Francesco Bonchi},
  title        = {Auctions with {LLM} Summaries},
  booktitle    = {Proceedings of the 30th {ACM} {SIGKDD} Conference on Knowledge Discovery
                  and Data Mining, {KDD} 2024, Barcelona, Spain, August 25-29, 2024},
  pages        = {713--722},
  publisher    = {{ACM}},
  year         = {2024},
  url          = {https://doi.org/10.1145/3637528.3672022},
  doi          = {10.1145/3637528.3672022},
  bibsource    = {dblp computer science bibliography, https://dblp.org}
}

@article{DBLP:journals/corr/abs-2405-05905,
  author       = {Ermis Soumalias and
                  Michael J. Curry and
                  Sven Seuken},
  title        = {Truthful Aggregation of LLMs with an Application to Online Advertising},
  journal      = {CoRR},
  volume       = {abs/2405.05905},
  year         = {2024},
  url          = {https://doi.org/10.48550/arXiv.2405.05905},
  doi          = {10.48550/ARXIV.2405.05905},
  eprinttype   = {arXiv},
  eprint       = {2405.05905},
  bibsource    = {dblp computer science bibliography, https://dblp.org}
}

@misc{xu2026adinsertionllmgeneratedresponses,
  title         = {Ad Insertion in LLM-Generated Responses},
  author        = {Shengwei Xu and Zhaohua Chen and Xiaotie Deng and Zhiyi Huang and Grant Schoenebeck},
  year          = {2026},
  eprint        = {2601.19435},
  archivePrefix = {arXiv},
  primaryClass  = {cs.GT},
  url           = {https://arxiv.org/abs/2601.19435}
}

@misc{sun2026lerallmenhancedragad,
  title         = {LERA: LLM-Enhanced RAG for Ad Auction in Generative Chatbots},
  author        = {Haoran Sun and Xinrui Song and Xinyu Zhang and Zhaohua Chen and Xu Chu and Zhilin Zhang and Chuan Yu and Jian Xu and Bo Zheng and Xiaotie Deng},
  year          = {2026},
  eprint        = {2605.16474},
  archivePrefix = {arXiv},
  primaryClass  = {cs.IR},
  url           = {https://arxiv.org/abs/2605.16474}
}

@misc{yun2026llmadvertisementbasedneuron,
  title         = {LLM Advertisement based on Neuron Auctions},
  author        = {Peiran Yun and Wenxin Xu and Jiayuan Liu and Yihang Zhang and Liang Zeng and Lingkai Kong and Tonghan Wang},
  year          = {2026},
  eprint        = {2605.08326},
  archivePrefix = {arXiv},
  primaryClass  = {cs.LG},
  url           = {https://arxiv.org/abs/2605.08326}
}

@misc{hu2026gembenchbenchmarkadinjectedresponse,
  title         = {GEM-Bench: A Benchmark for Ad-Injected Response Generation within Generative Engine Marketing},
  author        = {Silan Hu and Shiqi Zhang and Yimin Shi and Xiaokui Xiao},
  year          = {2026},
  eprint        = {2509.14221},
  archivePrefix = {arXiv},
  primaryClass  = {cs.IR},
  url           = {https://arxiv.org/abs/2509.14221}
}

@inproceedings{NEURIPS2024_20dcab0f,
  author    = {Hajiaghayi, MohammadTaghi and Lahaie, S{\'{e}}bastien and Rezaei, Keivan and Shin, Suho},
  booktitle = {Advances in Neural Information Processing Systems},
  doi       = {10.52202/079017-0585},
  editor    = {A. Globerson and L. Mackey and D. Belgrave and A. Fan and U. Paquet and J. Tomczak and C. Zhang},
  pages     = {18445--18480},
  publisher = {Curran Associates, Inc.},
  title     = {Ad Auctions for LLMs via Retrieval Augmented Generation},
  url       = {https://proceedings.neurips.cc/paper_files/paper/2024/file/20dcab0f14046a5c6b02b61da9f13229-Paper-Conference.pdf},
  volume    = {37},
  year      = {2024}
}

@misc{balseiro2025positionauctionsaigeneratedcontent,
  title         = {Position Auctions in AI-Generated Content},
  author        = {Santiago Balseiro and Kshipra Bhawalkar and Yuan Deng and Zhe Feng and Jieming Mao and Aranyak Mehta and Vahab Mirrokni and Renato Paes Leme and Di Wang and Song Zuo},
  year          = {2025},
  eprint        = {2506.03309},
  archivePrefix = {arXiv},
  primaryClass  = {cs.GT},
  url           = {https://arxiv.org/abs/2506.03309}
}

@misc{han2026mechanismdesignqualitypreservingllmadvertising,
  title         = {Mechanism Design for Quality-Preserving LLM Advertising},
  author        = {Jiale Han and Xiaowu Dai},
  year          = {2026},
  eprint        = {2605.10964},
  archivePrefix = {arXiv},
  primaryClass  = {cs.GT},
  url           = {https://arxiv.org/abs/2605.10964}
}

@misc{zhao2026llmauctiongenerativeauctionllmnative,
  title         = {LLM-Auction: Generative Auction towards LLM-Native Advertising},
  author        = {Chujie Zhao and Qun Hu and Shiping Song and Dagui Chen and Han Zhu and Jian Xu and Bo Zheng},
  year          = {2026},
  eprint        = {2512.10551},
  archivePrefix = {arXiv},
  primaryClass  = {cs.GT},
  url           = {https://arxiv.org/abs/2512.10551}
}

@inproceedings{schmidt2024detecting,
  title     = {Detecting Generated Native Ads in Conversational Search},
  author    = {Sebastian Schmidt and Ines Zelch and Janek Bevendorff and Benno Stein and Matthias Hagen and Martin Potthast},
  booktitle = {Companion Proceedings of the ACM Web Conference 2024},
  pages     = {722--725},
  publisher = {{ACM}},
  year      = {2024},
  doi       = {10.1145/3589335.3651489},
  url       = {https://doi.org/10.1145/3589335.3651489}
}

@article{qwen3embedding,
  title   = {Qwen3 Embedding: Advancing Text Embedding and Reranking Through Foundation Models},
  author  = {Zhang, Yanzhao and Li, Mingxin and Long, Dingkun and Zhang, Xin and Lin, Huan and Yang, Baosong and Xie, Pengjun and Yang, An and Liu, Dayiheng and Lin, Junyang and Huang, Fei and Zhou, Jingren},
  journal = {arXiv preprint arXiv:2506.05176},
  year    = {2025},
  doi     = {10.48550/arXiv.2506.05176}
}

@misc{lv12_esci_msmarco_minilm_l12_v2,
  title        = {{esci-ms-marco-MiniLM-L-12-v2}},
  author       = {{lv12}},
  year         = {2024},
  publisher    = {Hugging Face},
  howpublished = {\url{https://huggingface.co/lv12/esci-ms-marco-MiniLM-L-12-v2}},
  note         = {Cross-encoder fine-tuned on the Amazon ESCI dataset; based on cross-encoder/ms-marco-MiniLM-L12-v2}
}

@article{yang2025qwen3,
  title={Qwen3 technical report},
  author={Yang, An and Li, Anfeng and Yang, Baosong and Zhang, Beichen and Hui, Binyuan and Zheng, Bo and Yu, Bowen and Gao, Chang and Huang, Chengen and Lv, Chenxu and others},
  journal={arXiv preprint arXiv:2505.09388},
  year={2025}
}

@article{hu2021lora,
  title={Lora: Low-rank adaptation of large language models},
  author={Hu, Edward J and Shen, Yelong and Wallis, Phillip and Allen-Zhu, Zeyuan and Li, Yuanzhi and Wang, Shean and Wang, Lu and Chen, Weizhu},
  journal={arXiv preprint arXiv:2106.09685},
  year={2021}
}

@article{bradley1952rank,
  title={Rank analysis of incomplete block designs: I. the method of paired comparisons},
  author={Bradley, Ralph Allan and Terry, Milton E},
  journal={Biometrika},
  volume={39},
  number={3/4},
  pages={324--345},
  year={1952},
  publisher={JSTOR}
}

@article{ouyang2022training,
  title={Training language models to follow instructions with human feedback},
  author={Ouyang, Long and Wu, Jeffrey and Jiang, Xu and Almeida, Diogo and Wainwright, Carroll and Mishkin, Pamela and Zhang, Chong and Agarwal, Sandhini and Slama, Katarina and Ray, Alex and others},
  journal={Advances in neural information processing systems},
  volume={35},
  pages={27730--27744},
  year={2022}
}

@inproceedings{he2021unified,
  title={A unified solution to constrained bidding in online display advertising},
  author={He, Yue and Chen, Xiujun and Wu, Di and Pan, Junwei and Tan, Qing and Yu, Chuan and Xu, Jian and Zhu, Xiaoqiang},
  booktitle={Proceedings of the 27th ACM SIGKDD Conference on Knowledge Discovery \& Data Mining},
  pages={2993--3001},
  year={2021}
}

@inproceedings{aggarwal2019autobidding,
  title={Autobidding with constraints},
  author={Aggarwal, Gagan and Badanidiyuru, Ashwinkumar and Mehta, Aranyak},
  booktitle={International Conference on Web and Internet Economics},
  pages={17--30},
  year={2019},
  organization={Springer}
}

@book{mcluhan1964understanding,
  title={Understanding media},
  author={McLuhan, Marshall},
  volume={2002},
  year={1964},
  publisher={Routledge London}
}


%% file: sample-base.bib
@String{Computing = "Computing" }

@String{Computer = "{IEEE} Computer" }

@String{Springer = "Springer-Verlag" }

@ArtifactSoftware{R,
    title = {R: A Language and Environment for Statistical Computing},
    author = {{R Core Team}},
    organization = {R Foundation for Statistical Computing},
    address = {Vienna, Austria},
    year = {2019},
    url = {https://www.R-project.org/},
}
